\renewcommand{\emptyset}{\varnothing}
\def\range#1{\in \left[ #1 \right]  }
\def\R{{\mathbb R}}
\def\N{{\mathbb N}}
\newtheorem{theorem}{Theorem}
\newtheorem{lemma}[theorem]{Lemma}
\theoremstyle{definition}
\title{Colored Ray Configurations}
\author{
Ruy Fabila-Monroy\thanks{Dept. de Matem\'{a}ticas. Centro de Investigaci\'{o}n y de Estudios Avanzados del Instituto Polit\'{e}cnico Nacional, M\'{e}xico D.F., M\'{e}xico, {\tt ruyfabila@math.cinvestav.edu.mx}} \and
Alfredo Garc\'{\i}a\thanks{Dept. de M\'{e}todos Estad\'{\i}sticos, IUMA, Universidad de Zaragoza, Spain, {\tt \{olaverri, jtejel\}@unizar.es } }  \and
Ferran Hurtado\thanks{Dept. de Matem\`{a}tica Aplicada II, Universitat Polit\`{e}cnica de Catalunya, Barcelona, Spain.} \and
Rafel Jaume\thanks{Palma, Spain, {\tt rafel.jd@gmail.com}} \and 
Pablo P\'{e}rez-Lantero\thanks{Departamento de Matem\'atica y Ciencia de la Computaci\'on, Universidad de Santiago, Santiago, Chile, {\tt pablo.perez.l@usach.cl}. Partially supported by projects CONICYT FONDECYT/Regular 1160543 (Chile), 
and Millennium Nucleus Information and Coordination in Networks ICM/FIC RC130003 (Chile).} \and
Maria Saumell\thanks{Department of Mathematics and European Centre of Excellence NTIS (New Technologies for the Information Society), University of West Bohemia, Pilsen, Czech Republic, {\tt saumell@kma.zcu.cz}} \and
Rodrigo I. Silveira\thanks{Dept.\ Matem\`atica Aplicada II, Universitat Polit\`ecnica de Catalunya, \texttt{rodrigo.silveira@upc.edu}} \and
Javier Tejel\footnotemark[2]  \and
Jorge Urrutia\thanks{Instituto de Matem\'{a}ticas, Universidad Nacional Aut\'{o}noma de M\'{e}xico, M\'{e}xico D.F., M\'{e}xico, {\tt urrutia@matem.unam.mx}}
}
\begin{document}
\thispagestyle{empty}
\maketitle

%\linenumbers

\begin{abstract}
We study the cyclic color sequences induced at infinity by colored rays with apices being a given balanced finite
bichromatic point set. 
We first study the case in which the rays are required to be pairwise disjoint. 
We derive a lower bound on the number of color sequences that can be realized from any such fixed point set and examine color sequences that can be realized regardless of the point set, exhibiting negative examples as well.
We also provide a tight upper bound on the number of configurations that can be realized from a point set, and point sets for which there are asymptotically less configurations than that number.  
In addition, we provide algorithms to decide whether a color sequence is realizable from a given point set in a line or in general position. 
We address afterwards the variant of the problem where the rays are allowed to intersect. 
We prove that for some configurations and point sets, the number of ray crossings must be $\Theta(n^2)$ and study then configurations that can be realized by rays that pairwise cross. 
We show that there are point sets for which the number of configurations that can be realized by pairwise-crossing rays is asymptotically smaller than the number of configurations realizable by pairwise-disjoint rays. 
We provide also point sets from which any configuration can be realized by pairwise-crossing rays and show that there is no configuration that can be realized by pairwise-crossing rays from every point set. 
\end{abstract}

\section*{In memoriam: Ferran Hurtado (1951-2014)}

This paper is dedicated to the memory of our dear friend and colleague Ferran Hurtado that left us unexpectedly in 2014.

\section{Introduction} 
\label{sec:intro}

Let $S=\{p_1, \ldots , p_n\}$ be a set of points in the plane in general position. A set $H=\{h_1, \ldots , h_n\}$ such that the apex of $h_i$ is $p_i$, is called a set of \emph{rays from $S$}, $i=1, \ldots , n$. The elements of $H$ induce at infinity a cyclic permutation defined by the indices of the rays. 
In this paper we will also deal with colored point sets, in this case, a ray with apex $p_i$, inherits the color assigned to $p_i$.
Requiring that the rays in $H$ are pairwise-disjoint, how many different permutations can always be obtained disregarding the geometry of $S$? 
Is there any upper bound for their number for all sets of $n$ points? What happens in some particular configurations, for example when $S$ is in convex position? These problems---and several related questions---were introduced by Hurtado et al.~\cite{GHTU07}.

A clear motivation for the research in~\cite{GHTU07} was the extensive investigation on counting non-crossing geometric graphs of several families, such as spanning cycles, perfect matchings, triangulations and many more, and on estimating how large these numbers can get \cite{AHHKV07,DSST13,HSSTW11,SS09,SSW13,SW06}.
On the other hand, arrangements of rays have appeared in graph representation: \emph{Ray Intersection Graphs} are those in which there is a node for every ray in a given set, two of which are adjacent if they intersect~\cite{CCL13,FMM13,MTU10}.
Finally, on the applied side, it is worth mentioning recent work on sensor networks in the plane in which each sensor coverage region is an arbitrary ray \cite{KYZ13}. The rays act as barriers for detecting the movement between regions in the arrangement.

The work in~\cite{GHTU07} studies, among other variants, the number $\sigma(S)$ of different cyclic permutations of $\{1,2,\ldots,n\}$ that can be induced by sets of non-crossing rays from a set $S$ of $n$ labeled points. 
They show that \[  \sigma^{\min}(n)= \Omega^* (2^n) \cap O^* (3.516^n) \text{ and that } \sigma^{\max}(S) = \Theta ^* (4^n ),\footnote{Throughout the paper, the $O^*()$ notation omits subexponential factors.}\]
where $\sigma^{\min}$ and $\sigma^{\max}$ are the minimum and the maximum, respectively, of $\sigma(S)$ taken over all labeled sets $S$ of $n$ points in the plane. 

In this paper we consider a natural variation on the problem introduced in~\cite{GHTU07}.
The point set consists now of red and blue points, and the ray we shoot from a point inherits its color. 
The rays are first required to be pairwise disjoint. 
We investigate the bichromatic circular sequences that the colored rays induce at infinity: 
We study how many different color patterns can always be obtained and  how many color alternations, depending
on  the generality of the position of the points (Section~\ref{subsec:lower-bound}); we also investigate whether there are
color patterns that cannot be realized for some particular point set (\emph{forbidden}), or that can be attained by every point set (\emph{universal}) (see Section~\ref{subsec:realizations}). 
We provide  as well decision algorithms for some particular cases (Section~\ref{subsec:deciding}).
Section~\ref{sec:nondisjoint} is concerned with sets of rays that are not necessarily disjoint, for which 
feasibility questions are not interesting. 
We describe point sets of size $n$ from which any set of rays realizing a certain configuration must produce $\Theta(n^2)$ crossings. 
We study then a variant of the problem where we require each pair of rays to cross. 
We prove that the number of configurations realizable in this scenario can be asymptotically smaller than the number of configurations realizable by pairwise-disjoint rays. 
We  also show that for this variant there exist universal point sets, yet there is no universal configuration.

\section{Notation and definitions}

Henceforth, $\N$ will denote the positive integers. 
Given $k \in \N$, we denote by $[k]$ the set of integers~$\{1,\ldots, k\}$. 
Let $S=R\cup B$ be a finite bichromatic point set,
where $R$ is the set of \emph{red} points, and $B$ the set of \emph{blue} points of $S$.
We require $S$ to be {\em balanced} ($|R|=|B|$), which is the variant that has received most attention in the family of problems on \emph{red-blue point sets}~\cite{KK03}.

Given a set $H$ of rays from $S$, let $C(S,H)$ denote the circular sequence of length $|S|$ in the alphabet $\{r, b\}$ induced by the rays at infinity, taken in clockwise order.
Equivalently, we can take any circle large enough to enclose~$S$, and think of $C(S,H)$ as the sequence of colors of the intersection points of the rays with the circle, in clockwise order along the boundary.

Given $n\in\N$, a {\em configuration} is a circular sequence of $2n$ elements in the alphabet $\{r, b\}$ consisting of $n$ red elements and $n$ blue elements. 
Each element of a configuration is also called a \emph{position}. 
We assume hereafter that any configuration $C$ starts with a red element and ends with a blue one. 
Notice that $C$ can be partitioned into $2k$ monochromatic \emph{blocks}, each with $r_1, b_1, \ldots , r_k, b_k$ elements respectively.
 
We say that $k$ is the {\em alternation number} of~$C$.
Hence, $C$ can be identified with the tuple $(r_1,b_1,r_2,b_2,\ldots,r_k,b_k)$,
where $r_i$ and $b_i$ are the number of elements in the red and blue blocks respectively, for $i\range{k}$.

Let $\Gamma(n)$ denote the number of configurations;
this is equivalent to the number of \emph{binary balanced necklaces} of length $2n$. 
A binary balanced necklace is an equivalence class of $2n$-character strings on the alphabet $\{r,b\}$ with the same number of occurrences of $r$ and $b$, where two strings are equivalent if one can be obtained from the other by a cyclic rotation.
These objects where studied in more generality already in~\cite{M1872}.
Binary necklaces are counted in~\cite[Example~37.4]{vanLint} as an application of P\'{o}lya enumeration theorem or Burnside's lemma. 
This formulation, for the balanced case, yields
\[ \frac{1}{2n} {2n\choose n} \le \Gamma(n)= \frac{1}{2n}   \sum_{d | n}  \varphi(d) {2n/d \choose n/d} \le {2n\choose n}, \] where $\varphi$ is Euler's totient function.
Consequently, we have $\Gamma(n) = \Theta ^* (4^n)$.

Given $S$ and a configuration $C$, we say that
$C$ is {\em feasible} from $S$ if there exists a set $H$ of pairwise-disjoint rays from $S$
such that $C=C(S,H)$.  
We also say in this situation, and when three rays in $H$ may intersect in one point, that $C$ is \emph{realized} by $H$ (from $S$). 
See Figure~\ref{fig:ilustracion1} for an example.
We say that a configuration is \emph{universal} if it is feasible from any point set of the corresponding size. 
We say that a configuration is \emph{forbidden} for a point set if it cannot be realized from it.

\begin{figure}[!htbp]
\centering
\includegraphics[page=12]{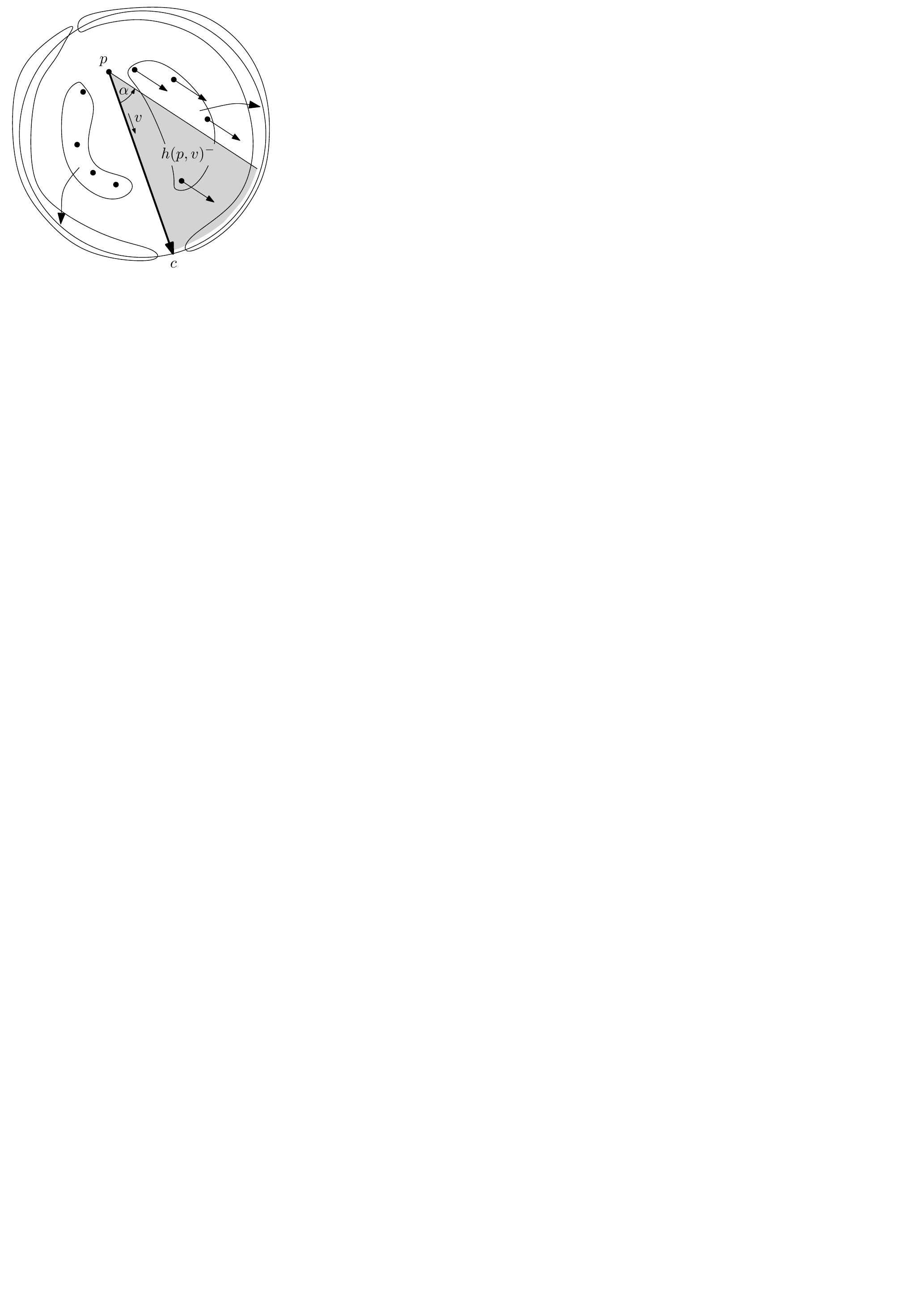}
\caption{A point set and a realization of the configurations $rbrb$ (left) and $rrbb$ (right).
} \label{fig:ilustracion1}
\end{figure}

Given a directed line (or a ray) $\ell$, let $\ell^+$ and $\ell^-$ denote the sets of
points to the right and to the left of $\ell$, respectively. 
Given a point $p$ and a vector $v$ of the plane,
let $h(p,v)$ denote the ray $\{p+t\cdot v~|~t \in \R, t \ge 0\}$ with apex~$p$.
Let~$H'$ be a set of rays such that for every pair $h_1,h_2\in H'$
the intersection $h_1\cap h_2$ is either empty, one of the apices, or contains an infinite number of points.
In this case we say that~$H'$ is a set of {\em non-crossing} rays.  

We say that a point set $S$ is in \emph{strong general position}, if it is in general position, and no different pairs of points define parallel lines.

\section{Disjoint rays}

In this section we study configurations that can be realized by sets of pairwise-disjoint rays.
First, we give lower bounds on the number of color patterns and color alternations that can always be obtained, depending
on the generality of the position of the points.
Second, we investigate whether there are color patterns that are forbidden or universal, and exhibit several positive and negative examples. 
Finally, we provide algorithms to decide whether a configuration is feasible from a given point set in a line or in general position. 

Unless stated otherwise, $S$ is a balanced bichromatic point set of total size $2n$.
Let $\gamma(S)$ denote the number of different feasible configurations $C(S,H)$
over all the sets $H$ of rays from~$S$. 
Let $\gamma^{\min}_{\rm{sgp}}(n)$ and $\gamma^{\max}_{\rm{sgp}}(n)$ be the minimum
and the maximum of $\gamma(S)$, respectively, taken over all balanced bichromatic
sets~$S$ of $2n$ points in the plane in strong general position. 
The notations $\gamma^{\min}_{\rm{gp}}(n)$ and $\gamma^{\max}_{\rm{gp}}(n)$ correspond \emph{mutatis mutandis} to the case in which only general position is required.

\subsection{Bounds on $\gamma(S)$ and on the alternation number} 
\label{subsec:lower-bound}

In this subsection, we provide lower bounds on $\gamma^{\min}_{\rm{sgp}}(n), \gamma^{\min}_{\rm{gp}}(n)$ and prove that $\gamma^{\max}_{\rm{sgp}}(n)=\Gamma(n)$. 
In addition, we give a tight lower bound on the maximum alternation number that can be attained from any point set in strong general position, and an upper bound for $\gamma^{\min}_{\rm{gp}}(n)$.

We first prove a lower bound on the number of feasible configurations, and a tight lower bound for the number of alternations attainable from any point set in strong general position.

\begin{theorem}\label{theo:lower-bound-S}
For every bichromatic point set $S=R\cup B$ in strong general position, it holds that $\gamma(S)=\Omega(2^{\sqrt{n}}/n)$. 
Hence, $\gamma^{\min}_{\rm{sgp}}(n)=\Omega(2^{\sqrt{n}}/n)$.
\end{theorem}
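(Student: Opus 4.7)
The plan is to fix a well-chosen direction $v$ and construct exponentially many non-crossing ray configurations by sending each point either in direction $+v$ or $-v$, and then count the distinct cyclic color sequences that result.

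\textbf{Step 1 (Choosing $v$).} Because $S$ is in strong general position, only finitely many directions $v$ are \emph{bad} in the sense that two points of $S$ lie on a line parallel to $v$, or two points project to the same coordinate on $v^\perp$. Among all good directions, we further select one for which the color sequence $c_1 c_2 \cdots c_{2n}$ obtained by listing the points of $S$ in order of increasing $v^\perp$-coordinate has $k = \Omega(\sqrt{n})$ maximal monochromatic blocks $B_1, B_2, \ldots, B_k$. Existence of such a direction should follow via an averaging or Erd\H{o}s--Szekeres-style argument: as $v$ rotates, pairs of points swap in the sorted order exactly when $v$ aligns with their connecting line, and the strong general position forbids the bichromatic pairs of $S$ from being aligned in too rigid a way for \emph{all} projection directions to cluster the colors into $o(\sqrt n)$ blocks.

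\textbf{Step 2 (Building $H_T$).} For every subset $T \subseteq [2n]$, define the ray set $H_T$ in which the $i$-th point (in the sorted order) shoots in direction $+v$ if $i \in T$ and in direction $-v$ otherwise. All rays in $H_T$ are pairwise disjoint: parallel rays in direction $+v$ (or $-v$) lie on distinct parallel lines by our choice of $v$, while a $+v$-ray and a $-v$-ray from different points go in opposite $v$-directions from points with distinct $v^\perp$-coordinates and therefore diverge. Reading clockwise at infinity, the configuration $C(S, H_T)$ is, up to cyclic rotation, the concatenation of the colors of $T$ in decreasing $v^\perp$-order followed by the colors of $[2n] \setminus T$ in increasing $v^\perp$-order.

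\textbf{Step 3 (Counting).} The cyclic sequence $C(S, H_T)$ depends on $T$ only through the block-intersection vector $(|T \cap B_j|)_{j=1}^k$, because within each monochromatic block the choice of which points go $+v$ versus $-v$ only affects how many copies of a single color appear in each portion of the output. The number of distinct such vectors is $\prod_{j=1}^k (|B_j|+1) \ge 2^k$, since $|B_j| \ge 1$ for every $j$. Moreover, two different vectors can induce cyclically equivalent configurations only by way of the at most $2n$ cyclic rotations of a length-$2n$ sequence, so the number of distinct realized configurations is at least $2^k/(2n) = \Omega(2^{\sqrt n}/n)$, which yields the claimed bound on $\gamma(S)$ and thus on $\gamma^{\min}_{\rm sgp}(n)$.

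The main technical hurdle is Step 1: guaranteeing a direction $v$ with $\Omega(\sqrt n)$ maximal monochromatic blocks for \emph{every} strong general position set $S$. For most arrangements this is immediate, but when the colors are nearly separable by a line (for instance, linearly separable), one must rotate $v$ carefully along the separating direction so that the two color classes interleave in the projection. Strong general position provides the flexibility needed for such a ``good'' direction to exist; a careful geometric or probabilistic analysis of the block count as $v$ varies should suffice.
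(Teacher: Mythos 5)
Your proposal has two genuine gaps, either of which is fatal on its own. First, Step~1 is not just unproved but false: there are balanced point sets in strong general position for which \emph{every} projection direction yields only $O(1)$ monochromatic blocks. Take the red points at $(4^i,0)$ for $i\in[n]$ and the blue points at $(j/n,\,4^{2n})$ for $j\in[n]$, generically perturbed. For any unit vector $u=(u_1,u_2)$, the blue projections span an interval of length less than $|u_1|$, while any two consecutive red projections are at distance at least $12|u_1|$; hence the blue interval contains at most one red point, and the projected order has at most five maximal blocks (and for directions where the two projected intervals are disjoint, only two). So no choice of $v$ can produce $\Omega(\sqrt{n})$ blocks here, and your construction yields only polynomially many configurations from this set. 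This is precisely the obstruction the paper's proof is designed to circumvent: instead of hoping that a single projection interleaves the colors, it uses a ham-sandwich line, a non-crossing red--blue matching whose segments all cross that line, and the Erd\H{o}s--Szekeres theorem on the segment angles to manufacture $\Omega(\sqrt{n})$ red--blue pairs pointing in $\Omega(\sqrt{n})$ \emph{distinct} directions; the $2^k$ choices are then which endpoint of each pair appears first at infinity near its own direction.

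Second, even granting Step~1, the counting in Step~3 is incorrect. The claim that distinct block-intersection vectors collide only through the $2n$ cyclic rotations presupposes that distinct vectors give distinct linear sequences, which fails because the boundary between the ``up'' part and the ``down'' part sits at a variable position. Already for $k=2$ with blocks $r^{s_1}b^{s_2}$, every vector $(t_1,t_2)$ produces the cyclic word $b^{t_2}r^{t_1}\cdot r^{s_1-t_1}b^{s_2-t_2}$, which is always the single configuration $(s_1,s_2)$; in general $t_1$ and $t_k$ never affect the outcome, and further collapses occur. A concrete numerical contradiction: for $n=2$ with projected order $rbrb$ ($k=4$, all blocks of size $1$), your bound asserts at least $2^4/4=4$ distinct configurations, but only $\Gamma(2)=2$ configurations of length $4$ exist. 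The paper's argument survives the division by $2n$ exactly because its $k$ binary choices modify $k$ \emph{fixed} positions of a fixed-length word read from a fixed starting direction, so distinct choice vectors provably give distinct linear sequences before passing to cyclic equivalence. To repair your approach you would need both a replacement for Step~1 (which the counterexample shows cannot be a single pair of antipodal directions) and an injectivity argument at the level of linear sequences.
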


\begin{proof}
By the Ham-Sandwich Cut Theorem~\cite{GOR04}, there exists a (directed) line $\ell$
such that $|R^+|=|B^-|=\lfloor n/2\rfloor$, where $R^+=R\cap \ell^+$ and $B^-=B\cap \ell^-$.
Let $m=\lfloor n/2\rfloor$. We can assume, via a virtual rotation of the coordinate system, that $\ell$ is the positively oriented $x$-axis.
Since $|R^+|=|B^-|=m$, there exists a non-crossing geometric
perfect matching on $R^+\cup B^-$, that is,
$m$ pairwise-disjoint straight-line segments $e_1,e_2,\ldots,e_m$
such that~$e_i$ connects an element of $R^+$ with an element of $B^-$ and also intersects
$\ell$, for $i \range{m}$.

Assume without loss of generality that the points $e_1\cap\ell,e_2\cap\ell,\ldots,e_m\cap\ell$ are sorted from left to right.
Using the Erd\H{o}s-Szekeres Theorem on sequences~\cite{ES35}, there exist $k=\Omega(\sqrt{m})=\Omega(\sqrt{n})$
indices $1\le i_1< i_2<\ldots<i_k\le m$
such that the clockwise angles from the segments $e_{i_1},e_{i_2},\ldots,e_{i_k}$ to $\ell$ 
are either monotonically increasing or monotonically decreasing.
Assume without loss of generality that the angles are monotonically decreasing and observe that, because of the assumption of strong general position, they decrease \emph{strictly}.
Let $p_j\in B^-$ and $q_j\in R^+$ denote the endpoints of $e_{i_j}$, for $j\range{k}$.
Let $H_p=\{h(p_j,p_j-q_j)~|~j\range{k} \}$ and
$H_q=\{h(q_j,p_j-q_j)~|~j\range{k}\}$, and observe that the elements
of $H_p$ (resp.\ $H_q$) are pairwise disjoint.
Let $H_0$ be a set of rays from $S \setminus (\{p_j \mid j \range{k} \} \cup \{q_j \mid j \range{k} \})$ such that
every element of $H_0$ does not intersect, and is not parallel to, any element of $H_p\cup H_q$;
it is clear that such a set of rays $H_0$ always exists,
and that $H_p\cup H_q\cup H_0$ is a set of non-crossing rays. 
Furthermore,
we can perturb the elements of $H_p\cup H_q$ in $2^k$ different forms to obtain a set $H$ of pairwise-disjoint rays from $S$.
The perturbation is as follows: For a small enough angle $\varepsilon>0$ and $j\range{k}$, rotate both $h(p_j,p_j-q_j)$ and $h(q_j,p_j-q_j)$ with angle $\varepsilon$ around their apices, either clockwise or counterclockwise. 
Then, among all sets $H$,
the configuration $C(S,H)$ is different for at least $2^k /2n=\Omega(2^{\sqrt{n}}/n)$
of them. 
The claim follows.
\end{proof}

We now look at the alternation number.

\begin{theorem}\label{theo:sqrt-alternation}
For every bichromatic point set $S=R\cup B$ in strong general position, 
there exists a set $H$ of pairwise-disjoint rays from $S$ such that
the alternation number of $C(S,H)$ is $\Omega(\sqrt{n})$. 
This bound is tight.
\end{theorem}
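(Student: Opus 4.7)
For the lower bound I plan to reuse the construction from the proof of Theorem~\ref{theo:lower-bound-S}. That proof produced, via a Ham-Sandwich cut $\ell$ together with an Erd\H{o}s-Szekeres extraction, $k=\Omega(\sqrt{n})$ pairs $(p_j,q_j)$ with $p_j\in B^-$ and $q_j\in R^+$ such that the angles $\theta_j$ of the vectors $p_j-q_j$ are strictly monotone in $j$. After the small perturbation that turns $H_p\cup H_q$ into pairwise-disjoint rays, their $2k$ exit directions at infinity lie very close to the $\theta_j$'s, in $k$ pairs. I would fix the signs of the perturbation uniformly---say, rotate the ray from $q_j$ slightly counterclockwise and the ray from $p_j$ slightly clockwise about their apices---so that within each $j$-pair the red ray exits immediately counterclockwise of the blue one. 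Strict monotonicity of the $\theta_j$'s then forces the $2k$ rays to interleave at infinity as $R_1 B_1 R_2 B_2 \cdots R_k B_k$ along one arc. The remaining $2n-2k$ apices are handled by choosing $H_0$ with all rays contained in a single narrow cone disjoint from all the $\theta_j$'s and from their perturbations; such a cone exists because every segment $e_{i_j}$ crosses $\ell$, so the $\theta_j$'s all lie in one open half-plane. At infinity $H_0$ then occupies one contiguous arc, and the total number of color changes in $C(S,H_p\cup H_q\cup H_0)$ is at least $2k-O(1)$, giving alternation $\Omega(\sqrt{n})$.

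For the matching upper bound I would exhibit a balanced bichromatic point set $S$ of $2n$ points in strong general position from which no set of pairwise-disjoint rays achieves more than $O(\sqrt{n})$ alternations. A natural candidate is a tilted, nearly degenerate $\sqrt{n}\times\sqrt{n}$ grid: $\sqrt{n}$ almost-horizontal monochromatic rows of $\sqrt{n}$ points each, colors alternating row by row, with the horizontal row span very large compared to the vertical gap between consecutive rows, and slightly perturbed to enforce strong general position. The structural claim to be proved is that any color change in the cyclic sequence at infinity forces two consecutive rays whose apices belong to differently-colored rows, and that the non-crossing requirement forces the sequence of such ``transition directions'' to be a monotone refinement of the vertical stacking of rows, bounding the number of transitions by $O(\sqrt{n})$.

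The main obstacle is this upper bound. Whereas the lower bound essentially repackages the heavy lifting of Theorem~\ref{theo:lower-bound-S}, the upper bound needs a genuinely new global argument exploiting the non-crossing requirement to limit the number of color transitions at infinity, regardless of how inventively the rays are aimed. The key technical step will be an Erd\H{o}s-Szekeres-type extraction in the ``dual'' direction: from many color transitions one extracts a long monotone chain of transition directions that is incompatible with the stacked-row structure of $S$. Ruling out pathological configurations in which many same-colored rays exit at widely scattered directions---which locally look like they could generate further alternations---is where most of the technical work will go.
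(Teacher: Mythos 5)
Your lower bound is fine and coincides with the paper's: the paper simply observes that the rays constructed in the proof of Theorem~\ref{theo:lower-bound-S} already yield $\Omega(\sqrt{n})$ alternations, since the $k=\Omega(\sqrt{n})$ red--blue pairs exit at infinity in $k$ angularly separated pairs whose strict monotonicity (from strong general position) forces $\Omega(k)$ color changes in any supersequence. Your uniform choice of perturbation signs is a harmless refinement of that.

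The upper bound, however, is where the theorem has real content, and there you have a genuine gap on two counts. First, you do not give the argument at all: you name a candidate point set and defer the entire proof to a future ``Erd\H{o}s-Szekeres-type extraction in the dual direction.'' Second, and more seriously, the candidate itself fails. Take your $\sqrt{n}\times\sqrt{n}$ grid with rows alternating in color and shoot \emph{all} rays in one common direction $\theta$ tilted slightly off vertical. These rays are parallel and pairwise disjoint, and their cyclic order at infinity is the order of the apices projected onto $\theta^{\perp}$: columns appear one after another, and within each (near-)column the tie is broken by height, so each column contributes a block $rbrb\cdots$ of length $\sqrt{n}$ whose colors alternate. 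Concatenating the $\sqrt{n}$ columns gives $\Theta(n)$ alternations, not $O(\sqrt{n})$; your structural claim that color changes correspond to a ``monotone refinement of the vertical stacking of rows'' already breaks for a single pair of adjacent interleaved rows, which alone admit $\Theta(\sqrt{n})$ transitions. The paper avoids this by an \emph{asymmetric} construction: the blue points form $\sqrt{n}$ clusters of $\sqrt{n}$ points each on a horizontal line, each cluster of diameter about $n^{-3/2}$ (much smaller than the spacing $1/n$ of the $n$ red points, which are spread over a unit segment on a parallel line). Two short observations then give the bound directly: no two disjoint red rays can meet the same tiny blue cluster segment, and no two disjoint blue rays from different clusters can cross the red segment; hence any set of pairwise-disjoint rays yields only $O(\sqrt{n})$ alternations. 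Note that this separation of scales is exactly what defeats the parallel-ray attack above, so the asymmetry is not cosmetic; your symmetric grid cannot be repaired by perturbation alone.
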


\begin{proof}
Observe that the sets of rays from $S$ generated in the proof
of Theorem~\ref{theo:lower-bound-S} yield $\Omega(\sqrt{n})$ color alternations.
To prove that this bound is tight, let $n=k^2$ for some $k \in \N$, and
$R$ and $B$ be defined as follows. 
For $i\range{k}$, let $B_i=\{(2(i-1)+j/n^2,0) \mid j\range{k} \}$, $B=\bigcup_{i\range{k}} B_i$, and $R=\{(j/n,1)\mid j\range{n} \}$.
Let~$s_i$ be the shortest segment covering $B_i$, for $i\range{k}$, and
$s'$ the shortest segment covering $R$ (thus $|s_i| \approx n^{-3/2}$ and $|s'| \approx 1$).
Observe that no two pairwise disjoint rays from elements of $R$ can intersect the
same segment~$s_i$. 
Furthermore, no two pairwise disjoint rays $b_1 \in B_{i}$ and $b_2 \in B_{j}$ with $i,j \in [k]$, $i \neq j$, can intersect $s'$. 
Therefore, any set $H$ of pairwise-disjoint rays
from $S=R\cup B$ is such that $C(S,H)$ has $O(k)=O(\sqrt{n})$ alternations.
Finally, observe that some infinitesimal perturbation of the points moves them to strong general position, and still yields
the same upper bound construction.
\end{proof}

Without the assumption of strong general position many of the segments in the matching used in the proof of Theorem~\ref{theo:lower-bound-S}, or even all of them, might be parallel, which disables the construction in that proof. 
It is easy to see that given a set of $n$ red points above the $x$-axis and a set of $n$ blue points below the $x$-axis, whose union is in general position, one can always obtain a bichromatic matching of size at least $\sqrt{n}$, such that the angles defined by the matched segments and the $x$-axis are different. 
This combines with the technique of Theorem~\ref{theo:lower-bound-S} to yield an
$\Omega(2^{n^{1/4}}/n)$
lower bound for the number of different configurations realizable from point sets in general position. We can do better with a related, yet different, approach.

\begin{theorem}\label{theo:lower-bound-S-gp}
For every bichromatic point set $S=R\cup B$ in general position, ${\gamma(S)=\Omega(2^{n^{1/3}}/n)}$. 
Hence, $\gamma^{\min}_{\rm{gp}}(n)=\Omega(2^{n^{1/3}}/n)$.
\end{theorem}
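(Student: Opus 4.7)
I plan to extend the Ham-Sandwich plus Erd\H{o}s-Szekeres strategy of Theorem~\ref{theo:lower-bound-S}, whose main obstruction in the general (non-strong) position case is that the segments of the bichromatic non-crossing matching may have coinciding slopes. Start as in Theorem~\ref{theo:lower-bound-S}: apply the Ham-Sandwich Cut Theorem to obtain a line $\ell$ (WLOG the $x$-axis) with $|R^+|=|B^-|=m=\lfloor n/2\rfloor$, and build a non-crossing bichromatic matching $M=\{e_1,\ldots,e_m\}$ between $R^+$ and $B^-$, sorted by the $x$-coordinates where its segments cross $\ell$.

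I would then split the argument according to the number $t$ of distinct slopes appearing in $M$. If $t\ge n^{2/3}$, pick one representative per slope class and apply the Erd\H{o}s-Szekeres Theorem to the resulting $t$ strictly distinct slope angles to extract $k\ge\sqrt{t}\ge n^{1/3}$ segments with strictly monotone and pairwise distinct slopes. On these $k$ segments, the perturbation argument of Theorem~\ref{theo:lower-bound-S} applies verbatim: each pair can be rotated clockwise or counterclockwise by a small $\varepsilon$, yielding $2^k$ ray families, of which at least $2^k/(2n)=\Omega(2^{n^{1/3}}/n)$ produce distinct cyclic color sequences.

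Otherwise $t<n^{2/3}$, so some slope class of $M$ contains $\ell^*\ge m/t>n^{1/3}/2$ parallel segments. In this regime my plan is to reapply the Ham-Sandwich strategy to the $2\ell^*$ apices of the parallel class along a cut line perpendicular to the common slope direction, producing a secondary non-crossing bichromatic matching whose segments are transversal to the class and therefore have slopes different from it. Interleaving this secondary perturbation with the cluster-order choices given by rotating the parallel pairs should yield $\Omega(2^{n^{1/3}}/n)$ configurations. Ideally this strategy is iterated (or applied recursively on smaller nested instances) until either the distinct-slope regime is reached or the size of the parallel subset becomes manageable.

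The principal difficulty lies in the parallel-segment case. A naive application of the perturbation trick to parallel pairs does not yield $2^{\ell^*}$ different configurations: the $2\ell^*$ perturbed rays cluster around two antipodal directions at infinity, and the local color pattern inside each cluster depends only on how many pairs were rotated clockwise versus counterclockwise, giving merely $O(\ell^*)$ distinct local sequences. To amplify these $O(\ell^*)$ choices to an exponential count, one must combine them with the secondary transversal matching so that both sources of variation contribute independently; ensuring this independence, and showing that the combined construction remains non-crossing, is the technical heart of the argument.
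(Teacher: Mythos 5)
Your first case (at least $n^{2/3}$ distinct slopes, hence a strictly monotone subsequence of length $n^{1/3}$ by Erd\H{o}s--Szekeres) is sound and is essentially what the paper does, except that the paper reaches the same dichotomy in one step via a generalized Erd\H{o}s--Szekeres theorem that directly yields either a strictly monotone or a constant subsequence of length $\Omega(m^{1/3})$. The genuine gap is in the parallel case, and you have in fact diagnosed it yourself: you correctly observe that rotating parallel matched pairs by $\pm\varepsilon$ only yields $O(\ell^*)$ distinct local patterns, but the repair you propose --- a secondary Ham-Sandwich matching transversal to the parallel class, combined with the rotation choices and iterated recursively --- is not carried out and does not obviously work. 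The secondary matching can itself consist entirely of parallel segments (general position does not forbid this), so the recursion does not terminate in the distinct-slope regime; moreover each recursive level would shrink the working set from $k$ to roughly $k^{1/3}$, which destroys the bound after two levels; and the ``independence'' of the two sources of variation, which you acknowledge as the technical heart, is precisely what is left unproven.

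The paper resolves the parallel case without any secondary matching. Take the common direction of the $k$ parallel segments, rotate the line through each point of $S$ in that direction by a single infinitesimal angle so that no line contains two points, and normalize so these lines are vertical. The two endpoints of each segment $e_{i_j}$ then occupy \emph{consecutive} positions in the horizontal order. Now shoot a vertical downward ray from every point not incident to the chosen segments, and for each $e_{i_j}$ independently choose whether the red endpoint shoots up and the blue one down, or vice versa. Reading the upward rays along the top arc of a large circle gives a linear sequence in which the $j$th chosen pair contributes exactly one symbol, $r$ or $b$, determined by that pair's choice; the $k$ bits are therefore recorded independently in the color sequence, giving $2^k$ distinct linear sequences and hence at least $2^k/(2n)=\Omega(2^{n^{1/3}}/n)$ distinct configurations. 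The key idea you are missing is that in the parallel case one should abandon the perturbation trick entirely and instead exploit the horizontal adjacency of the two endpoints of each matched segment, encoding each bit by which endpoint of the pair shoots upward.
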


\begin{proof}
We start as in the proof of Theorem~\ref{theo:lower-bound-S} and obtain a bichromatic non-crossing geometric
perfect matching of a set  $R^+$ of $m$ red points below the $x$-axis, and a set $ B^-$
of $m$ blue points above the $x$-axis, with $m=\Theta(n)$.
Now, using a generalized version of the Erd\H{o}s-Szekeres Theorem on sequences\footnote{Let $n>s \cdot r \cdot p$. Any sequence of $n$ numbers contains a strictly increasing subsequence with at least $s+1$ elements, a strictly decreasing subsequence with at least $r+1$ elements, or a constant subsequence of length greater than $p$. }~\cite{Juk11}, there exist $k=\Omega(m^{1/3})=\Omega(n^{1/3})$
indices $1\le i_1< i_2<\ldots<i_k\le m$
such that the clockwise angles from the segments $e_{i_1},e_{i_2},\ldots,e_{i_k}$ to the $x$-axis
are either monotonically \emph{strictly} increasing, or monotonically \emph{strictly} decreasing, or all equal.
Let $S_e$ denote the set of endpoints of $e_{i_j}$ for $j\range{k}$.

In the first two cases we apply the technique in the proof of Theorem~\ref{theo:lower-bound-S}. 
It remains to consider the case in which $e_{i_1},e_{i_2},\ldots,e_{i_k}$ are all parallel.
Let us start with a line $\ell_p$ through each $p\in S$, in the direction of the segments, and then rotate~$\ell_p$ around~$p$ an infinitesimal angle $\varepsilon$ for all $p \in S$, in such a way that none of them contains two points. 
Rotating the whole construction if necessary, assume that the new lines $\ell'_p$ are vertical. 
Observe that the lines corresponding to the endpoints of a segment $e_{i_j}$ are now different and consecutive in the horizontal order, for $j \in [k]$. 
Now, shoot vertically and downwards a ray from every point in $S\setminus S_e$. 
For each $e_{i_j}$ with $j \in [k]$, we can independently decide for its endpoints whether we shoot a red ray upwards and a blue ray downwards, or reversely. 
This yields $\Omega(2^{n^{1/3}}/n)$ different configurations.
\end{proof}

We continue by showing the existence of point sets from which every configuration is possible. 
We say that such a point set is \emph{universal (for pairwise-disjoint rays)}.

\begin{theorem}\label{theo:universal-point-set}
	For every $n \in \N$, there exists a bichromatic point set $S=R\cup B$ in strong general position such that every configuration is feasible. 
	Hence, $\gamma^{\max}_{\rm{sgp}}(n)=\Gamma(n)$.
\end{theorem}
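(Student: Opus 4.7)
\medskip

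\noindent\textbf{Proof plan.}
My plan is to give an explicit universal point set. I would take $S$ to consist of $2n$ points in convex position on a circle, in strong general position, with colors alternating red and blue around the hull, so that the cyclic color pattern of the points is $R,B,R,B,\ldots,R,B$. The motivation is that this alternation provides maximum local flexibility, since every pair of cyclically consecutive hull points has one red and one blue endpoint, and therefore any bichromatic adjacency that appears in a target configuration can be matched by some edge of the hull.

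To show that every configuration $C$ is feasible from this $S$, I would argue by induction on $n$. The base case $n=1$ is immediate. For the inductive step, I exploit two facts: $(i)$ any balanced $C$ contains at least one cyclically consecutive pair $(C_i,C_{i+1})$ with $C_i \neq C_{i+1}$, and $(ii)$ by the alternation on the hull, one can find a cyclically consecutive pair $(p_j,p_{j+1})$ of hull points whose colors match those of $(C_i,C_{i+1})$. I would then choose the $2n$ angular positions at infinity that realize $C$ so that the two angles corresponding to $C_i$ and $C_{i+1}$ lie inside a narrow angular window $W$ aligned with the outward normal of the edge $p_jp_{j+1}$, while the remaining $2n-2$ angles lie in the complementary arc. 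Two rays are then shot from $p_j$ and $p_{j+1}$ in the directions in $W$ prescribed by $C_i$ and $C_{i+1}$, so that they exit the convex hull tightly through the edge $p_jp_{j+1}$ into a thin exterior strip. The remaining $2n-2$ hull points still form an alternating bichromatic convex set (removing two consecutive alternating points preserves alternation), and the remaining $2n-2$ target angles lie in the complementary arc; I would invoke the inductive hypothesis on this smaller instance to realize the configuration obtained from $C$ by deleting positions $i$ and $i+1$.

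The main obstacle I expect is guaranteeing that no ray produced by the recursive call crosses either of the two peeled rays. The plan to overcome this is to make $W$ sufficiently narrow; then the two peeled rays are confined to a thin wedge adjacent to the exterior of the edge $p_jp_{j+1}$. Since every recursive ray emerges at infinity in the complementary arc (outside $W$) and its apex lies on the opposite side of the edge $p_jp_{j+1}$, a standard convexity/separation argument shows that no recursive ray can enter this wedge, so crossings with the peeled rays are avoided. Small perturbations of the chosen directions may be needed at each step to preserve strong general position of the ray arrangement (no two rays parallel, no three rays concurrent), but such perturbations do not alter the cyclic sequence at infinity, so the induction carries through cleanly and yields $\gamma(S)=\Gamma(n)$, hence $\gamma^{\max}_{\rm sgp}(n)=\Gamma(n)$.
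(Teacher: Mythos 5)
Your proposed point set cannot work: the paper itself proves (Theorem~\ref{prop:unbalanced}) that the vertices of a regular $2n$-gon with alternating colors --- precisely the ``convex position on a circle, colors alternating'' set you choose --- forbid every configuration $(r_1,b_1,r_2,b_2)$ with $n-2>\max\{r_1,r_2\}>\max\{b_1,b_2\}+1$ (e.g.\ $(7,5,3,5)$ for $n=10$). The intuition that alternation around the hull gives ``maximum local flexibility'' is exactly backwards: alternation of the apices makes it hard to produce long monochromatic blocks at infinity, because any long run of same-colored rays must be threaded between the interleaved rays of the other color, and the counting in the proof of Theorem~\ref{prop:unbalanced} shows this forces the adjacent blocks of the other color to be nearly as long. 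Since the starting set is provably non-universal, the induction must break, and it does so at the step you flag as the ``main obstacle'': the inductive hypothesis only guarantees \emph{some} realization of the reduced configuration $C'$ from the remaining $2n-2$ points; it does not let you prescribe that all $2n-2$ directions avoid the window $W$, nor that the resulting rays avoid the two peeled rays. The separation argument also fails on its own terms: the points $p_{j+2}$ and $p_{j-1}$ sit right next to the peeled edge, and a ray from one of them whose direction is just outside $W$ crosses the supporting line of $p_jp_{j+1}$ immediately and can intersect a peeled ray, no matter how narrow $W$ is. (A further symptom: the two peeled positions $C_i,C_{i+1}$ need not be adjacent in the cyclic order of the $2n$ chosen directions unless you also control where the recursive directions land relative to $W$, which again the inductive hypothesis does not give you.)

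The paper's construction is quite different and worth contrasting: it places the $n$ red points spread out on the line $y=1$ and the $n$ blue points clustered in a tiny interval on the line $y=0$. Because the blue points are nearly coincident, their rays behave like a fan from a single point and can be partitioned into $k$ bundles of nearly parallel rays that slice the red row into blocks of any prescribed sizes $r_1,\dots,r_k$; the red points then shoot rays into the wedges between consecutive blue rays, aimed at the intersection points of the supporting lines below the $x$-axis, so disjointness is automatic. The lesson is that universality comes from a highly \emph{asymmetric, clustered} set, not from a symmetric alternating one. If you want to salvage an inductive scheme, you would at minimum need to (a) pick a point set not already excluded by Theorem~\ref{prop:unbalanced}, and (b) strengthen the inductive hypothesis to ``every configuration is realizable with all directions confined to any prescribed arc,'' which is a much stronger and geometrically delicate statement.
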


\begin{proof}
	We set $R=\bigl\{(1,1),(2,1),\ldots,(n,1)\bigr\}$ and $B=\bigl\{(1/n,0),(2/n,0),$ $\ldots,(n/n,0)\bigr\}$, and label the points $(1,i)=p_i$ and $(j/n,0)=q_j$ for all $i,j \in [n]$.
	Let $C$ be any configuration $(r_1,b_1,r_2,b_2,\ldots,r_k,b_k)$ with $k \in \N$.
	We show that we can draw a set $H_B$ of rays from $B$ such that the elements of $H_B$ are grouped into $k$ groups, the $i$th group consists of $b_i$ parallel rays, and the groups split $R$ into $k$ blocks such that the $j$th block (from left to right) consists of $r_j$ points.
	Namely, let $H_B=\bigl\{h\bigl(q_j,(t_j,1)\bigr)~|~j\range{n}\bigr\}$,
	where $t_j=r_1$ if $j \in [b_1]$, and $t_j=\sum_{k=1}^{s} r_k$ if $j\in [n]\setminus [b_1]$, and $s$ is the largest index such that $\sum_{k=1}^{s-1} b_k < j$.
	
	We give a precise and detailed construction because it will be used again in the proof of Theorem~\ref{fcuniv}.
	Let~$H_B'$ be an infinitesimal perturbation of $H_B$ such that no pair of rays are parallel. 
	Let $a_j$ be the intersection point of the lines supporting the rays from $H_B'$ shot from $q_{j-1}$ and $q_{j}$ for all $j\in \{2,\ldots,n\}$, and let $a_1$ be the intersection of the $y$-axis with the line supporting the ray from $H_B'$ shot from $q_1$ (see Figure~\ref{fig:crosuniv}).
	
	Let $H_R$ be the set consisting, for each $j\in \{2,\ldots,n\}$, of the rays $h(p_i,p_i-a_j)$ for all $i \in [n]$ such that $p_i$ is contained in the wedge defined by lines supporting the rays shot from $q_{j-1}$ and $q_{j}$ and containing a ray with direction $(t_j+t_{j-1})/2$, and the rays $h(p_i,p_i-a_1)$ for all $i \in [n]$ such that $p_i$ is contained in the wedge defined by $OY$ and the line supporting ray shot from $q_1$, and containing a ray with direction $((0,1)+(p_i-a_1))/2$. 
	
	\begin{figure}[!htb]
		\centering
		\includegraphics[page=2]{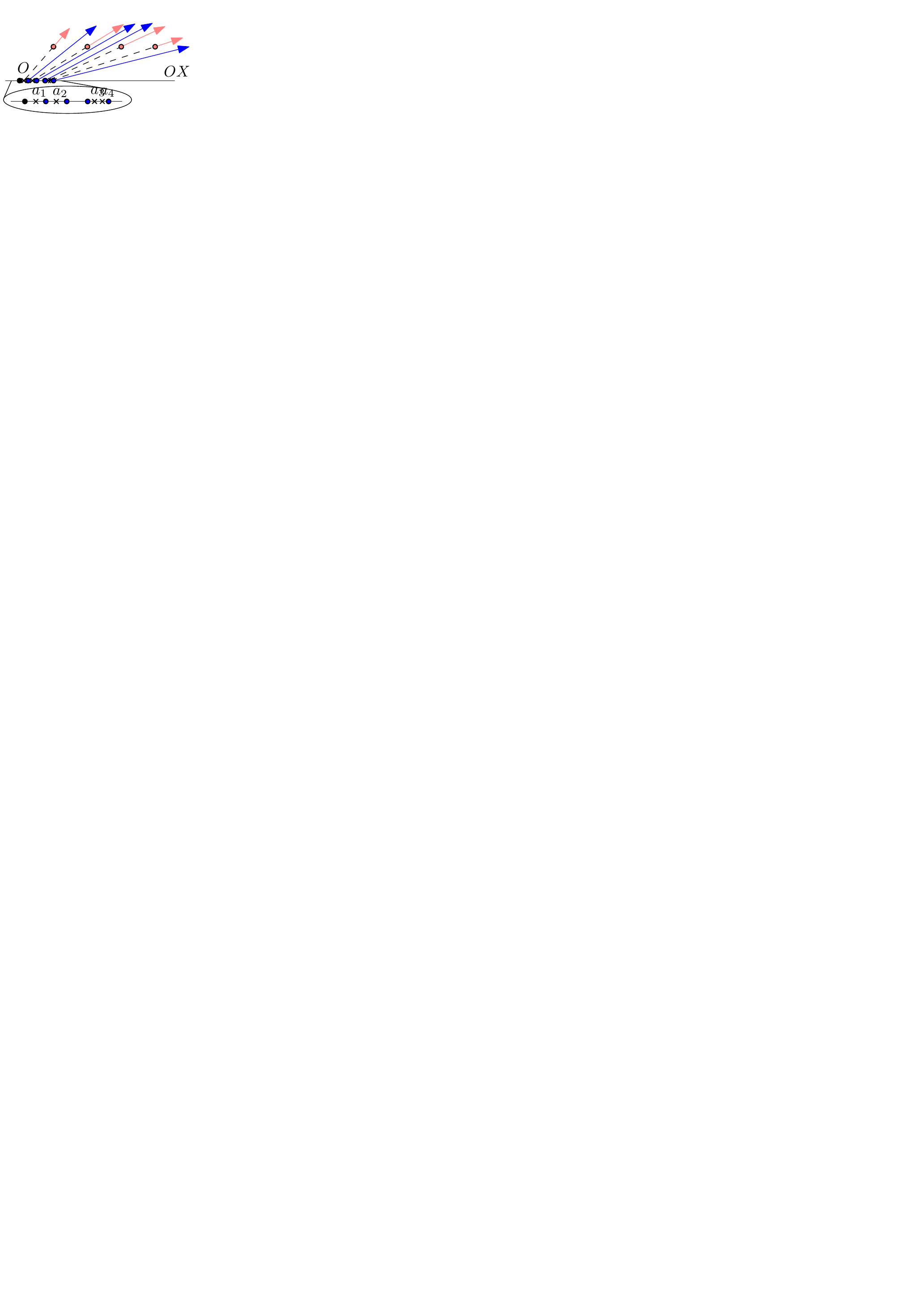}
		\caption{A universal point set for full-crossing rays. Note that $a_3$ falls out of the picture.}
		\label{fig:crosuniv}
	\end{figure}
	
	Clearly, $H=H_B' \cup H_R$ is a set of pairwise-disjoint rays that realize $C$. 
	The constructed point set can be perturbed to lie in strong general position in a way that the proof carries out. 
\end{proof}

In contrast with the previous theorem, the lower bound for the number of feasible configurations is asymptotically smaller than $\Gamma(n)$.

\begin{theorem}
 There are point sets for which the number of feasible configurations is asymptotically smaller than $\Gamma(n)$. 
 More precisely, $\gamma^{min}_{gp}(n)=O(2^{d\sqrt{n}\log(n)})$, for some constant $d>0$.
\end{theorem}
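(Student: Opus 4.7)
The plan is to reuse the very point set $S=R\cup B$ constructed in the proof of Theorem~\ref{theo:sqrt-alternation} (with $n=k^2$, the $n$ red points evenly spaced on a horizontal line, and the $n$ blue points grouped into $k$ very tight clusters along the $x$-axis). The key geometric fact I want to recycle is the one already established there: \emph{every} set $H$ of pairwise-disjoint rays from this $S$ produces a configuration $C(S,H)$ whose alternation number is at most $c\sqrt{n}$ for some absolute constant $c$.

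Once that bound is in hand, it suffices to upper bound the total number of configurations whose alternation number is at most $K:=c\sqrt{n}$. A configuration with alternation number exactly $k$ is determined, up to the standard cyclic equivalence, by an ordered pair of compositions of $n$ into $k$ positive parts (the sequences $r_1,\ldots,r_k$ and $b_1,\ldots,b_k$), so their number is at most $\binom{n-1}{k-1}^2$. Summing over $k\le K$ and using $\binom{n}{K}\le (en/K)^K$,
\[
\sum_{k=1}^{K} \binom{n-1}{k-1}^2 \;\le\; K\left(\frac{en}{K}\right)^{2K},
\]
which for $K=c\sqrt{n}$ simplifies to $2^{O(\sqrt{n}\log n)}$, giving $\gamma(S)=O(2^{d\sqrt{n}\log n})$ for some constant $d>0$.

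To conclude the bound for $\gamma^{\min}_{\rm{gp}}(n)$, the point set needs to be in general position: I would appeal to the same infinitesimal perturbation already invoked at the end of the proof of Theorem~\ref{theo:sqrt-alternation}, which preserves the $O(\sqrt{n})$ upper bound on the alternation. Since all the geometric work is already done in Theorem~\ref{theo:sqrt-alternation} and the counting step is elementary, I do not see a real obstacle here; the main point is simply the observation that a tight upper bound on the alternation number automatically yields a subexponential upper bound on the number of feasible configurations, because configurations with few color alternations are combinatorially few.
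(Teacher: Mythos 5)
Your proposal is correct and follows essentially the same route as the paper: it reuses the point set and the $O(\sqrt{n})$ alternation bound from Theorem~\ref{theo:sqrt-alternation} and then bounds the number of configurations with few alternations by an elementary count. The only (immaterial) difference is that you count ordered pairs of compositions of $n$ into at most $c\sqrt{n}$ parts, whereas the paper counts linear sequences by choosing the positions of the color changes; both give $2^{O(\sqrt{n}\log n)}$.
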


\begin{proof}	
	Let $S$ be the point set described in the proof of Theorem~\ref{theo:sqrt-alternation}.
	We now show that there are $O(2^{d\sqrt{n}\log(n)})$ feasible configurations from $S$, for some constant $d>0$.
	
	It follows easily from the proof of Theorem~\ref{theo:sqrt-alternation} that any configuration feasible from $S$ has at most $6\sqrt{n}+c$ alternations, for some constant $c$. 
	Let us count the number of linear sequences with~$i$ alternations for $i \in [6\sqrt{n}+c]$ representing a configuration, which is an upper bound for the number of configurations feasible from $S$. 
	We assume without loss of generality that every sequence starts with a red block (and ends with a blue block) and choose then the $i-1$ positions where the remaining changes of colors are produced. 
	This can be done in at most ${ 2n \choose i-1  }$ 	 ways.
	If $n$ is sufficiently large, we have that  \[{ 2n \choose i-1  } \le { 2n \choose {6\sqrt{n}+c}  } \le \left( \frac{2ne}{ 6\sqrt{n}+c}\right)^{6\sqrt{n}+c} \le \left( \frac{e\sqrt{n}}{3}  \right)^{6\sqrt{n}+c} \le  n^{3\sqrt{n}+c},\]
	for all $i \in [6\sqrt{n}+c]$, where we use the bound on binomial coefficients $ {n \choose k} \le \left((e n) /{k}\right)^k$.
	Hence, the total number of sequences is upper-bounded by
	\[ \sum \limits_{i=2}^{6\sqrt{n}+c} {2n \choose i-1}  \le (6\sqrt{n}+c)n^{3\sqrt{n}+c}, \]
	and the claim follows. 
\end{proof}

\subsection{Realizing configurations}\label{subsec:realizations}

We study in this section universal and non-universal configurations.
Observe that, as a consequence of Theorem~\ref{theo:sqrt-alternation}, configurations with $\omega(\sqrt{n})$ alternations are not realizable from every point set.
Further note that given any point set $S=R\cup B$ with $|R|=|B|=n$,
the configuration $(n,n)$ is always realizable: Rotate $S$ so that no pair of points lie in a horizontal line and draw from each red point a ray
oriented to the right, and from each blue point a ray oriented to the left. 
The resulting rays are pairwise-disjoint and satisfy $C(S,H)=(n,n)$.

We first investigate configurations with alternation number $2$.
If follows easily from the Ham Sandwich Theorem that the configuration $(\lfloor n/2\rfloor,\lfloor n/2\rfloor,\lceil n/2\rceil,\lceil n/2\rceil)$ is universal for every $n\in \N$.
Note that this configuration is as balanced as possible.
We now show that any point set in general position can yield some totally-unbalanced configurations as well. 

\begin{theorem}\label{theo:universal-conf2}
For every bichromatic point set ${S=R\cup B}$ in general position and any $t\in [n-1]$,
either the configuration $(n-1,n-t,1,t)$ or the configuration $(n-t,1,t,n-1)$ is
feasible.
\end{theorem}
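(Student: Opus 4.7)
The plan is to realize the desired configuration by a ``split'' ray construction exploiting a vertex of $\mathrm{conv}(S)$. Every vertex of $\mathrm{conv}(S)$ is itself a point of $S$, so either some red point or some blue point lies on $\mathrm{conv}(S)$. In the first case I will realize $(n-1,n-t,1,t)$; in the second, an entirely symmetric construction realizes $(n-t,1,t,n-1)$. I describe only the former in detail, the latter being dual.

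Let $q\in R$ be a vertex of $\mathrm{conv}(S)$. I fix a direction $d\in\mathbb{R}^2$ and shoot each point of $R\setminus\{q\}$ in direction $+d$, while shooting $q$ together with every point of $B$ in direction $-d$. For a generic $d$---one avoiding the finitely many directions parallel to a segment joining two points of $S$---the resulting rays are pairwise disjoint: within each of the two parallel groups the rays are parallel and non-collinear, and between the groups the $+d$-rays move forward while the $-d$-rays move backward, so they diverge. Reading off the two clusters at infinity, the $+d$-cluster contributes the block $R^{n-1}$, while the $-d$-cluster, sorted by $d^{\perp}$-projection, contributes $B^{a}\,R^{1}\,B^{n-a}$, where $a=a(d)$ is the number of blue points whose $d^{\perp}$-projection is strictly less than that of $q$. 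Hence the cyclic configuration realized is $(n-1,a,1,n-a)$, and it remains to choose $d$ with $a=n-t$.

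To do this I rotate $d$ continuously. Because $q\in\mathrm{conv}(S)$, there is a direction $d_0$ in which $q$ has the largest $d_0^{\perp}$-projection among all points of $S$, giving $a(d_0)=n$; the opposite direction yields $a=0$. During a half-turn of $d$, the value $a(d)$ changes by $\pm 1$ exactly when $d$ becomes parallel to a segment $\overline{qb}$ for some $b\in B$. There are precisely $n$ such critical directions, one for each blue, so the total change of $-n$ forces every individual change to be $-1$; thus $a(d)$ decreases monotonically through every integer in $\{0,1,\dots,n\}$. In particular, $a=n-t$ holds on an open interval of directions, inside which a generic $d$ can be selected to finish the construction.

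The main obstacle to be careful about is showing that $a(d)$ attains every intermediate integer value; this hinges crucially on the convex-hull position of $q$, which guarantees that both extremes $a=0$ and $a=n$ are accessible and thereby forces the monotone discrete interpolation through all intermediate values. The dual case (no red lies on $\mathrm{conv}(S)$) is handled by swapping the roles of the colors and recovering $(n-t,1,t,n-1)$ instead.
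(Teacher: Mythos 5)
Your proof is correct and follows essentially the same route as the paper's: both constructions shoot the red convex-hull vertex together with all blue points in one direction and the remaining $n-1$ red points in the opposite direction, so that the lone red ray splits the blue block as $t$ versus $n-t$, with the symmetric color-swapped construction handling the case of a blue hull vertex. The only difference is that the paper fixes the splitting direction by choosing a blue point $q$ with $|B\cap\ell^-|=t-1$ on the line $\ell$ through the hull vertex (plus a small rotation), whereas you obtain the same split by a rotating-direction sign-change count; both are valid.
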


\begin{proof}
Let $p\in S$ be a vertex of the convex hull of $S$. 
We show that if $p\in R$, then $(n-1,n-t,1,t)$ is feasible. 
If $p\in B$, it can be shown analogously that $(n-t,1,t,n-1)$ is feasible. 
Assume then that $p\in R$, and let $q\in B$ be a point such that $|B\cap \ell^-|=t-1$, where $\ell$ is the line
passing through $p$ and $q$.
Let 
\begin{align*}
H'= \; & \{h(p,p-q)\} \\ 
\cup \;& \{h(u,q-p)~|~u\in R\setminus\{p\}\} \\ 
\cup \;& \{h(u,p-q)~|~u\in B\},
\end{align*}
which is a set of non-crossing rays. 
After rotating each ray of $H$ a small angle $\varepsilon \ne 0$ so that $h(q,p-q)\setminus\{q\} \subset \ell^-$, it holds that $C(S,H)=(n-1,n-t,1,t).$
\end{proof}

In contrast with the previous results, some configurations with alternation number $2$ are not universal: 

\begin{theorem} \label{prop:unbalanced}
For any $n \in \N,\, n \ge 10$, there exist bichromatic point sets $S=R\cup B$ such that no configuration $(r_1,b_1,r_2,b_2)$ with either $$n-2>\max \{r_1,r_2\}>\max \{b_1,b_2\}+1 \text{,  or } $$ $$n-2>\max \{b_1,b_2\}>\max \{r_1,r_2\}+1$$ is feasible. 
\end{theorem}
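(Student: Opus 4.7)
The plan is to exhibit, for each $n \ge 10$, a specific point set $S = R \cup B$ where every configuration satisfying the stated inequalities is infeasible. By the color-swapping symmetry of the statement, it suffices to rule out, from a single $S$, the configurations $(r_1,b_1,r_2,b_2)$ with $\max\{r_1,r_2\} > \max\{b_1,b_2\}+1$ and $n-2 > \max\{r_1,r_2\}$, provided we can also do the analogous thing with colors swapped; the same construction (or its mirror) should work for both cases.

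For the construction, I would try a point set in convex position with strongly separated monochromatic clusters: place the red points densely on a small arc of a circle (say, near the top) and the blue points densely on the complementary arc (near the bottom), then apply a small generic perturbation to attain strong general position. The intuition is that from infinity the directions reachable by red rays are concentrated around a narrow angular range (determined by the red cluster and its separation from the blue cluster), and similarly for blue, which severely limits how the monochromatic blocks at infinity can be laid out.

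Suppose then, for contradiction, that $H$ realizes $(r_1,b_1,r_2,b_2)$ from $S$ with $r_1 = \max\{r_1,r_2\}$, $r_2 = n - r_1 \ge 3$, and $r_1 \ge \max\{b_1,b_2\}+2 \ge 5$. Let $R_1 \subseteq R$ and $R_2 \subseteq R$ be the red apices whose rays land in the large and small red arcs at infinity, and define $B_1,B_2 \subseteq B$ analogously. Using the pairwise-disjointness of $H$ and the clustered placement of $R$ (respectively $B$), I would run a sweep argument: rotate a directed line around the plane and track how the rays attach to it, recording their colors and order; non-crossingness forces the cyclic order of apices visited to be compatible with the convex-hull order on the clusters, and the inter-cluster separation limits how many blue rays can be squeezed into directions ``between'' the two red blocks. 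A pigeonhole count on how many blue rays must lie in each of the two blue arcs would then yield $\max\{b_1,b_2\} \ge r_1 - 1$, contradicting $\max\{b_1,b_2\} \le r_1 - 2$.

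The hard part will be the sweep/pigeonhole argument: making rigorous how ``two red clusters of sizes $r_1$ and $r_2$ at infinity force at least $r_1 - 1$ blue rays into one of the blue arcs'' demands a careful analysis of the geometry of non-crossing rays between two narrow clusters, probably cased on the cyclic position of the two blue arcs relative to the two red arcs. The condition $n \ge 10$ is where I would expect the constraints $r_2 \ge 3$ and $\max\{b_1,b_2\} \ge 3$ to leave enough combinatorial slack for the pigeonhole to actually fire; for smaller $n$ the extremal feasible configurations produced by Theorem~\ref{theo:universal-conf2} (e.g.\ $(n-t,1,t,n-1)$) are exactly the ones the hypotheses $r_1 < n-2$ and $r_2 \ge 3$ are meant to exclude.
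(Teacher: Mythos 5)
Your construction goes in the wrong direction, and this is fatal rather than just a missing detail. Placing the red points densely on one arc and the blue points densely on the complementary arc produces two separated monochromatic clusters, which is essentially an affine copy of the universal point set of Theorem~\ref{theo:universal-point-set} (red points on one line, blue points tightly clustered on a parallel line); since nondegenerate affine maps preserve disjointness of rays and the cyclic order at infinity, every configuration --- including the unbalanced ones you are trying to forbid --- is feasible from such a set. Intuitively, separating the colors gives you maximal freedom to fan the blue rays so that they cut the red cluster into blocks of any prescribed sizes, which is exactly the mechanism behind the universality proof. So no sweep or pigeonhole argument can succeed from your $S$: the conclusion you are aiming for is false for it.

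The paper does the opposite: it takes $S$ to be the vertices of a regular $2n$-gon with \emph{alternating} colors, so that every run of consecutive vertices is balanced to within one. The proof then shows, by a case analysis on whether the first and last rays of the big red block enter the polygon or not, that the rays realizing the block $r_1$ force the corresponding stretch of the boundary to also carry a blue block of comparable size (the interior blue rays emanating from the alternating blue vertices inside that stretch must all land together, on pain of creating a third alternation), which contradicts $r_1>\max\{b_1,b_2\}+1$. The balancedness of consecutive runs is the property you need, and it is precisely the property your clustered construction destroys. Beyond the choice of point set, your proposal also leaves the entire quantitative step (``at least $r_1-1$ blue rays into one arc'') unproved, and that step is where all the work in the paper's proof lives.
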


\begin{proof}
Let $S=R\cup B$ be the set of vertices of a regular $2n$-gon with vertices having alternating colors. 
Let us prove that if $n-2>\max \{r_1,r_2\}> \max \{b_1,b_2\}+1$, then the configuration $(r_1,b_1,r_2,b_2)$ is not feasible for that set of points. 
If the configuration is such that $n-2>\max \{b_1,b_2\}>\max \{r_1,r_2\}+1$, the same arguments would hold switching the role of the colors. 
Without any loss of generality,  we can assume that $r_1=\max \{r_1,b_1,r_2,b_2\}$.
Observe that the hypothesis already implies $r_2 \ge 3$ and $\min \{b_1,b_2\} \ge 5$. 
We assume for contradiction that $(r_1,b_1,r_2,b_2)$ is feasible. 

Let us denote by $\hat{S}$ the boundary of the regular $2n$-gon and suppose $(r_1,b_1,r_2,b_2)$ is realized by a set $H$ of pairwise-disjoint rays.
First, observe that if a ray $h \in H$ emanating from a vertex $p \in S$ crosses the regular $2n$-gon, it splits the points of $S \setminus \{ p\} $ into two parts: the points $S^+=h^+ \cap S$ placed to the right of the ray, and the points $S^-=h^- \cap S$ placed to the left of the ray. 
The rays from the points in $S^+$ realize the positions clockwise immediately after $h$ while the rays from $S^-$ realize the positions clockwise immediately before $h$. 
We say that a ray $h \in H$ is \emph{interior} (to $\hat{S}$) if it intersects its boundary in exactly two points, and we call it \emph{exterior} otherwise.

Now let $h_1$ be the clockwise first ray of (the set of rays realizing) the block $r_1$. 
Suppose that $h_1$ is interior to $\hat{S}$. 
Then, if all the blue rays emanating from blue points of $S^+=h^+_1 \cap S$ realize the same block, it must be $b_1 \ge \lceil |S^+|/2\rceil$ and $r_1\le 1+\lfloor |S^+|/2\rfloor$ contradicting $r_1>\max \{b_1,b_2\}+1$.
If the blue rays from $S^+$ realize more than one block, then $C(S,H)$ has at least three alternations (two from $S^+$ and one from $S^-=h^-_1 \cap S$), unless $S^-$ consists of a single blue point. 
Therefore, either the first ray $h_1$ of the block $r_1$ is exterior or $S^-$ consists of a single blue point. 
In the same way we can prove that the clockwise last red ray $h_l$ realizing the block $r_1$, with origin at $p_l \in S$, either is exterior or $h_l^+ \cap S$ consists of a single blue vertex. 
We assume hereafter that the points clockwise between $p_1$ and $p_l$ are labeled in clockwise order $p_1,p_2,\ldots,p_l$.
We distinguish three cases: both rays $h_1$ and $h_l$ are exterior (case A), one of these rays is interior and the other one is exterior (case B), and both rays are interior (case C).

\paragraph{Case A}Refer to Figure~\ref{CaseA}. 
The stretch of $\hat{S}$ that goes clockwise from $p_1$ to $p_l$ contains  
\mbox{$(l-1)/2$} blue vertices and $(l-1)/2+1$ red vertices. 
Since the red rays from $h_1$ to $h_l$ realize the same block, the $(l-1)/2$ blue rays $h_2,h_4,\ldots ,h_{l-1}$, emanating from the blue vertices $p_2,p_4,\ldots ,p_{l-1}$, must be interior and also cut the boundary of $\hat{S}$ at points $\hat{p}_2,\hat{p}_4,\ldots ,\hat{p}_{l-1}$, respectively. 
In addition to the exterior rays emanating from the red points $p_1,p_3,\ldots ,p_l$, there can be at most $(l-1)/2$ additional interior red rays realizing the block $r_1$, since if more than one ray crossed the boundary of $\hat{S}$ between $p_{i}$ and $p_{i+2}$ for some $i \in \{1,3,\ldots,l-2 \}$, they would intersect each other.
Therefore, we must have $(l-1)/2+1+(l-1)/2=l \ge r_1$.

\begin{figure}[!htb]
	\centering
	\subfigure[Case A]{\includegraphics[page=1]{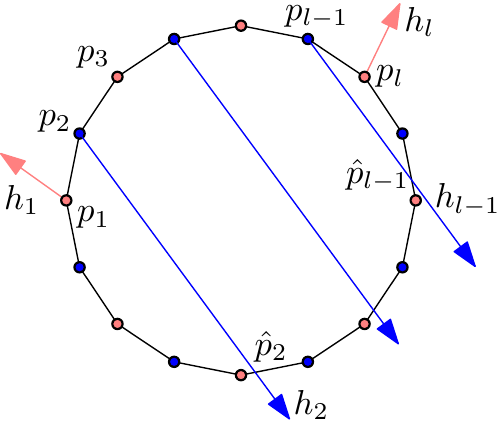} \label{CaseA}}
	\subfigure[Case B]{\includegraphics[page=2]{cases} \label{CaseB}}
	\subfigure[Case C]{\includegraphics[page=3]{cases} \label{CaseC}}
	\caption{The cases in the proof of Theorem~\ref{prop:unbalanced}.}
\end{figure}

Since the blue rays $h_2$ and $h_4$ are disjoint, there must be at least two points of $S$ counterclockwise between the points $\hat{p}_2$ and $\hat{p}_4$.
If there are exactly two points of $S$, one of them must be blue and the other one red.
The blue ray must be exterior: otherwise, it would split the block realized by $h_1$ and $h_l$.
The red ray must be interior, since otherwise clockwise between $h_4$ and $h_2$ there would be at most two red rays, contradicting that $r_2\ge 3$. 
Similarly, the ray emanating from $p_3$ must be exterior and contribute to realize the block $r_1$. 
On the other hand, between $\hat{p}_{i}$ and $\hat{p}_{i+2}$, for $i=2,\ldots, l-3$, there cannot be more than one red point. 
If there were $k\ge 2$ red points, there would be $k-1$ blue points between them. 
Since at most one of the red rays can be interior, each exterior red ray (possibly together with the ray from $p_3$) would be shot between two exterior blue rays, contradicting the fact that $r_2\ge 3$. 
Therefore, we have proved that the rays emanating from $p_2,p_4,\ldots ,p_{l-1}$ and the rays emanating from the blue vertices placed counterclockwise between the points $\hat{p}_2$ and $\hat{p}_{l-1}$ realize the same block, and there are at least $(l-1)/2+(l-1)/2-1= l-2$ of them. 
Let us denote by $H_1$ this set of blue rays. 

Observe now that, if there are blue vertices counterclockwise between $p_1$ and $\hat{p}_2$ and between $\hat{p}_{l-1}$ and ${p}_l$, some of the corresponding blue rays have to realize the same block as $H_1$ (otherwise, $H$ would have alternation number greater than two).
Hence, there would be a block of at least $l-1$ blue rays, contradicting that $l \ge r_1>\max \{b_1,b_2\}+1$. 
On the other hand, if there are no blue vertices either between $p_1$ and $\hat{p}_2$ or between $\hat{p}_{l-1}$ and $p_l$, then we have a set of consecutive vertices of $\hat{S}$ realizing a blue block and the red block $r_1$. 
But this contradicts again that $r_1>\max \{b_1,b_2\}+1$, because in a sequence of consecutive vertices the difference between the number of red and blue ones is at most one.

\paragraph{Case B} Refer to Figure~\ref{CaseB}.
Without loss of generality, assume that $h_1$ is interior and leaves only one point on its left (the blue point $p_2$). 
The case with $h_l$ interior is symmetric.

As in the previous case, the rays from the blue points $p_4,p_6,\ldots ,p_{l-1}$ must be interior and cut $\hat{S}$ in points $\hat{p}_4,\hat{p}_6,\ldots, \hat{p}_{l-1}$. 
By the same reasoning as before, between each pair of these consecutive interior blue rays there must be at least two points and exactly one of them must be red.
In addition, the red vertex counterclockwise between $\hat{p}_4$ and $\hat{p}_6$ must be the first red vertex counterclockwise from $p_1$. 
Otherwise, the ray emanating from it either would cross $h_1$ (if the ray is interior) or would create a blue block of size 2 (if the ray is exterior). By the same argument, the red vertex counterclockwise between $\hat{p}_6$ and $\hat{p}_8$ must be the second red vertex counterclockwise from $p_1$, and so on. 
This also implies that the ray from $p_3$ must be exterior since, otherwise, there would be a blue block of size at most two. 
Thus, the vertices clockwise between $p_l$ and $\hat{p}_{l-1}$ realize a single blue block and a single red one. 
As before, this contradicts the fact that $r_1>\max \{b_1,b_2\}+1$.

\paragraph{Case C} Refer to Figure~\ref{CaseC}.
In this case, since $h_1$ and $h_l$ must be disjoint, $p_l$ must be, from $p_1$, clockwise after $p_{n+1}$. 
On the other hand, since the rays from $p_4,p_6,\ldots ,p_{l-3}$ must be pairwise disjoint, $p_l$ must be clockwise before $p_{n+4}$.
Hence, $p_l=p_{n+3}$ and it is easy to see that all the red points must realize the same block and so must all the blue points, reaching again a contradiction.
\end{proof}

We next describe configurations with larger alternation number that are also not universal.

\begin{theorem}
\label{theo:no-n/3}
Let $\mathcal{C}$ be the infinite family of configurations such that any configuration $C \in \mathcal{C}$ of length $n$ has alternation number at least three, every red block of $C$ has size at least $n/k$ and every blue block of $C$ has size at least $n/l$, where $k,l \in \R$. 
Then, there exists $n_0 \in \N$ such that any $C \in \mathcal{C}$ of size $n > n_0$ is not universal. 
In particular, the uniform configuration $(n/k,n/k,...,n/k,n/k)$ with $k \geq 3$ and $n/k \in \N$ is not universal for large enough $n$.
\end{theorem}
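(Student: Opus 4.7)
The plan is to take as bad point set $S$ the regular $2n$-gon with alternating colors used in the proof of Theorem~\ref{prop:unbalanced}, and to argue that no $C\in\mathcal{C}$ can be realized from $S$ once $n$ is large enough. Suppose for contradiction that $C=(r_1,b_1,\ldots,r_\alpha,b_\alpha)\in\mathcal{C}$ is realized by a set $H$ of pairwise-disjoint rays from $S$; so $\alpha\ge 3$, every $r_i\ge n/k$, and every $b_j\ge n/l$. For each red block of $C$, let $h_f,h_\ell$ denote the clockwise first and last rays realizing it, with apices $p_f,p_\ell$ on $\hat S$ (the boundary of the polygon). Each of $h_f,h_\ell$ is either \emph{interior} to $\hat S$ or \emph{exterior}, just as in the proof of Theorem~\ref{prop:unbalanced}.

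The core of the proof would be a \emph{local rigidity lemma}, proven by generalizing Cases A--C of Theorem~\ref{prop:unbalanced}: the vertices of $\hat S$ strictly between $p_f$ and $p_\ell$ in clockwise order realize, besides the red block under consideration, rays that belong to at most the two blue blocks $b_{i-1}$ and $b_i$ adjacent to the red block in $C$, and the assignment of each such blue vertex to one of these two blocks is geometrically forced. The reasoning should mirror the interior/exterior case analysis of the earlier proof: an interior blue ray from strictly between $p_f$ and $p_\ell$ either splits the red block---impossible---or lies in a wedge covering only one of the two adjacent blue arcs at infinity, and an exterior blue ray there is confined to a local angular wedge also fitting into one of those two adjacent blue arcs. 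Applying this lemma to all $\alpha$ red blocks simultaneously partitions $\hat S$ into $\alpha$ disjoint clockwise arcs, each containing one full red block together with its blue contributions. Since $\hat S$ alternates colors and has $2n$ vertices, an arc supporting a red block of size $\ge n/k$ must span at least $2n/k-O(1)$ consecutive vertices, and summing over the $\alpha$ disjoint arcs yields $\alpha\cdot (2n/k - O(1))\le 2n$, i.e.\ $\alpha\le k + O(\alpha/n)$. For the uniform configuration $\alpha=k\ge 3$ and this inequality is tight at leading order, so one must sharpen the rigidity lemma to show that each arc wastes $\Omega(1)$ vertices of each color outside the use of its own block, producing a strict inequality that fails for $n$ larger than an explicit $n_0=n_0(k,l)$.

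The hard part will be the local rigidity lemma itself. Theorem~\ref{prop:unbalanced} already spends considerable effort on the interior/exterior case analysis for a single red block with only two blue flanks; extending it to $\alpha\ge 3$ requires ruling out rays that traverse the polygon to contribute to a non-adjacent block, and controlling the interaction between interior rays of distinct blocks. A clean route would be induction on $\alpha$: peel off a minimal clockwise arc of $\hat S$ realizing one red-blue block pair, use disjointness to geometrically detach it from the rest of $H$, and invoke Theorem~\ref{prop:unbalanced} (or a mild adaptation of it) as the base case. A secondary subtlety is that the ``$\Omega(1)$ wasted per block'' slack needed to close the counting argument may not be localized to individual blocks, in which case it should be recoverable by a global parity or double-count around $\hat S$ using that the polygon boundary is itself a single alternating sequence of length $2n$.
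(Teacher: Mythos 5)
Your proposal is a plan rather than a proof: both of its load-bearing steps are left unestablished, and the second one is exactly where the argument is most likely to fail. The ``local rigidity lemma'' is only sketched, but even granting it, your counting gives $\alpha\,(2n/k-O(1))\le 2n$, which for the uniform configuration ($\alpha=k$) is satisfied with equality and yields no contradiction. You acknowledge this and appeal to an unproven ``$\Omega(1)$ wasted vertices per arc'' sharpening, but there is no reason offered why such waste must occur. The difficulty is structural: the contradiction in Theorem~\ref{prop:unbalanced} is ultimately derived from the \emph{imbalance} hypothesis $r_1>\max\{b_1,b_2\}+1$ played against the fact that any run of consecutive vertices of the alternating $2n$-gon has red and blue counts differing by at most one. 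The configurations in $\mathcal{C}$ --- in particular the uniform one --- are perfectly balanced, so an arc of the alternating polygon realizing one red block and one blue block of equal size $n/k$ is entirely consistent with the vertex coloring, and the mechanism that closes Cases A--C simply has nothing to bite on. Whether the uniform configuration is actually infeasible from the alternating regular $2n$-gon is not at all clear, and your argument as written cannot decide it.

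The paper sidesteps this by using a different point set altogether: $R=K_1(n)$ on a unit circle and $B=K_\lambda(n)$ on a concentric circle of radius $\lambda$, with $\lambda$ a free parameter. A width estimate (Lemma~\ref{minwidth}) shows that for $\lambda<g(n)$ the rays of any red block span more than one gap of $B$, forcing some blue ray into the red disk; a wedge-counting estimate (Lemma~\ref{wedge}) shows that for $\lambda>f(n)$ the arcs of $\mathcal{C}_B$ available between two such penetrating blue rays contain fewer than $n/l$ blue points, so no blue block fits there. Since $g(n)\to\infty$ while $f(n)\to[\sin(\pi/4l)]^{-1}$, the window $(f(n),g(n))$ is nonempty for large $n$, and this is where the hypotheses on block sizes and on the alternation number being at least three are each used. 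If you want to salvage your approach, you would need either to prove the per-arc waste claim for the alternating polygon or to switch to a construction with a tunable parameter like $\lambda$ that lets you break the tie.
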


Before proving the previous theorem, we need two technical lemmas, which we formalize next.

Given a real number $\lambda>0$, let $K_\lambda(n)$ be the set of $n$ (complex) roots of the unity, taken as points in the real plane, and scaled by a factor of $\lambda$.
The \textit{width} of a point set $T$, is the width of the thinnest slab (closed space between two parallel lines) enclosing $T$. 
The width of a slab is the distance between its boundary lines. 
We say that a slab $Z$ \textit{certifies} the width $w$ of a point set $T$ if $Z$ has width $w$ and $T \subset Z$.

\begin{figure}[t]
	\centering
\includegraphics[scale=1,page=5]{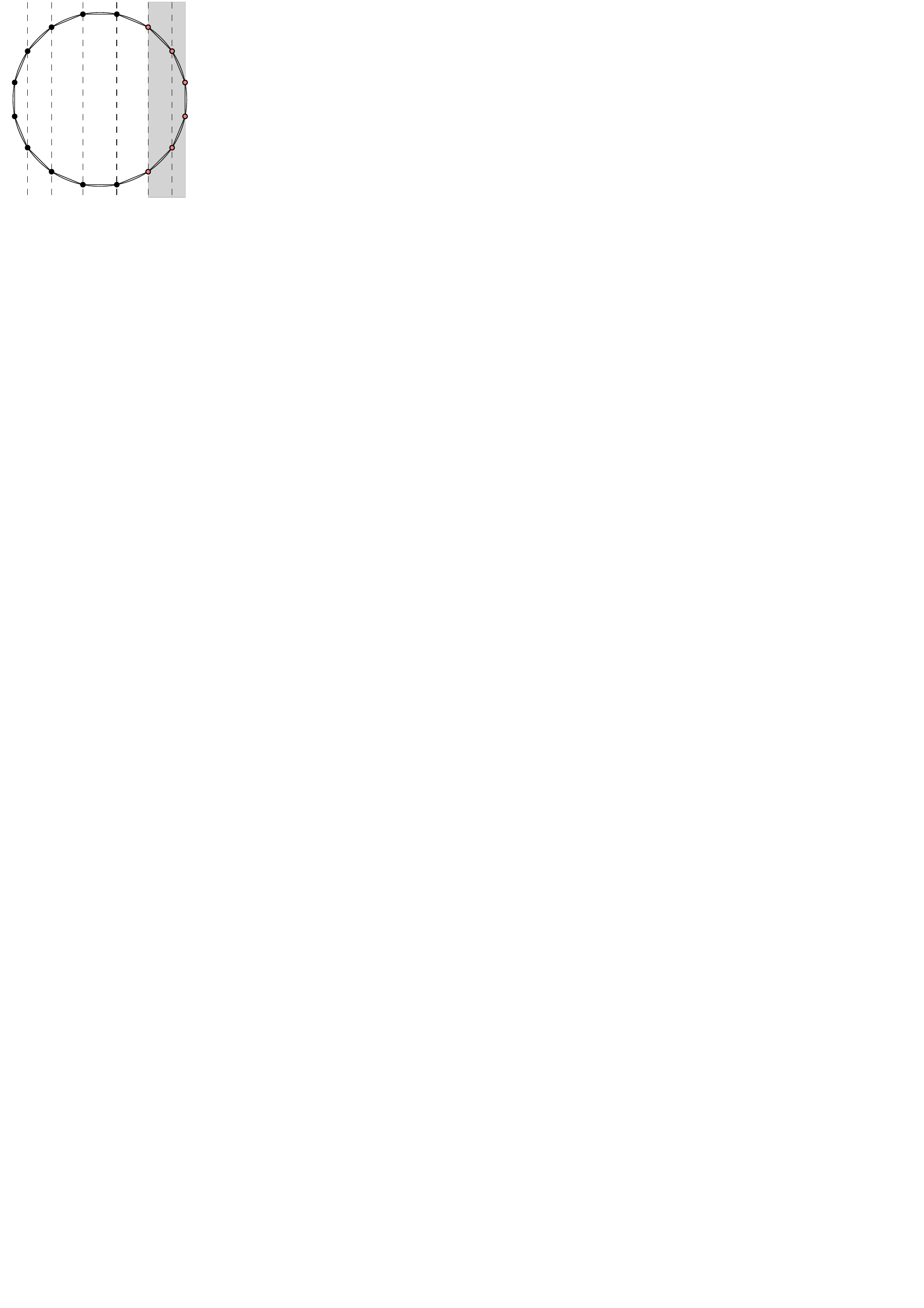}
\caption{Narrowest slab for odd and even $n$, and odd and even $t$.}
\label{coord}
\end{figure}

\begin{lemma} 
\label{minwidth}
The width of any set $T \subset K_1(n)$ with $3 \leq \vert T\vert  \leq \lfloor n/2 \rfloor$ is at least $$\cos\left( \frac{\pi}{n}\right)  - \cos \left( \frac{(\vert T\vert-1)\pi}{n}\right) .$$
\end{lemma}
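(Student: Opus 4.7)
The plan is to reduce the problem to a one-dimensional projection computation. The width of $T$ equals the minimum, over all directions $\theta$, of the length of its orthogonal projection onto a line of direction $\theta$. Thus it suffices to fix $\theta$ arbitrarily and show that the projection of $T$ has length at least $\cos(\pi/n)-\cos((t-1)\pi/n)$, where $t=|T|$. Since $T\subset K_1(n)$, its projections form a $t$-subset of the multiset $\{\cos(2\pi k/n-\theta):k=0,\ldots,n-1\}$, and the projection length equals the spread (maximum minus minimum) of this $t$-subset; hence it is at least the minimum spread over all $t$-subsets of the projected multiset, which is what I will bound below.

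Using the rotational symmetry of $K_1(n)$ I would write $\theta=2\pi\tau/n$ with $\tau\in[0,1/2]$ (reflection symmetry of $\cos$ handles the other half). The sorted angular distances from $\theta$ to the vertices of $K_1(n)$ are then given explicitly by $d_1=2\pi\tau/n$ and, for $j\geq 1$, $d_{2j}=2\pi(j-\tau)/n$, $d_{2j+1}=2\pi(j+\tau)/n$. The minimum spread over $t$-subsets equals $\min_i(\cos(d_i)-\cos(d_{i+t-1}))$, and each term factorises via $\cos A-\cos B=2\sin((A+B)/2)\sin((B-A)/2)$ into a product of two sines. The assumption $t\leq\lfloor n/2\rfloor$ keeps every relevant sine argument in $[0,\pi/2]$, where $\sin$ is positive and monotone, so the subsequent optimisation in $\tau$ and $i$ is clean.

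For $i=1$ (the ``top'' $t$ closest vertices to $\theta$), the factorisation gives
\[
2\sin(\pi m/n)\sin(\pi(m-2\tau)/n) \text{ if } t=2m, \qquad 2\sin(\pi(m+2\tau)/n)\sin(\pi m/n) \text{ if } t=2m+1.
\]
Minimising over $\tau\in[0,1/2]$ yields $2\sin(\pi m/n)\sin(\pi(m-1)/n)=\cos(\pi/n)-\cos((t-1)\pi/n)$ in the even case (attained at $\tau=1/2$) and $1-\cos((t-1)\pi/n)$ in the odd case (attained at $\tau=0$); since $\cos(\pi/n)\leq 1$, both expressions are at least $\cos(\pi/n)-\cos((t-1)\pi/n)$. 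For $i\geq 2$ the same identity expresses the shifted spread as $2\sin(\pi A_i/n)\sin(\pi B_i/n)$ with $A_i\geq m$ and $B_i\geq m-1$, and at least one of the two inequalities strict, so monotonicity of $\sin$ on $[0,\pi/2]$ shows that these shifted spreads are also at least the formula. Combining both cases, every term $\cos(d_i)-\cos(d_{i+t-1})$ dominates the target bound.

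The main obstacle is the bookkeeping: one must track the parity of $i$ and of $t$, identify which endpoint of $[0,1/2]$ minimises each piece, and keep the sine-argument arithmetic straight. None of this is deep, but it is where the hypothesis $t\leq\lfloor n/2\rfloor$ is used essentially, since it is what guarantees that all sine arguments remain in the monotone regime $[0,\pi/2]$ so that the piecewise minima are attained at the claimed endpoints.
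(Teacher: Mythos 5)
Your overall strategy---expressing the width as the minimum over directions $\theta$ of the one-dimensional projection spread, and then lower-bounding the spread of the projection of $T$ by the minimum spread of a window of $t$ consecutive values among the $n$ sorted projections $\cos(d_1)\ge\cdots\ge\cos(d_n)$---is sound and genuinely different from the paper's proof, which instead analyzes the structure of a width-certifying slab (one bounding line must contain two points of $T$, the other one point) and identifies the extremal subsets directly. Your $i=1$ computation is correct and recovers exactly the paper's two extremal values, $1-\cos((t-1)\pi/n)$ for odd $t$ (at $\tau=0$) and $\cos(\pi/n)-\cos((t-1)\pi/n)$ for even $t$ (at $\tau=1/2$), both of which dominate the claimed bound.

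There is, however, a genuine flaw in your treatment of the windows with $i\ge 2$. You assert that $t\le\lfloor n/2\rfloor$ keeps ``every relevant sine argument'' in $[0,\pi/2]$ and then conclude from $A_i\ge m$ and monotonicity that $\sin(\pi A_i/n)\ge\sin(\pi m/n)$. This is false as stated: the half-sum argument $\pi A_i/n=\tfrac{1}{2}(d_i+d_{i+t-1})$ is \emph{not} confined to $[0,\pi/2]$; for windows near the antipode of $\theta$ it approaches $\pi$ (e.g.\ for the bottom window one gets $A_i=n-m$, not something close to $m$), so a one-sided bound $A_i\ge m$ gives no lower bound on $\sin(\pi A_i/n)$ whatsoever. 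The conclusion you want is nevertheless true, but it needs the two-sided estimate: the midpoints of all windows satisfy $m\le A_i\le n-m$ (the extreme value $n-m$ being attained by the last window), and $\sin(\pi x/n)\ge\sin(\pi m/n)$ on the whole symmetric interval $[m,n-m]$ because $\sin$ is symmetric about $\pi/2$, while the half-extents $B_i$ do lie in $[m-1,m]\subset[0,n/2]$ where monotonicity applies. Equivalently, one can dispose of the far windows by the antipodal symmetry $\theta\mapsto\theta+\pi$, which exchanges top and bottom windows. Without one of these repairs the step ``for $i\ge 2$ the shifted spreads are also at least the formula'' does not follow from what you wrote.
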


\begin{proof}
Let $T \subset K_1(n)$ be a set of minimum width. 
If $Z$ certifies the  width of $T$, then each of the bounding lines of $Z$ must contain at least one point of $T$ (otherwise, the slab can obviously be made narrower). 
Furthermore, at least one of the lines must contain two points of $T$.
Otherwise, if every bounding line contains only one point of $T$, say $p$ and $q$ respectively, rotating the lines in a parallel fashion pivoting on $p$, respectively on $q$, would lead to a thinner slab  enclosing $T$.
Assuming that the slab is vertical (if not, we can first apply a suitable rotation),  
the abscissae of $K_1(n)$ are, without loss of generality, either

%\begin{linenomath}
\begin{align*} 
\mathcal{A}(n)=&\left\lbrace \cos\left( (2 \pi (j-1))/ n \right)  \mid j \range{ \left\lceil (n+1)/2 \right\rceil} \right\rbrace, \; or \\
\mathcal{A}'(n)=&\left\lbrace \cos\left((2 \pi (j-1) + \pi )/ n \right) \mid j \range{ \left\lceil n/2 \right\rceil} \right\rbrace.
\end{align*}
%\end{linenomath}

Simple trigonometric calculations show that the sets having the rightmost abscissae of $\mathcal{A}(n)$ if $\vert T\vert$ is odd, and of $\mathcal{A}'(n)$ if $\vert T\vert$ is even, have minimum width.  
These point sets are sketched in Figure~\ref{coord} and their widths are
%\begin{linenomath}
\begin{align*} 
& 1 - \cos((\vert T\vert-1)\pi / n ), \; \text{ if $\vert T\vert$ is odd, and } \\
& \cos(\pi / n ) - \cos((\vert T\vert-1)\pi / n ), \;  \text{ if $\vert T\vert$ is even.}
\end{align*}
%\end{linenomath}
This completes the proof.
\end{proof}

Given a point $p$ outside the unit disk centered at the origin, we define $V_p$ to be the open wedge defined by the rays starting at $p$ and tangent to the unit circle, and containing the origin. 
We refer to Figure~\ref{fig:lemmaVp} for an illustration of the following lemma.

\begin{lemma}
\label{wedge}
For any $p \in K_\lambda(n)$ with $\lambda>1$, it holds \[\vert (K_\lambda(n) \cap V_p)  \vert \leq \frac{2n}{\pi} \arcsin\left( \frac{1}{\lambda}\right)+1.\]
\end{lemma}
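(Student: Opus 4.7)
The plan is to use the rotational symmetry of $K_\lambda(n)$ together with an elementary trigonometric computation (via the inscribed angle theorem) to locate the intersection of $V_p$ with the circle $|z|=\lambda$, and then exploit the uniform angular spacing of the points of $K_\lambda(n)$ on that circle.

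First, since $K_\lambda(n)$ is invariant under rotation around the origin $O$ by multiples of $2\pi/n$, I can assume without loss of generality that $p=(\lambda,0)$. I would then compute the aperture of the wedge $V_p$ at its vertex $p$ by considering the right triangle $OTp$, where $T$ is a point of tangency from $p$ to the unit circle: since $|OT|=1$, $|Op|=\lambda$ and the angle at $T$ is right, the angle $\angle OpT$ equals $\arcsin(1/\lambda)=:\alpha$, so the total aperture of $V_p$ at $p$ is $2\alpha$.

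Next, let $Q_1,Q_2\neq p$ be the second intersection points of the two tangent rays from $p$ with the circle of radius $\lambda$ centered at $O$. Because the wedge contains the origin, an easy convex-geometric check shows that $K_\lambda(n)\cap V_p$ is contained in the arc of the circle $|z|=\lambda$ bounded by $Q_1$ and $Q_2$ that does not contain $p$. To measure this arc, I would apply the inscribed angle theorem inside the circle $|z|=\lambda$: since $p$, $Q_1$, $Q_2$ all lie on this circle and the inscribed angle $\angle Q_1 p Q_2 = 2\alpha$, the central angle of the arc opposite to $p$ is exactly $4\alpha = 4\arcsin(1/\lambda)$.

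Finally, the points of $K_\lambda(n)$ are evenly distributed on the circle $|z|=\lambda$ with angular step $2\pi/n$, so any arc of central angle $L$ carries at most $\lfloor L n/(2\pi)\rfloor + 1$ of them. Substituting $L = 4\arcsin(1/\lambda)$ yields the claimed upper bound
\[
|K_\lambda(n)\cap V_p|\le \frac{2n}{\pi}\arcsin\!\left(\frac{1}{\lambda}\right)+1.
\]
The only step requiring care is verifying that every intersection of $V_p$ with the circle of radius $\lambda$ falls on the arc opposite to $p$; this is the main (mild) obstacle, and it follows from the fact that past each tangency point the boundary rays of $V_p$ re-enter the disk of radius $\lambda$ before exiting at $Q_1$ respectively $Q_2$.
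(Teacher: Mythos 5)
Your proof is correct and takes essentially the same route as the paper: compute the aperture $2\arcsin(1/\lambda)$ of $V_p$, observe that $V_p$ cuts the circle of radius $\lambda$ in an arc of central angle $4\arcsin(1/\lambda)$, and count the equally spaced points on that arc. The only difference is that you justify the central-angle computation explicitly via the inscribed angle theorem, a step the paper simply asserts.
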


\begin{proof}
The angle of $V_p$ is $2  \arcsin\left( \frac{1}{\lambda}\right)$. Consider the circle in which $K_\lambda(n)$ is inscribed. 
This is cut by $V_p$ in a circular arc. 
This arc is seen from the origin with an angle of $4 \arcsin\left( \frac{1}{\lambda}\right)$. Since $K_\lambda(n)$ is a regular polygon, two consecutive points are seen from the center with an angle of $2 \pi /n$. Therefore, at most $1+4 \arcsin\left( \frac{1}{\lambda}\right) \frac{n}{2\pi}$ points of $K_\lambda(n)$ can lie in the arc.
\end{proof}

\begin{figure}[!htb]
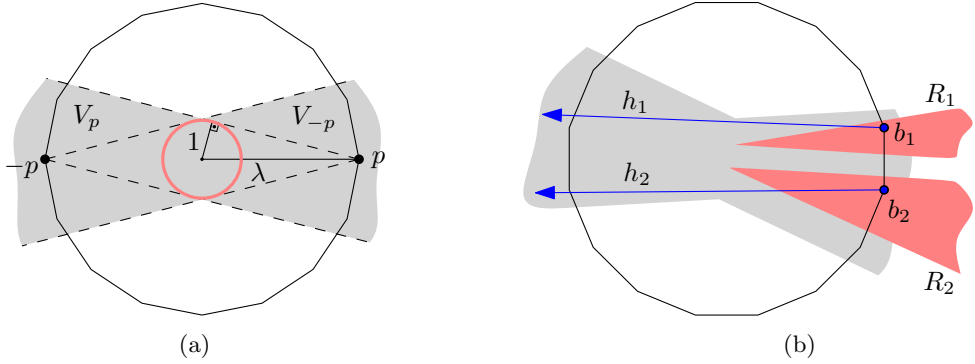

	\centering
	\hfill
	\subfigure[]{\includegraphics[scale=1,page=3]{halflinesFig2.pdf} \label{fig:lemmaVp}}
	\hfill
	\subfigure[]{\includegraphics[scale=1,page=4]{halflinesFig2.pdf}\label{constrNoUniv}}
	\hfill
	\caption{Drawings for Lemma~\ref{wedge} and Theorem~\ref{theo:no-n/3}}
\end{figure}

We prove now Theorem~\ref{theo:no-n/3}; see Figure~\ref{constrNoUniv}.

\bigskip
\noindent {\bf Proof of Theorem~\ref{theo:no-n/3}.}
Note first that it must be the case that $k,l \geq 3$ since there are at least three blocks of each color.
Let $R=K_1(n)$ and $B=K_\lambda(n)$ with $\lambda > 1$, and $\mathcal{C}_R$ and $\mathcal{C}_B$ be the circles containing $R$  and $B$, respectively.
Using Lemma~\ref{minwidth}, it is easy to see that if $\lambda$ is smaller than 
\[g(n)=\left[\cos\left(\frac{\pi}{n} \right)-\cos\left(\frac{(  \lceil \frac{n}{k} \rceil-1) \pi}{ n} \right) \right] \left[ 2 \sin\left(\frac{\pi}{n}\right) \right]^{-1},\]
the rays emanating from a subset $R_1 \subset R$ realizing a block of the configuration will have to cross at least two arcs of $\mathcal{C}_B$ between points of $B$, since the first factor is a lower bound for the width of $R_1$ and the distance between two consecutive points of $B$ is $2 \lambda \sin(\pi/n)$.
Therefore, the ray from at least one point $b_1 \in B$ will have to intersect $\mathcal{C}_R$ because, otherwise, it would split the block of $R_1$. Let $b_1, b_2, b_3 \in B$ be points belonging to three different sets $R_1,R_2,R_3 \subset R$ realizing each of them a red block.
Note now that it has to be $b_2,b_3 \in V_{b_1} \cup V_{-b_1}$, where $-b_1$ indicates the point in $\mathcal{C}_B$ symmetric to $b_1$ with respect to the origin, since the rays $h_2$ placed at $b_2$ and $h_3$ placed at $b_3$ should split $R$ and they should not intersect the ray $h_1$ placed at $b_1$.
Observe that either $V_{b_1}$ or $V_{-b_1}$ must contain at least two of the points $b_1$, $b_2$ and $b_3$.
Assume these to be $b_1$ and $b_2$. 
Note then that only the points from the two arcs of $\mathcal{C}_B$ between $h_1$ and $h_2$ can realize a blue block between $R_1$ and $R_2$.
With the help of Lemma~\ref{wedge} to bound the number of points of $B$ in these arcs, we have that if $\lambda$ is larger than 
\[f(n)= \left[ \sin \left( \frac{\pi}{4n} \left(   \left  \lceil \frac{n}{l} \right \rceil -2 \right) \right) \right]^{-1},\]
no block of $B$ can be realized between $R_1$ and $R_2$.
Thus, for $n$ and $\lambda$ such that $1 < f(n)< \lambda < g(n),$ the configuration is not feasible.
Since $g(n) \rightarrow \infty$ and \mbox{$f(n) \rightarrow [\sin(\pi/4l)]^{-1}>1$}, the counterexample can be certainly constructed. \hfill $\square$

\subsection{Deciding feasibility of configurations} 
\label{subsec:deciding}

In this section we study algorithms to decide if a given configuration can be realized for a given point set. 
We start by studying the case of points on a line, and then we focus on the case of points in general position.

The following algorithm is an adaptation of an algorithm by Akiyama and Urrutia~\cite{AU907} for deciding, given $n$ red and $n$ blue points on a circle, whether they admit a simple Hamiltonian polygonal path in which the colors of the vertices alternate.

\begin{theorem}
Given a bichromatic point set $S=R\cup B$ on a line~$\ell$ and a
configuration $C$, it can be decided in $O(n^2)$ time whether $C$ is feasible for~$S$.
\end{theorem}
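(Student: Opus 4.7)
The plan is to characterize combinatorially which configurations are induced at infinity by non-crossing rays from collinear apices, and then reduce the feasibility question to a two-dimensional reachability problem solvable by dynamic programming.

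I would first orient $\ell$ horizontally, index the points of $S$ as $p_1, \ldots, p_{2n}$ from left to right with colors $c_1, \ldots, c_{2n}$, and write $C = (c^*_0, c^*_1, \ldots, c^*_{2n-1})$. The first step is a structural lemma: apart from the two extreme points $p_1, p_{2n}$, no point can shoot a horizontal ray without piercing another apex, so every ray is either an \emph{up-ray} (angle in $(0,\pi)$) or a \emph{down-ray} (angle in $(-\pi,0)$), with horizontal rays from $p_1$ and $p_{2n}$ handled as $O(1)$ boundary cases. A short trigonometric computation shows that two up-rays from $p_i, p_j$ with $i<j$ are disjoint iff the angle at $p_i$ strictly exceeds the angle at $p_j$, and symmetrically for down-rays. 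Since the clockwise order at infinity corresponds to decreasing angle, the cyclic sequence at infinity, read clockwise starting from the westward direction, is precisely the colors of the up-points in left-to-right order followed by the colors of the down-points in right-to-left order.

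Consequently, $C$ is feasible iff there exist an index $s \in \{0,\ldots,2n-1\}$ and a partition $U \sqcup D = [2n]$ such that the colors of $U$ in left-to-right order spell $c^*_s c^*_{s+1} \cdots c^*_{s+|U|-1}$, and those of $D$ in left-to-right order spell $c^*_{s-1} c^*_{s-2} \cdots c^*_{s-|D|}$ (all indices mod $2n$). To check this, I would picture two tokens walking along $C$ from a common starting gap $s$, one clockwise and one counterclockwise; at the $t$-th step one of the tokens advances by one position, and the character it consumes must equal $c_t$. Encoding the state as $(t, \alpha)$ with $\alpha$ the current gap of the clockwise token (the counterclockwise token sits automatically at $\beta = (\alpha - t) \bmod 2n$) yields $O(n^2)$ states, each with two possible transitions: $(t, \alpha) \to (t+1, \alpha+1 \bmod 2n)$ when $c^*_\alpha = c_{t+1}$, and $(t, \alpha) \to (t+1, \alpha)$ when $c^*_{(\alpha - t - 1) \bmod 2n} = c_{t+1}$. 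Marking every $(0, s)$ as initial and propagating reachability in increasing $t$ determines whether some $(2n, \alpha)$ is reachable, and the whole procedure runs in $O(n^2)$ time.

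The main obstacle will be the structural lemma: translating the global non-crossing condition into the local linear/reverse-linear description of the cyclic sequence at infinity must be done carefully, and the degenerate cases where $U$ or $D$ is empty (or involves a horizontal ray at $p_1$ or $p_{2n}$) must be folded into the initial states. Once the characterization is in hand, the dynamic program itself is a routine reachability computation, very much in the spirit of the Akiyama--Urrutia scheme for the Hamiltonian-path variant on a circle.
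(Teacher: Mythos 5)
Your proposal is correct and follows essentially the same route as the paper: both reduce the problem to choosing, for each point from left to right, an upward or downward (in the paper, vertical after perturbation) ray, so that the assigned points grow a contiguous circular window of $C$ at one end or the other, and then run a reachability/dynamic-programming computation over the $\Theta(n^2)$ states given by the window's length and position, with out-degree at most two per state. The only cosmetic difference is that you derive the up/down structure via an explicit angle-monotonicity lemma where the paper invokes a perturbation to vertical rays, but the underlying characterization and the $O(n^2)$ algorithm coincide.
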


\begin{proof}
Without loss of generality, assume that $\ell$ is horizontal, and let $S=\{ p_1,\dots,p_{2n} \}$, where the indices are taken from left to right.
Note that any realization from $S$ can be perturbed such that all the rays are vertical.
The point $p_1$ must realize a position of its same color in $C$. 
Then, $p_2$ will realize either the previous position of the configuration or the next one, depending on whether the corresponding ray is pointing downwards or upwards.
One, two or none of the previous options will be valid depending on whether the color of the previous and next positions of $C$ match the color of $p_2$.
In this way, when we traverse $S$ from left to right, choosing the upward or the downward ray for each point, we may be realizing a subsequence of consecutive elements in $C$.

Consider the directed graph having a node for each one of the $\Theta(n^2)$ subsequences of $C$. 
Note that we consider as different two equal red-blue patterns if they start at different positions of $C$. 
We add an arc from a node corresponding to a subsequence of length $k \geq 0$ to a node corresponding to a subsequence of length $k+1$ if the second subsequence can be obtained from the first one by attaching the color of $p_{k+1}$ before or after it.
It is clear that a configuration is feasible for $S$ if and only if there exists a path from the empty sequence to some of the $2n$ linear subsequences of $C$ of length $2n$ in the aforementioned directed graph.
Since the out-degree of every node is at most $2$, the size of the graph is quadratic and the decision can be made in $O(n^2)$ time.
\end{proof}

For the general setting, the decision question can also be answered in polynomial time. Next, we describe a polynomial algorithm to decide whether a configuration $C$ is feasible from a given point set $S$. We explain the algorithm assuming that $S=\{p_1,\ldots,p_n \}$ is a point set in (strong) general position and $C$ is a cyclic permutation of $[n]$. The algorithm decides in $O(n^{11})$ time and $O(n^9)$ space if $C$ is feasible from $S$, using pairwise-disjoint rays, and it works in a similar manner, regardless the number of colors used to color the points and the number of points of each color.  We do not give all the details of the algorithm, but only the main ideas. The rest of this section is devoted to prove the following theorem.

\begin{theorem}
	\label{prop:convexpos}
	Given a point set $S$ in general position and a configuration $C$, it can be decided in polynomial time whether $C$ is feasible from $S$.
\end{theorem}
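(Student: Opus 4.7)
The plan is to design a polynomial-time dynamic program that peels off rays one at a time, recursing into wedge-shaped sub-problems. I would first set up the algorithm for the labeled case where $C$ is a cyclic permutation of $[n]$ (specifying which point of $S$ realizes each position) and then adapt it to the bichromatic version by letting the matching between points and positions be chosen by the DP, subject only to color constraints.

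The key combinatorial reduction is the following. Although the direction of a ray from an apex $p \in S$ ranges over a continuum, only finitely many combinatorial types matter: two rays from $p$ are equivalent if they occupy the same angular slot relative to the $n-1$ directions determined by $p$ and the other points of $S$. There are $O(n)$ types per apex and $O(n^2)$ in total, and a realization is determined, up to infinitesimal perturbation, by one type for each point; the pairwise-disjointness condition and the cyclic order at infinity depend only on these types.

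The DP state is a triple $(h_1, h_2, [i,j])$, where $h_1, h_2$ are combinatorial ray types with apices in $S$ and $[i,j]$ is a cyclic sub-interval of positions of $C$ bounded by the positions of $h_1$ and $h_2$. The state is true iff positions $i, i+1, \ldots, j$ can be realized by pairwise-disjoint rays emanating from the subset of $S$ contained in the wedge-like region bounded by $h_1$ and $h_2$ on the side where these positions appear at infinity, with all such rays compatible with $h_1$ and $h_2$. The transition picks the apex $p$ that realizes position $i$ (determined by $C$ in the labeled case; chosen from points of the correct color lying in the region in the bichromatic case) and a compatible combinatorial type $h_p$ for its ray, then recurses on $(h_p, h_2, [i+1, j])$. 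The base case $i > j$ is true iff no point of $S$ lies in the associated region. The global problem is solved by enumerating an initial ray and checking the resulting state that wraps around $C$.

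The main obstacle I expect is establishing the decomposition lemma that underlies correctness: every valid non-crossing realization must admit a split of this form, in which the wedge associated to $(h_1, h_2)$ contains exactly the points that realize positions in $[i,j]$, and, conversely, combining valid sub-realizations must yield a valid global realization. Making the notion of ``wedge'' depend only on the combinatorial types of $h_1$ and $h_2$ is delicate because the rays start at interior apices and extend to infinity, so one must verify that a point's membership in the region is determined by its position relative to the lines through pairs of points of $S$ and by the types themselves. Once this is in place, polynomiality follows by counting: each combinatorial ray type is specified by a constant number of reference points in $S$, so the state space is polynomial, and each transition considers polynomially many candidates. A careful accounting of these factors (together with the bookkeeping needed to verify compatibility of $h_p$ with $h_1, h_2$ and to identify the points inside each wedge) yields the $O(n^{11})$-time and $O(n^9)$-space bounds claimed.
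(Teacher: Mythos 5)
Your overall strategy coincides with the paper's: reduce to a polynomial set of canonical ray directions (the paper uses Lemma~\ref{lem:canonical}, which restricts to direction vectors in $V=\{p-q \mid p,q\in S\}$), then run a dynamic program over states consisting of a pair of bounding rays together with a sub-interval of $C$. However, the step you yourself flag as ``the main obstacle'' is precisely where the substance of the proof lies, and your proposal does not supply the idea that overcomes it. Two disjoint rays with apices in the interior of $S$ do \emph{not} bound a region of the plane: their complement is connected, so ``the wedge-like region bounded by $h_1$ and $h_2$'' is not determined by the two combinatorial types alone. The paper resolves this by introducing three distinct region types with explicitly closed-off boundaries: $\Pi$-regions (two rays from different apices plus the \emph{segment} joining the apices), $\Sigma$-regions (a ray, the \emph{line} supporting a second ray, and a third ray sharing the first apex), and $\Lambda$-regions (two rays from a common apex). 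A single state type $(h_1,h_2,[i,j])$ cannot express all the sub-problems that arise; in particular, the $\Sigma$ case is forced exactly when the apex of the next ray cannot be connected to the previous apex by an uncrossed segment.

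A second, related gap is the transition and the top-level start of the recursion. Your transition peels off the ray realizing position $i$ and recurses on $(h_p,h_2,[i+1,j])$, but the apex $p$ can lie deep inside the current region, so the segment (or line) needed to close off the new region may cross other rays or misclassify points; the paper instead sweeps a ray around the current apex until it hits the \emph{first} point $p_k$, which guarantees an empty triangle and hence a legitimate boundary segment, and splits the interval at a geometrically determined place rather than at $i+1$. At the top level, ``enumerate an initial ray and wrap around'' works only when the realization is separable by a line (the paper enumerates $O(n^3)$ separating lines for that case); for non-separable realizations one needs the structural Lemma~\ref{lem:decomp} (every non-separable canonical set contains three rays spanning an empty triangle or quadrilateral) to obtain the initial decomposition into three sub-problems. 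Without these two ingredients the dynamic program is not well defined and its correctness cannot be argued, so the proposal as written does not constitute a proof.
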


\subsubsection{Rays in canonical position}

First, we state two lemmas from~\cite{GHTU07} and some definitions that will be useful.

\begin{lemma}{(\cite{GHTU07})}\label{lem:canonical}
Let $S=\{p_1,\ldots,p_n \}$ be a point set and $C$ be a cyclic permutation of $[n]$.
There is a set $H$ of pairwise-disjoint rays from $S$ realizing~$C$ if and only if there is a set $H'$ of non-crossing rays from $S$, having direction vectors in $V= \{ p-q \ | \ p, q \in S, p \neq q \}$, that realize $C$ as well.
\end{lemma}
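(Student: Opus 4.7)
The plan is to prove both directions of the equivalence. The direction $(\Leftarrow)$ is the easier one. Given a non-crossing set $H'$ with direction vectors in $V$ realizing $C$, I would apply an infinitesimal generic rotation $\varepsilon_i$ to each ray $h_i'$ around its apex $p_i$. A sufficiently small perturbation preserves the cyclic order of the rays at infinity, hence preserves the configuration $C$. Genericity of the $\varepsilon_i$ ensures that no ray passes through another apex (which would require an exact direction in $V$ shared with a pair of points in $S$) and no two rays are collinear or meet at a non-apex point. Hence the perturbed family is pairwise disjoint and still realizes $C$.

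The direction $(\Rightarrow)$ is the heart of the lemma. Given pairwise-disjoint $H$ realizing $C$, the plan is to rotate the rays one by one into canonical positions with direction in $V$, maintaining at each step that the family is non-crossing and realizes $C$. Concretely, for a fixed ray $h_i \in H$, holding the remaining rays fixed, I rotate $h_i$ continuously around $p_i$. Except at discrete critical directions, both the non-crossing property and the cyclic order at infinity are preserved. The critical directions are of two types: (a) the direction $p_j - p_i$ for some $j \neq i$, at which $h_i$ passes through the apex $p_j$; and (b) a direction coinciding with that of another ray $h_j$, at which $h_i$ and $h_j$ become parallel and the cyclic order at infinity may change. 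Rotate $h_i$ in a fixed sense until the first critical event. If the event is of type (a), the new direction of $h_i$ equals $p_j-p_i\in V$, and we have placed $h_i$ in canonical position. If the event is of type (b), the new direction of $h_i$ equals that of $h_j$; by first placing $h_j$ in canonical position (so that its direction is already in $V$), this also lies in $V$.

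The main obstacle is that rotating $h_i$ shifts the arcs of valid directions for the other rays, so the iterative process must be ordered carefully to avoid cycles. I anticipate handling this by processing the rays in an order consistent with a linear extension of the ``becomes parallel to'' relation that arises during the rotations, or alternatively by a continuous deformation argument within the semi-algebraic set of direction tuples $(d_1,\dots,d_n) \in (S^1)^n$ that realize $C$ with non-crossing rays. This set is open in the complement of the arrangement of great circles $\{d_i = v\}$ for $v\in V$ and $\{d_i = d_j\}$, and the closure of each of its connected components must meet the locus where every $d_i$ lies in $V$, because each component is bounded in each coordinate by two consecutive directions of this arrangement. Taking any tuple in the closure that meets the hyperplane arrangement in every coordinate yields the desired non-crossing family $H'$ with directions in $V$.
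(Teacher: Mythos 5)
Your ``easy'' direction $(\Leftarrow)$ is not correct as written. Genericity of the $\varepsilon_i$ rules out \emph{degenerate} incidences, but a transversal crossing at a non-apex point is a \emph{generic} event, not a degenerate one, so it is not excluded by choosing the $\varepsilon_i$ generically. Concretely, since $H'$ is only non-crossing, a ray $h_i'$ may pass through the apex $p_j$ of another ray $h_j'$ (take $p_i=(0,0)$, $h_i'$ pointing east through $p_j=(1,0)$, and $h_j'$ pointing north): rotating $h_i'$ counterclockwise by \emph{any} small $\varepsilon_i>0$ makes it cross $h_j'$, whichever small rotation is applied to $h_j'$; so independent generic rotations produce a crossing for half of the sign choices. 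Similarly, two parallel rays with the same direction vector, tilted generically toward each other, cross \emph{and} swap their order at infinity, so even the configuration $C$ can change: the cyclic order at infinity is discontinuous at parallel tuples, contrary to your claim that any sufficiently small perturbation preserves it. This is why the paper (quoting \cite{GHTU07}) perturbs \emph{coherently}: every ray is rotated in the same sense (counterclockwise) by the same small angle, and this works because the canonical position was itself produced by a \emph{clockwise} rotation process, so the incident and parallel pairs are guaranteed to separate on the correct side. The choice of a common rotation sense, which your argument treats as irrelevant, is essentially the whole content of this direction.

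For $(\Rightarrow)$ you identify the right critical events, and you honestly flag the real obstacle (a ray may stop by becoming parallel to a ray not yet canonicalized), but neither of your proposed fixes closes it. A ``linear extension of the becomes-parallel-to relation'' is not available in advance: the relation only emerges as the rotations are performed, and it can be cyclic. The semi-algebraic fallback asserts exactly what has to be proved: a connected component of valid direction tuples in $(S^1)^n$ is not a product of arcs, so it is not ``bounded in each coordinate by two consecutive directions''; its boundary points may lie on walls $\{d_i=d_j\}$ whose common direction is \emph{not} in $V$, which is precisely the case your argument cannot rule out. The proof in \cite{GHTU07}, as sketched in the paper, supplies the missing idea: rotate \emph{all} rays clockwise \emph{simultaneously} at equal angular speed; when rays become parallel they are merged into a bundle that continues rotating together, and a bundle stops only when one of its rays hits a second point of $S$ --- at that moment every ray of the bundle has that same direction, which lies in $V$. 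Simultaneity plus merging is what guarantees termination with all directions in $V$ while preserving non-crossingness and the configuration throughout; your one-ray-at-a-time scheme cannot provide these invariants.
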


The set $H'$ of rays in the previous lemma is obtained from $H$ by rotating each ray clockwise until it hits a point of $S$ or it becomes parallel to another ray (and they continue rotating in a parallel manner).
The non-crossing rays resulting from the aforementioned perturbation are said to be in {\it canonical position} (see Figure \ref{fig:separable} for examples). 
The configuration induced by a set of non-crossing rays in canonical position is the same as the one induced by a set of rays resulting from rotating counterclockwise every ray a small angle, which is a set of pairwise-disjoint rays.
 
In view of the previous lemma, to decide whether a configuration $C$ can be realized or not, we can restrict ourselves to study sets of rays with direction vector in $V$, that is, sets of rays in canonical position.

\begin{figure}[!htb]
	\centering
	\includegraphics[scale =1]{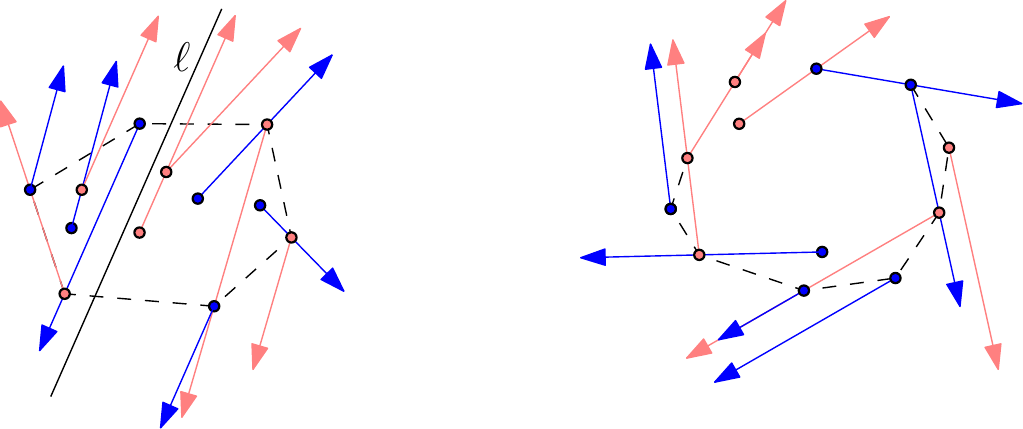}
	\caption{Separable and non-separable sets of rays in canonical position.}\label{fig:separable}
\end{figure}

A set of rays in canonical position is {\it separable} when there exists a line $\ell$ that does not intersect any ray. If such a line does not exists, we say that the set is {\it non-separable}. Note that, in a non-separable set, the extension of any ray $h$ in the opposite direction always hits another ray $h'$. Otherwise, we could take the line supporting $h$, infinitesimally translated, for a separator. In Figure \ref{fig:separable}, a separable (left) and a non-separable (right) set of rays in canonical position are shown.

\begin{lemma}{(\cite{GHTU07})}\label{lem:decomp}
	Let $H=\{h_1, \ldots , h_n\}$ be a set of non-crossing rays $H$ in canonical position from $S$, where $h_i$ has apex $p_i\in S$ for all $i\in [|S|]$.
	If $H$ is non-separable, then one of the following statements holds.
	\begin{itemize}
		\item[(i)] There are three points $p_i,p_j,p_k \in S$, in clockwise order (in their convex hull $\triangle p_ip_jp_k$), such that no ray crosses $\triangle p_ip_jp_k$, and the clockwise angles defined by consecutive rays emanating from them are less than $\pi$ (see Figure~\ref{fig:non-separable} (left) for an example).
		\item[(ii)] There are three points $p_i,p_j,p_k \in S$, in clockwise order (in their convex hull $\triangle p_ip_jp_k$), such that the clockwise angles defined by the consecutive rays emanating from them are less than $\pi$, and no ray crosses the quadrilateral $q_iq_jp_jp_k$, where $q_j$ is the crossing point between $h_j$ and the extension of $h_i$ in the opposite direction, and $q_i$ is the crossing point between the extensions of $h_i$ and $h_k$ in the opposite directions (see Figure~\ref{fig:non-separable} (right) for an example).
	\end{itemize}
\end{lemma}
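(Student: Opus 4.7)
The plan is to use non-separability to construct a canonical ``backward dependency'' between the rays, and then extract a minimal geometric witness that realizes either case (i) or case (ii). For each ray $h_i$, since $H$ is non-separable, the backward extension of $h_i$ from its apex $p_i$ must cross some other ray (otherwise a line parallel to $h_i$ shifted infinitesimally would separate $H$, contradicting non-separability). Let $\sigma(i)$ denote the index of the first ray hit and $q_i \in h_{\sigma(i)}$ the corresponding first-hit point. After an infinitesimal perturbation allowed by general position, the map $\sigma\colon[n]\to[n]$ is well-defined, has no fixed points (a ray cannot hit itself), and no three ray-lines concur.

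The functional graph of $\sigma$ must contain at least one directed cycle, since it is finite and every vertex has out-degree one. Pick such a cycle $(i_1, i_2, \ldots, i_t)$ of minimum length, where $\sigma(i_s)=i_{s+1}$ (indices mod $t$). Since $\sigma$ is fixed-point-free, $t\geq 2$. The idea driving the analysis is that a short cycle witnesses a small bounded region enclosed by rays and their backward extensions, and this region supplies the triangle or quadrilateral claimed.

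If $t\geq 3$, take three consecutive apices $p_{i_{s-1}}, p_{i_s}, p_{i_{s+1}}$ and relabel them $p_a, p_b, p_c$ in their clockwise order on their convex hull. I would argue that they realize case (i): each blocking relation $\sigma(i_s)=i_{s+1}$ forces the clockwise angle from $h_{i_s}$ to $h_{i_{s+1}}$ (with a common origin) to lie in $(0,\pi)$, and combining the three inequalities around the triangle gives the required angle condition. Emptiness of $\triangle p_a p_b p_c$ then follows from the minimality of $t$: any ray $h_m$ crossing the triangle would, through its own blocking relation $\sigma(m)$, produce a strictly shorter cycle, contradicting minimality. If $t=2$, then $h_{i_1}$ and $h_{i_2}$ mutually block each other's backward extensions at points $q_{i_1}\in h_{i_2}$ and $q_{i_2}\in h_{i_1}$; non-separability forces a third ray to reach into the wedge where a separating line could otherwise be placed. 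Selecting a ``nearest'' third apex $p_k$ in that wedge produces the quadrilateral $q_iq_jp_jp_k$ of case (ii), with the angle condition inherited from the mutual blocking and emptiness enforced by the nearest-apex choice.

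The main obstacle I foresee is verifying rigorously the ``less than $\pi$'' angle condition through both the extraction from $\sigma$ and the minimality arguments, because canonical position only allows discrete direction adjustments (from the finite set $V$). Concretely, one must check that whenever an intruding ray forces us to exchange an apex of the witness, the clockwise orientation of the three resulting rays around the new (smaller) triangle or quadrilateral is preserved. This should follow from the observation that, by the very definition of $\sigma$, an intruding ray $h_m$ must enter the region across a specific boundary edge and point out of the region on the opposite side, so its apex inherits a compatible orientation; formalizing this requires a careful case analysis on which boundary the intruder crosses.
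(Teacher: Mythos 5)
You should know at the outset that the paper does not prove Lemma~\ref{lem:decomp} at all: it is quoted from~\cite{GHTU07}, so your argument has to stand entirely on its own. Your first step is sound, and is in fact the observation the paper itself records when defining non-separability (if the backward extension of some ray hit no other ray, a translate of its supporting line would separate $H$). The trouble starts with the claims you build on the blocking map $\sigma$. The assertion that $\sigma(i_s)=i_{s+1}$ forces the clockwise angle from $h_{i_s}$ to $h_{i_{s+1}}$ to lie in $(0,\pi)$ is false: take $h_a$ with apex at the origin pointing east and $h_b$ with apex $(-1,-1)$ pointing north; the two rays are disjoint, the backward extension of $h_a$ hits $h_b$ at $(-1,0)$, yet the clockwise angle from east to north is $3\pi/2$. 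A ray can cross a backward extension coming from either side (replace $h_b$ by its mirror image through the $x$-axis to get the other case), so no orientation convention rescues the claim, and the step ``combining the three inequalities around the triangle'' has nothing to combine. Worse, the conclusion you want in the case $t\ge 3$ is itself wrong for arbitrary consecutive triples: for the pinwheel with apices $(1,0),(0,1),(-1,0),(0,-1)$ and rays pointing north, west, south and east, respectively, the set is non-crossing, in canonical position and non-separable, and $\sigma$ is a single $4$-cycle; but three consecutive apices of that cycle carry directions such as north, east, south, whose consecutive clockwise angles are $\pi/2$, $\pi/2$ and $\pi$, violating the strict bound, and a generic perturbation pushes the wrap-around angle of half of the consecutive triples strictly above $\pi$. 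So the witness cannot be ``any three consecutive cycle elements,'' and you give no rule for selecting a correct triple.

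The remaining two steps also have genuine gaps. Emptiness ``by minimality of $t$'' does not follow: in the functional graph of $\sigma$ every vertex has out-degree one, but most vertices lie on in-trees hanging off the cycles, so a ray $h_m$ crossing your candidate triangle need not lie on any cycle at all, let alone a shorter one; no mechanism is provided that converts an intruding ray into a shorter cycle. Finally, the case $t=2$ does not produce the structure required by item (ii): there, $q_i$ is the crossing point of the \emph{backward extensions} of $h_i$ and $h_k$, whereas mutual blocking gives you crossings of a backward extension with the other \emph{ray} itself; the choice of a ``nearest'' third apex, the angle condition for the resulting triple, and the emptiness of the quadrilateral $q_iq_jp_jp_k$ are all asserted rather than proved. (There is also a smaller issue: you cannot ``infinitesimally perturb'' to make $\sigma$ well defined, since canonical position ties the ray directions to the finite set $V$ determined by $S$.) In short, the skeleton---blocking map, cycle extraction, minimal witness---is a reasonable idea, but each of its three substantive steps (angle condition, emptiness, and the $t=2$ reduction) has a hole, and the first two are contradicted by the explicit examples above.
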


\begin{figure}[!htb]
	\centering
	\includegraphics[scale=1]{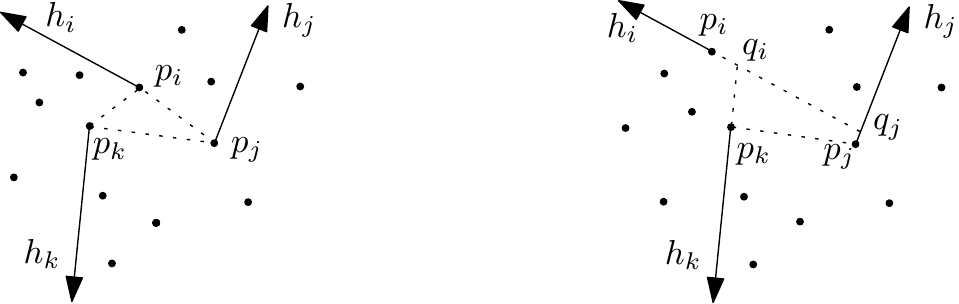}
	\caption{Illustration of Lemma \ref{lem:decomp}.}\label{fig:non-separable}
\end{figure}

In the next subsection, we define formally the types of regions that rays $h_i,h_j,h_k$ and segments $p_ip_j, p_jp_k, p_kp_i$ of the previous lemma define.

\subsubsection{$\Pi$-feasible, $\Sigma$-feasible and $\Lambda$-feasible tuples}

Given a configuration $C$, a {\it subconfiguration} of $C$ is a subsequence of $C$ formed by (cyclically) consecutive positions of $C$. Given $p_i,p_j \in S$, $p_i \ne p_j$, and $u,v \in V$, we say that the tuple $(p_i,p_j,u,v)$ is a \emph{$\Pi$-tuple} if $h(p_i,u)$ and $h(p_j,v)$ are non-crossing and the clockwise angle between $u$ and $v$ is less than $\pi$.
Given a $\Pi$-tuple $(p_i,p_j,u,v)$, let $S(p_i,p_j,u,v)$ be the set of points of $S$ contained in the region $R(p_i,p_j,u,v)$ bounded by $h(p_i,u)$ (included), $h(p_j,v)$ (excluded) and the segment $p_ip_j$, which contains a ray with direction $(u+v)/2$. See Figure~\ref{fig:tipo1}.
Given a subconfiguration $C'$ of $C$, we say that a tuple $(p_i,p_j,u,v,C')$ is \emph{$\Pi$-feasible} if $C'$ can be realized by a set of rays $H'$ in canonical position from $S(p_i,p_j,u,v)$ where each ray of $H'$ is contained in $R(p_i,p_j,u,v)$. Note that $p_i$ belongs to $S(p_i,p_j,u,v)$ but $p_j$ does not.

\begin{figure}[!htb]
	\centering
	\includegraphics[scale=1]{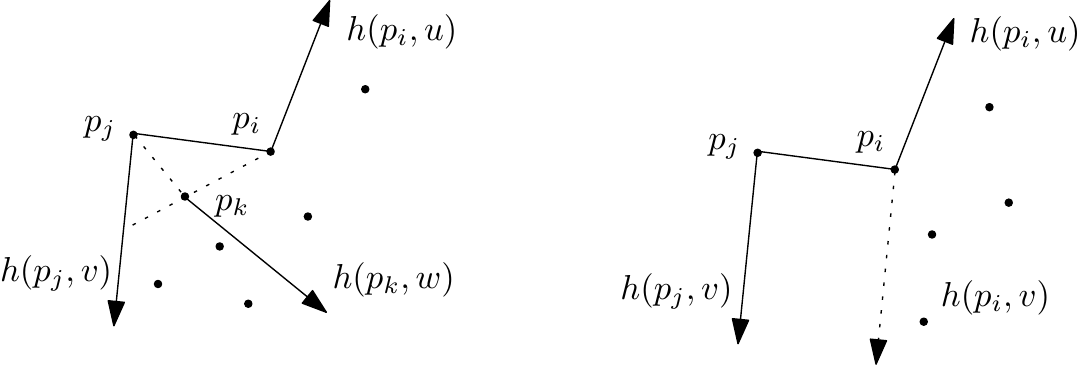}
	\caption{Decomposing a $\Pi$-tuple $(p_i,p_j,u,v)$ into two $\Pi$-tuples consisting of fewer points (left) or into one $\Lambda$-tuple (right).} \label{fig:tipo1}
\end{figure}

Given $p_i,p_j \in S$, $p_i \ne p_j$, and $u,v,w \in V$, we say that the tuple $(p_i,p_j,u,v,w)$ is a \emph{$\Sigma$-tuple} if $h(p_i,u)$ and $h(p_j,w)$ are non-crossing, the clockwise cyclic order of $h(p_i,u), h(p_i,v), h(p_j,w)$ is $h(p_i,u), h(p_j,w), h(p_i,v)$, the clockwise angle between $u$ and $v$ is less than or equal to $\pi$, and the ray $h(p_i,v)$ crosses the line supporting $h(p_j,w)$.
Given a $\Sigma$-tuple $(p_i,p_j,u,v,w)$, let $S(p_i,p_j,u,v,w)$ be the set of points of $S$ contained in the (convex) region $R(p_i,p_j,u,v,w)$ bounded by $h(p_i,u)$ (included), the line supporting $h(p_j,w)$ (excluded), and $h(p_i,v)$, which contains a ray with direction $(u+w)/2$. See Figure~\ref{fig:tipo2}.
Given a subconfiguration $C'$ of $C$, we say that a tuple $(p_i,p_j,u,v,w,C')$ is \emph{$\Sigma$-feasible} if $C'$ can be realized by a set of rays $H'$ in canonical position from $S(p_i,p_j,u,v,w)$ where each ray of $H'$ is contained in $R(p_i,p_j,u,v,w)$.

\begin{figure}[!htb]
	\centering
	\includegraphics[scale=1]{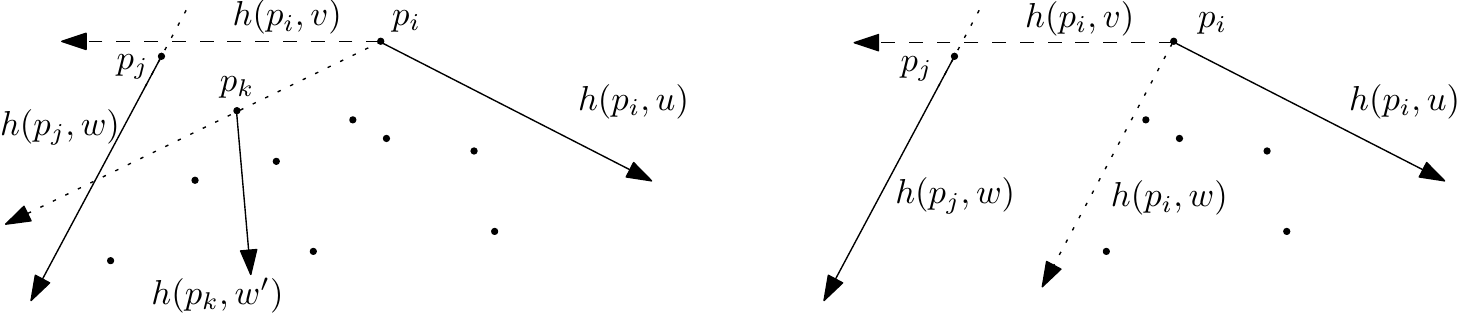}
	\caption{Decomposing a $\Sigma $-tuple $(p_i,p_j,u,v,w)$ into a $\Pi$-tuple and a $\Sigma$-tuple consisting of fewer points (left), or into one $\Lambda$-tuple (right).}\label{fig:tipo2}
\end{figure}

If in the definition of $\Pi$-tuples we allowed $p_i=p_j$, we would obtain what we call $\Lambda$-tuples, defined formally as follows.
Given $p_i \in S$ and $u\ne v \in V$, we say that the tuple $(p_i,u,v)$ is a \emph{$\Lambda$-tuple} if the clockwise angle between $u$ and $v$ is less than $\pi$.
Given a $\Lambda$-tuple $(p_i,u,v)$, let $S(p_i,u,v)$ be the set of points of $S$ contained in the (convex) region $R(p_i,u,v)$ bounded by $h(p_i,u)$ (included) and $h(p_i,v)$ (excluded), which contains a ray with direction $(u+v)/2$. See Figure~\ref{fig:tipo3}.
Given a subconfiguration $C'$ of $C$, we say that a tuple $(p_i,u,v,C')$ is \emph{$\Lambda$-feasible} if $C'$ can be realized by a set of rays $H'$ in canonical position from $S(p,u,v)$ where each ray from $H'$ is contained in $R(p_i,u,v)$.

\begin{figure}[!htb]
	\centering
	\includegraphics[scale=1]{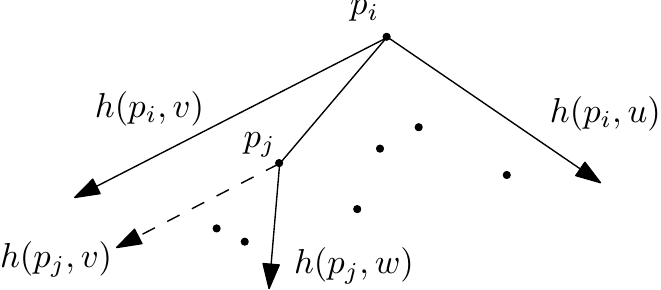}
	\caption{Decomposing a $\Lambda$-tuple $(p_i,u,v)$ into a $\Pi$-tuple and a $\Lambda$-tuple.}\label{fig:tipo3}
\end{figure}

Even though $\Lambda$-tuples can be seen as a particular case of more general $\Pi$-tuples, we treat them separately because we use different methods to decide the feasibility of a subconfiguration from the point sets associated with the tuples.

\subsubsection{Algorithm}

We present a dynamic programming algorithm to determine whether a given configuration can be realized from $S$ by a set of rays in canonical position. In a first step, we will maintain three tables $T_\Pi$, $T_\Sigma$, and $T_\Lambda$ in which all $\Pi$-, $\Sigma$- and $\Lambda$-feasible tuples, respectively, will be stored.
More formally, the table $T_\Pi$ consists of entries of the form  $(p_i,p_j,u,v,C')$, where $(p_i,p_j,u,v)$ is a $\Pi$-tuple and $C'$ is a subconfiguration of $C$ of length $|S(p_i,p_j,u,v)|$, and stores whether the tuple $(p_i,p_j,u,v,C')$ is $\Pi$-feasible.
The tables $T_\Sigma$ and $T_\Lambda$ are defined analogously.

In the following, we explain how to compute the values in these tables recursively.
That is, we show how to check the feasibility of a tuple using the feasibility of some tuples with fewer points. In order to check the $\Pi$-feasibility of $(p_i,p_j,u,v,C')$, we can sweep counterclockwise the ray $h(p_i, p_j-p_i)$ around its apex $p_i$ until we hit a point $p_k \in S(p_i,p_j,u,v)$. If $h(p_i, p_k-p_i)$ crosses $h(p_j,v)$, then we only need to check the $\Pi$-feasibility of $(p_i, p_k, u, w,C'_1)$ and $(p_k,p_j,w, v,C'_2)$ for each $w\in V$ such that the ray $h(p_k,w)$ is contained in $R(p_i,p_j,u,v)$, where $C'_1$ consists of the first $|S(p_i,p_k,u,w)|$ positions of $C'$ and $C'_2$ consists of the last $|S(p_k,p_j,w,v)|$ positions of $C'$ (see Figure \ref{fig:tipo1}, left). 
If $h(p_i, p_k-p_i)$ does not cross $h(p_j,v)$, then we only need to check the $\Lambda$-feasibility of the tuple $(p_i,u, v,C')$ (see Figure \ref{fig:tipo1}, right).

Similar analysis can be done to check the $\Sigma$-feasibility of the tuple $(p_i,p_j,u,v,w,C')$ by counterclockwise sweeping the ray $h(p_i,v)$ around $p_i$ until a point $p_k$ is hit; and to check the $\Lambda$-feasibility of the tuple $(p_i,u,v,C')$ by sweeping $h(p_i,v)$ around $p_i$ until a point $p_j\in S(p_i,u,v)$ is hit.
See Figures \ref{fig:tipo2} and \ref{fig:tipo3}, respectively.

The sizes of $T_\Pi$, $T_\Sigma$ and $T_\Lambda$ are $O(n^7), O(n^9)$ and $O(n^6)$, respectively, and each entry can be computed in $O(n^2)$ time (the feasibility must be checked for every vector in $V$, in the worst case).
Therefore, the tables can be constructed incrementally in $O(n^9), O(n^{11})$ and $O(n^8)$ time, respectively, by interleaving the calculations between them.

Using these tables, we test the feasibility of $C$ as follows.
We can check in $O(n^6)$ time whether $C$ can be realized by a separable set of rays in canonical position from $S$.
For any line $\ell$ with direction vector $u\in V$ and leaving $k$ points of $S$ to its left (there are $O(n^3)$ choices for $\ell$) 
and for any partition of $C$ into two disjoint subconfigurations $C_1$ and $C_2$ of sizes $k$ and $n-k$ ($O(n)$ choices), respectively, we only need to check the feasibility of $C_1$ and $C_2$ for $O(n^2)$ $\Lambda$-tuples.
For instance, assuming that $\ell$ is horizontal, we choose the point $p_i$ with highest ordinate below $\ell$ and explore the different subproblems defined by the rays $h(p_i,u), h(p_i,-u)$ and $h(p_i,v)$, with $v\in V$ such that the ray
$h(p_i,v)$ is below $\ell$.
See Figure \ref{fig:separabledec}.

\begin{figure}[!htb]
	\centering
	\includegraphics[scale=1]{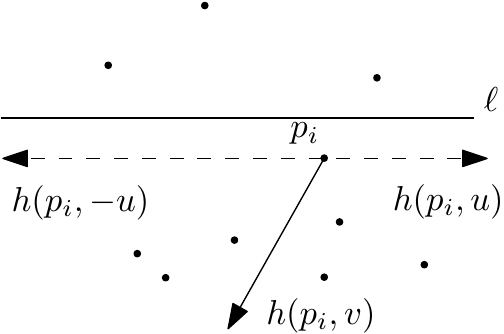}
	\caption{Decomposing a separable set.}\label{fig:separabledec}
\end{figure}

On the other hand, as a consequence of Lemma~\ref{lem:decomp}, we can decide in $O(n^{10})$ time whether $C$ can be realized using a non-separable set of rays in canonical position.
For instance, if part $(i)$ of that lemma holds, then, using the information stored in $T_\Pi$, for every triple of points $p_i,p_j,p_k \in S$, for every triple of vectors $u,v,w \in V$ such that the angles from $u$ to $v$, from $v$ to $w$ and from $w$ to $u$ are less than $\pi$, and for every partition of $C$ into  $C_1, C_2, C_3$ of appropriate sizes, we need to check only the $\Pi$-feasibility of the tuples $(p_i, p_j,u, v,C_1)$, $(p_j,p_k,v, w,C_2)$ and $(p_k,p_i,w,u,C_3)$.
A similar analysis can be done if part $(ii)$ of the lemma holds.

This concludes the proof of Theorem~\ref{prop:convexpos}.

\section{Full-crossing sets of rays}\label{sec:nondisjoint}

In this section, we study sets of rays that are not necessarily disjoint. 
In this scenario, it will be very useful to consider, given a ray $h$ with apex $p$, the ray
$\overline{h}$ with apex $p$ oriented in the direction opposed to that of $h$,
which we call the \emph{inverse} ray of $h$.
We also call a set $H$ of rays \emph{pairwise-proper} if no ray is contained in another one. 

It is not surprising that by removing the disjointness constraint
all configurations are realizable by a proper set of rays, no matter the position of the points in $S$. 

However, there exist point sets $S=R\cup B$, with $|R|=|B|=n$, such that any proper set $H$ of rays from~$S$ realizing a certain configuration produces $\Theta(n^2)$ crossings.

\begin{theorem} \label{thm:nsquarecrossings}
There exist point sets $S=R\cup B$ such that any set $H$ of pairwise-proper rays from $S$ realizing the configuration $C=(r,b,r,b, \ldots , r,b)$ has $\Theta (n^2)$ crossings.
\end{theorem}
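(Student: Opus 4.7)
The upper bound $O(n^2)$ is trivial, since there are only $\binom{2n}{2}$ pairs of rays. The interesting direction is the lower bound, for which I propose taking $S = R \cup B$ as two tight clusters: place the $n$ red points in a small neighborhood of $A = (0, 1)$ and the $n$ blue points near $A' = (0, -1)$, with a generic perturbation ensuring general position. The intuition is that each red ray is essentially a ray from $A$ and each blue one essentially a ray from $A'$, so crossings between red and blue rays are controlled by how their directions relate.

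The first step is to classify each ray by its direction quadrant $Q_i$ (the four open quadrants of $[0, 2\pi)$) and let $u_i^c$ denote the number of rays of color $c$ with direction in $Q_i$, and $u_i = \min(u_i^r, u_i^b)$. Because $C$ is fully alternating at infinity, every contiguous angular arc contains red and blue counts differing by at most $1$; in particular $|u_i^r - u_i^b| \le 1$, so $\sum_i u_i \ge n - 4$.

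The second step is a geometric case analysis. Intersecting the line through $A$ with slope $\tan\theta_r$ and the line through $A'$ with slope $\tan\phi_b$ gives an intersection point with $x$-coordinate $2/(\tan\phi_b - \tan\theta_r)$. Checking which side of the two apices this intersection lies on leads, quadrant by quadrant, to a simple criterion: a same-quadrant pair crosses iff $\phi_b > \theta_r$ (in $Q_1$ and $Q_4$) or $\phi_b < \theta_r$ (in $Q_2$ and $Q_3$). Within each $Q_i$, the $u_i^r + u_i^b$ directions alternate in color when sorted by angle, so an easy count shows that at least $u_i(u_i+1)/2 = \Theta(u_i^2)$ of the $u_i^r u_i^b$ same-quadrant red-blue pairs satisfy the required inequality and hence cross. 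Summing over the four quadrants and applying Cauchy-Schwarz,
\[
\#\text{crossings} \;\ge\; \tfrac{1}{2}\sum_{i=1}^{4} u_i^2 \;\ge\; \tfrac{1}{8}\Bigl(\sum_{i=1}^{4} u_i\Bigr)^{\!2} \;=\; \Omega(n^2).
\]

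The main obstacle I anticipate is verifying that the idealized single-point crossing criterion survives the cluster perturbations and that the pairs I ignore (cross-quadrant red-blue pairs, as well as within-cluster red-red and blue-blue pairs) cannot subtract from the count. The latter is easy, because for a lower bound these contribute nonnegatively; the former requires choosing the cluster radius small enough relative to the minimum values of $|\tan\phi_b - \tan\theta_r|$ attained in the realization, so that the sign of the intersection's $x$-coordinate (and hence the quadrant-by-quadrant crossing criterion) is preserved. Once this bookkeeping is handled, the counting is quite robust: any concentration of directions into fewer quadrants only increases $\sum_i u_i^2$ and strengthens the bound.
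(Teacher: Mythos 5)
Your counting framework---perfect alternation at infinity forces the red and blue direction multisets to interleave in every angular arc, hence $\Omega(u^2)$ ``inverted'' same-arc pairs---is sound, but the crossing criterion it rests on fails exactly for the directions an adversary will choose, and this breaks the construction itself, not just the bookkeeping. For a red apex $p$ near $(0,1)$, a blue apex $q$ near $(0,-1)$ and directions $\theta,\phi$, writing $w=q-p$, the forward parameters of the line intersection are $t=(w_x\sin\phi-w_y\cos\phi)/\sin(\phi-\theta)$ and $s=(-w_y\cos\theta+w_x\sin\theta)/\sin(\phi-\theta)$. Since $w_y\approx-2$, the numerators have the signs of $\cos\phi$ and $\cos\theta$ \emph{only when} $|\cos\phi|,|\cos\theta|\gg\epsilon$; for directions within $O(\epsilon)$ of vertical the term $w_x\sin\phi$ dominates and the quadrant-by-quadrant criterion is simply wrong. (Your stated worry about $|\tan\phi_b-\tan\theta_r|$ being small is a red herring: the sign of $\sin(\phi-\theta)$ is robust no matter how close the two directions are; the real danger is directions nearly parallel to the segment joining the cluster centers.) Crucially, you cannot fix this by ``choosing the cluster radius small enough relative to the realization'': the point set is quantified before the realization, so the adversary may put all $2n$ directions within, say, $\epsilon^{10}$ of $+\pi/2$. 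And this adversary genuinely wins against a generic two-cluster set. Take red apices $(x_i,1)$ and blue apices $(y_j,-1)$ with $x_1<y_1<x_2<y_2<\cdots<x_n<y_n$ in $[-\epsilon,\epsilon]$, and shoot every ray almost straight up with tilts ordered $\alpha_n<\beta_n<\alpha_{n-1}<\beta_{n-1}<\cdots<\alpha_1<\beta_1$ inside a window much smaller than $\min_{i,j}|x_i-y_j|$. The cyclic sequence at infinity is $r_nb_nr_{n-1}b_{n-1}\cdots r_1b_1$, i.e.\ the fully alternating configuration, yet red $i$ and blue $j$ cross if and only if $\mathrm{sign}(y_j-x_i)=\mathrm{sign}(\beta_j-\alpha_i)$, which here holds only for $i=j$, and there are no monochromatic crossings: the realization has exactly $n$ crossings.

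So the claim you need---that \emph{every} realization from your $S$ produces $\Omega(n^2)$ crossings---is false for the construction as described. To salvage the idea you would have to separate the clusters transversally to the cluster axis (e.g.\ force all red abscissae below all blue abscissae), which does make the near-vertical regime cost $\Theta(n^2)$ crossings, and you would still need to handle the intermediate regime $|\cos\phi|=\Theta(\epsilon)$, where the correct crossing criterion transitions between the two regimes and depends on the particular pair, so no single partition into four fixed angular arcs carries a uniform criterion. The paper avoids all of this with an asymmetric construction: the red points are spread along an interval of length $n$ on the $x$-axis while the blue points sit in two clusters of radius less than $1/2$ at $(0,\pm1)$, and a case analysis on where the blue rays of one cluster point (into the right half-plane, into the left half-plane missing the $x$-axis, or across the $x$-axis toward the other cluster) extracts a linear-size subfamily each of whose members is forced into linearly many crossings.
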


\begin{proof}
	For clarity of exposition, we assume that $n$ is even. 
	Consider the set $R=\{(-i,0) | i \in [n] \}$ on the $x$-axis $OX$ (see Figure~\ref{fig:nsquared}). 
	Consider now the points $p=(0,1)$ and $p'=(0,-1)$ and place half of the elements of $B$ in the interior of a disk $\mathcal{C}$ of radius $\epsilon<1/2$ centered at $p$, and the other half in the interior of a disk $\mathcal{C}'$ of radius $\epsilon$ centered at $p'$.
	Perturb $S$ slightly such that the points lie in general position, no point lies on the $OX$ or $OY$ axis, and no pair of points defines a vertical or horizontal line.

	\begin{figure}[thb!]
		\centering
		\includegraphics{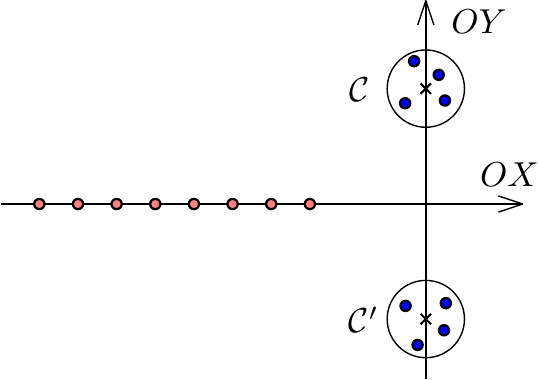}
		\caption{Any set $H$ of rays realizing $(r,b,r,b, \ldots , r,b)$ has at least $\Theta(n^2)$ crossings.}
		\label{fig:nsquared}
	\end{figure}
	
	Suppose that $H$ is a set of rays from $S$ realizing $C=(r,b,r,b, \ldots , r,b)$. 
	Let us see that $H$ has $\Theta (n^2)$ crossings. 
	We divide the set of rays with apices in $\mathcal{C}\cap B$ into the three sets
	$H_1$, $H_2$, and $H_3$. 
	Let $H_1$ be the set of rays with direction vector contained in the right half-plane (defined by~$OY$), $H_2$ the set of rays with direction vector contained in the left half-plane and that do not cross~$OX$, and $H_3$ the set of rays with direction vector in the left half-plane that do cross~$OX$. 
	Let us define $n_1 =|H_1|$, $n_2 =|H_2|$ and $n_3 =|H_3|$. 
	The sets $H'_1, H'_2$ and $H'_3$ and the cardinalities $n'_1, n'_2$ and $n'_3$ are defined similarly for the rays with apices in $\mathcal{C}'\cap B$.
	Since $n_1+n_2+n_3 =n/2 $ and $n'_1+n'_2+n'_3=n/2 $, necessarily at least one of $n_1, n_2$ and $n_3$ and one of $n'_1, n'_2$ and $n'_3$ are larger than or equal to $n/6$.  
	
	Suppose that $n_1 \ge n/6$, and assume that the labels in $H_1=\{h_1,\ldots, h_{n_1}\} $ correspond to the order on the slope of the rays.
	As $H$ realizes $(r,b,r,b,\ldots , r,b)$, at infinity between two angularly consecutive rays $h_i,h_{i+1} \in H_1$, there must appear at least one red ray.  
	Let $G\subset H$ be a set of red rays interleaved with the rays of $H_1$ such that exactly one red ray appears between two consecutive rays of $H_1$. Let $G_1 \subset G$ be the rays that intersect $\mathcal{C}$, and $G_2 \subset G$ be the rays that do not intersect $\mathcal{C}$.  
	At least one of $|G_1|$ and $|G_2|$ is larger than or equal to~$n/12$. 
	If $|G_1| \ge n/12$, then $H$ has $\Theta(n^2)$ crossings because two red rays passing through $\mathcal{C}$ must cross each other. 
	If $|G_2| \ge n/12$, observe that, the red ray of $G_2$ appearing between the two consecutive rays of $h_i,h_{i+1} \in H_1$ necessarily crosses the first $i$ blue rays of $H_1$ or the last $n-i$ rays of $H_1$, depending on whether it leaves $\mathcal{C}$ to the right or to the left. 
	Therefore, since either the rays of $G_2$ that leave $\mathcal{C}$ to its left, or the ones that leave it to its right, are $\Theta(n)$, $H$ has $\Theta(n^2)$ crossings.
	
	Similarly, if $n_2 \ge n/6$, the red rays interleaving the rays in $H_2$ generate $\Theta(n^2)$ crossings.
	If $n'_1$ or $n'_2$ are at least $n/6$, symmetric arguments lead to the same conclusion. 
	If none of $n_1,n_2,n_1'$ and $n_2'$ is at least $n/6$, then $n_3 \ge n/6$ and $n'_3 \ge n/6$ and $H$ has at least $\Theta (n^2)$ crossings because each ray of $H_3$ crosses each ray of $H'_3$.
\end{proof}

It is quite likely that, given $S=R\cup B$ and a configuration $C$, finding a pairwise-proper set of rays from $S$ that realizes $C$ and minimizes the number of crossings is an NP-hard problem.
Yet we have no proof of that.

The rest of the section concerns the study of configurations that can be realized by sets of rays that pairwise cross, which we call \emph{full-crossing} sets of rays.
We say that a configuration $C$ is {\em $\chi$-feasible} (from a point set $S$) if there exists a set $H'$ of full-crossing rays from $S$ such that $C$ is equal to $C(S,H')$.
Observe that if $H$ is full-crossing, then the elements of the set $\overline{H}$ of
inverse rays are pairwise-disjoint and realize the same configuration as $H$.

We show first that, as in the case of pairwise-disjoint rays, there exist universal point sets for full-crossing rays. 
Then, we prove several results showing that the full-crossing scenario is significantly different that the pairwise-disjoint case. 

\begin{theorem}\label{fcuniv}
	For every $n \in \N$, there exists a bichromatic point set $S=R\cup B$ such that every configuration is $\chi$-feasible from $S$.
\end{theorem}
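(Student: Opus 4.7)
The plan is to show that the very same point set $S=R\cup B$ constructed in the proof of Theorem~\ref{theo:universal-point-set} is already universal for full-crossing rays, via a duality between pairwise-disjoint and full-crossing sets of rays. This is why the proof of Theorem~\ref{theo:universal-point-set} was written in such an explicit form and the caption of Figure~\ref{fig:crosuniv} refers to full-crossing rays.

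The key observation I would record is the following converse of the remark made at the beginning of this section. Let $H$ be any set of pairwise-disjoint rays such that no two of its rays have parallel supporting lines. For any two rays $h,h'\in H$ with supporting lines $\ell,\ell'$, the set $\ell\cap\ell'$ is a single point $x$. Since $h\cap h'=\emptyset$, the point $x$ lies on neither $h$ nor $h'$, so it must lie on both inverse rays; that is, $x\in\overline{h}\cap\overline{h'}$. Hence $\overline{H}=\{\overline{h}\mid h\in H\}$ is a full-crossing set of rays. Moreover, replacing each ray by its inverse amounts to applying the antipodal map to the circle of directions at infinity; this map is a rotation by $\pi$ and therefore preserves cyclic order, so $C(S,\overline{H})=C(S,H)$.

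With this in hand, the proof becomes a short corollary of Theorem~\ref{theo:universal-point-set}. Given any configuration $C$, run the construction in the proof of Theorem~\ref{theo:universal-point-set} to obtain a pairwise-disjoint set $H=H_B'\cup H_R$ of rays from $S$ realizing $C$. Choose the infinitesimal perturbation defining $H_B'$ generically so that, in addition to the properties already demanded, no two rays of $H$ share a direction: there are only $\binom{2n}{2}$ coincidences to avoid and each is violated on a measure-zero subset of the perturbation parameter space. Then $\overline{H}$ is a full-crossing set of rays from $S$ realizing $C$, proving that $C$ is $\chi$-feasible from $S$.

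The only point needing attention is the perturbation step: one must simultaneously preserve the correctness of the construction (so that $H$ is pairwise-disjoint and realizes $C$) while separating all ray directions. Since both properties are open conditions in the perturbation parameters, a generic choice satisfies them both, and the argument goes through without further difficulty.
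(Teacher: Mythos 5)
Your overall strategy (reuse the point set of Theorem~\ref{theo:universal-point-set} and pass to the inverse rays $\overline{H}$) is exactly the paper's, and your remark that inversion acts as the antipodal map at infinity and hence preserves the cyclic sequence is correct. However, your ``key observation'' is false as stated, and this is precisely the step that carries the content of the theorem. From $h\cap h'=\emptyset$ you conclude that the intersection point $x=\ell\cap\ell'$ of the supporting lines lies on neither $h$ nor $h'$; but disjointness only rules out $x$ lying on \emph{both}. It can perfectly well lie on exactly one of them. Concretely, let $h$ be the ray from the origin along the positive $x$-axis and let $h'$ be the ray from $(1,1)$ in direction $(0,1)$: these are disjoint and non-parallel, their supporting lines meet at $(1,0)\in h\setminus h'$, and consequently $\overline{h}\cap\overline{h'}=\emptyset$. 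So ``pairwise-disjoint with pairwise non-parallel supporting lines'' does \emph{not} imply that the inverses are full-crossing, and no generic perturbation can rescue this: the counterexample is an open condition, not a degeneracy. (Note that the implication you are quoting from the beginning of Section~\ref{sec:nondisjoint} goes only in the other direction: full-crossing $\Rightarrow$ inverses disjoint.)

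What the paper actually verifies, and what your proof is missing, is a geometric property specific to the construction of Theorem~\ref{theo:universal-point-set}: all rays of $H$ live in the closed halfplane above $OX$, while the supporting lines in $L$ pairwise intersect strictly below $OX$. Hence for every pair the intersection point $x$ of the supporting lines lies on neither original ray, which is the case in which both inverse rays do pass through $x$. Your argument would become correct if you replaced the false general lemma by this explicit check (or by any other argument certifying that, for this particular $H$, no supporting-line intersection lands on one of the two rays involved). As written, the proof has a genuine gap at its central step.
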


\begin{proof}
	For any given $C$, consider the point set $S$ and the set $H$ of rays from $S$ described in the proof of Theorem~\ref{theo:universal-point-set}.
	Since $C(S,H)=C$, the set $\overline{H}=\{ \bar{h} | h\in H \}$ realizes $C$ as well. 
	Let $L$ be the set of supporting lines of the rays in $H$. 
	Note that no two lines of $L$ intersect above the $x$-axis and they are pairwise nonparallel and, thus, they pairwise intersect below $OX$. 
	Since no ray of $H$ intersects the halfplane below $OX$, every ray in $\overline{H}$ intersects every line in $L$ and, hence, $\overline{H}$ is full-crossing.
\end{proof}

In contrast with the case of pairwise-disjoint rays, we will prove that there are no universal configurations for the full-crossing scenario. 
First, we prove a lemma in this direction.

\begin{lemma}\label{lem:hex}
Let $S$ be a set of six alternating red and blue points being the vertices of a convex hexagon. Then
the configuration $(r,r,r,b,b,b)$ is not $\chi$-feasible from $S$.
\end{lemma}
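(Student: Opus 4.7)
The plan is to derive a contradiction by comparing the cyclic color order at infinity with the cyclic color order of the apices around the hexagon. Suppose, toward a contradiction, that $H = \{h_1, \ldots, h_6\}$ is a full-crossing set of rays that realizes $(r,r,r,b,b,b)$ from $S$, where $h_i$ has apex $p_i$ and $p_1, \ldots, p_6$ are the hexagon vertices in clockwise order, so that their colors alternate as $r, b, r, b, r, b$.

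The crux is the following geometric claim: for any collection of pairwise-crossing rays whose apices lie in convex position, the cyclic order at infinity of the rays' direction vectors coincides (up to reversal) with the cyclic order of the apices around the convex hull. To see why, I would start from the explicit condition that two rays from apices $p_i, p_j$ with directions $u, v$ respectively cross in their forward parts if and only if the parameters $t, s$ in $p_i + t u = p_j + s v$ are both positive; via 2D cross products, this translates into $u, v$ lying on the same side of the line through $p_i$ and $p_j$, together with a ``twist'' condition involving the signed angles of $u, v$ relative to $\overrightarrow{p_i p_j}$. Imposing this pairwise condition on every pair of the six rays, and using the convex position of the apices, forces the cyclic order at infinity of the direction vectors to agree with the cyclic order of the apices around the hexagon.

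Given the claim, the conclusion is immediate: the cyclic apex order $p_1, p_2, \ldots, p_6$ has color pattern $(rbrbrb)$, so the cyclic order of the directions at infinity must also be $(rbrbrb)$, yielding $C(S,H) = (rbrbrb)$. But this contradicts the assumption that $C(S,H) = (r,r,r,b,b,b)$, and therefore $(r,r,r,b,b,b)$ is not $\chi$-feasible from $S$.

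The main obstacle will be rigorously establishing the geometric claim, that is, propagating the pairwise ``twist'' conditions into a global statement about the cyclic order of the directions. A clean approach is likely by induction on the number of rays (reducing to the case of three rays from a triangle, where the claim can be verified directly from the cross-product conditions), or by a continuous deformation argument that connects any full-crossing configuration from a fixed convex-position apex set to a canonical symmetric configuration where the cyclic-order match is transparent.
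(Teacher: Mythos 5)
Your reduction of the lemma to the highlighted geometric claim is where the argument breaks down: the claim is false. Pairwise-crossing rays with apices in convex position need not have their directions appear at infinity in the convex-position cyclic order, even up to reversal. Here is an explicit counterexample on the alternating hexagon itself. Place $p_j$ at angle $60^{\circ}j+10^{\circ}$ on the unit circle ($p_1,p_3,p_5$ red, $p_2,p_4,p_6$ blue, in convex position order), sort the vertices by decreasing ordinate --- obtaining $p_1,p_2,p_6,p_3,p_5,p_4$ --- and shoot from the $k$-th vertex of this list the ray with direction $(\cos k\varepsilon,\sin k\varepsilon)$ for small $\varepsilon>0$. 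For two almost-horizontal rays with apices $p,q$ and direction angles $\alpha,\beta$, the forward parameter of the intersection is approximately $(q_y-p_y)/(\alpha-\beta)$, so they cross exactly when the more southern apex carries the more counterclockwise direction; the assignment above satisfies this for all fifteen pairs, hence the set is full-crossing. The cyclic order of the directions at infinity is $p_1,p_2,p_6,p_3,p_5,p_4$, with color word $(r,b,b,r,r,b)$: this cyclic order is neither the hull order nor its reversal, and the realized configuration is not the alternating one. So the ``twist conditions'' cannot be propagated into your global statement, and the conclusion you would draw from it --- that $C(S,H)=(r,b,r,b,r,b)$ for every full-crossing $H$ --- is also false; only the weaker assertion of Lemma~\ref{lem:hex} (no three consecutive equal colors) is true, and that is all the paper uses later in Theorem~\ref{prop:fullvsno}.

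The paper's actual proof is of a different nature. It passes to the set $\overline{H}$ of inverse rays, which is pairwise disjoint (non-crossing) and realizes the same configuration, classifies each of the six inverse rays as interior or exterior to the hexagon, and shows by a short case analysis that every classification compatible with $(r,r,r,b,b,b)$ forces two of the original rays to be disjoint, a contradiction. If you want to keep your general strategy, you must aim only at excluding three consecutive equal colors rather than at determining the entire cyclic order; as the construction above shows, the latter is genuinely underdetermined.
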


\begin{proof}
Let $S=\{p_1,\ldots,p_6\}$ be the vertices of the hexagon, listed counterclockwise, and let us assume that $p_1,p_3,p_5$ are red and that $p_2,p_4,p_6$ are blue
(see Figure \ref{fig:hexagono1}). We denote by $e_i$ the edge $p_ip_{i+1}$ of the hexagon, with arithmetic of indices modulo $6$.

\begin{figure}[!htb]
	\centering
	\subfigure[~]{\includegraphics[page=1]{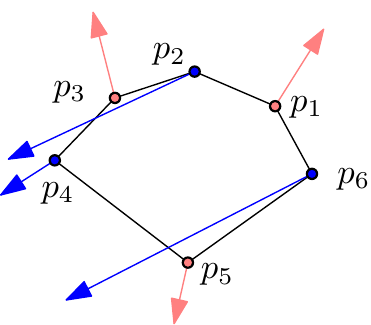} \label{fig:Case1}}
	\subfigure[~]{\includegraphics[page=2]{hexagono1b} \label{fig:Case2}}
	\subfigure[~]{\includegraphics[page=3]{hexagono1b} \label{fig:Case3}}
	\subfigure[~]{\includegraphics[page=4]{hexagono1b} \label{fig:Case4}}
	\caption{The cases in the proof of  Lemma \ref{lem:hex}. For clarity, some rays are not shown.}\label{fig:hexagono1}
\end{figure}

Assume for contradiction that there is a full-crossing set of rays from $S$ that realizes the configuration $(r,r,r,b,b,b)$. 
Let us denote their inverse rays by $h_1,\ldots,h_6$, where each $h_i$ has apex at $p_i$. 
Both the set of non-crossing rays $h_1,\ldots,h_6$ and their full-crossing inverse rays $\bar{h}_1,\ldots,\bar{h}_6$ realize the configuration $(r,r,r,b,b,b)$. 

Since the rays $\bar{h}_1,\ldots,\bar{h}_6$ cross pairwise, no two of them are parallel, and this also applies to their inverses $h_1,\ldots,h_6$. 
Therefore, by infinitesimal perturbation, we can assume that none of these rays intersects the boundary of the hexagon in more than two points.

If $h_1$, $h_3$ and $h_5$ are exterior, the configuration $(r,r,r,b,b,b)$ can only be achieved if two of the blue rays, say $h_2$ and $h_6$, are interior, and cross $e_3\cup e_4$ (Figure~\ref{fig:Case1}). 
But in this case, $\bar{h}_1\cap \bar{h}_2=\varnothing$ or $\bar{h}_1\cap \bar{h}_6=\varnothing$, a contradiction. 
Therefore, the three red rays cannot be all exterior, and neither can be, by symmetry, all three blue rays.

If two red rays are interior, say $h_1$ and $h_3$, then $h_2$ must be exterior (Figure~\ref{fig:Case2}), but then $\bar{h}_2\cap \bar{h}_1=\varnothing$ or $\bar{h}_2\cap \bar{h}_3=\varnothing$, a contradiction.

Therefore, there is only one possibility left: for each of the colors, two rays  must be exterior and one interior. Let us assume, for example, that $h_2$ is interior while $h_4$ and $h_6$ are exterior. 
To achieve the configuration $(r,r,r,b,b,b)$ the ray $h_5$ must be the only interior red  ray. 
If  $\bar{h}_5\cap \bar{h}_2=\varnothing$, as in Figure~\ref{fig:Case3}, we get a contradiction, therefore we have to assume that  $\bar{h}_5\cap \bar{h}_2\ne\varnothing$.
However, in this case $h_1$ intersects $\bar{h}_2$ or $\bar{h}_5$ (Figure \ref{fig:Case4}), and hence
$\bar{h}_1\cap \bar{h}_2=\varnothing$ or $\bar{h}_1\cap \bar{h}_5=\varnothing$, reaching another contradiction.
\end{proof}

\begin{theorem}\label{prop:nouniv}
	There exist no universal configurations $C$ of length $2n \ge 20$ for full-crossing rays. 
	That is, for every configuration $C$ of length at least $20$ there exists a bichromatic point set $S=R\cup B$ such that
	$C$ is not $\chi$-feasible from $S$.
\end{theorem}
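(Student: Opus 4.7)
The plan is to reduce the non-$\chi$-universality of $C$ to Lemma~\ref{lem:hex}. I would construct a point set so that any $\chi$-feasible realization of $C$ would force six specific points, forming an alternating convex hexagon, to $\chi$-realize the pattern $(r,r,r,b,b,b)$ among themselves, which is impossible by Lemma~\ref{lem:hex}.

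First I would locate six positions of $C$ whose cyclic subpattern (taken alone) is $(r,r,r,b,b,b)$. An averaging argument over the $2n$ cyclic arcs of $C$ of length $n$ gives, for $n \geq 10$, an arc $A$ containing at least $\lceil n/2 \rceil \geq 5$ red positions, hence at least three; its complement $\bar{A}$ then contains at least three blue positions. Fixing three reds from $A$ and three blues from $\bar{A}$ yields the desired pattern. I would then construct $S$ by placing a tiny alternating convex hexagon $\mathcal{H}$ of three red and three blue vertices near the origin, and placing the remaining $2n-6$ points on a large circle of radius $R$: for each of the $2n-6$ \emph{unchosen} positions of $C$, I would fix an angle $\alpha$ on the circle at infinity consistent with the cyclic order of $C$, and place a corresponding point of matching colour at the antipodal angle $\alpha+\pi$ on the large circle, perturbing to attain general position.

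Now suppose $H$ is a full-crossing realization of $C$ from $S$. The crux is the following cyclic-order claim: for $R$ large enough, the cyclic order at infinity of the directions of the $2n-6$ far-away rays must coincide with the cyclic order of their apices around the large circle. Granted this, the far-away rays realize precisely the $2n-6$ unchosen positions of $C$ in the correct cyclic order and with matching colours, so the six hexagon rays must realize the six chosen positions; by construction these form $(r,r,r,b,b,b)$. The six hexagon rays inherit pairwise-crossing from $H$, so they $\chi$-realize $(r,r,r,b,b,b)$ from the alternating convex hexagon $\mathcal{H}$, contradicting Lemma~\ref{lem:hex}.

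The hard part will be justifying the cyclic-order claim. My intended approach is to exploit the observation that $\overline{H}$ is pairwise-disjoint and realizes $C$ as well, together with an analysis of non-crossing rays whose apices lie in convex position on a large circle: as $R$ grows, the additional requirement that $H$ itself be pairwise-crossing (not merely that $\overline{H}$ be non-crossing) becomes strong enough to exclude every reordering of the apex cyclic order, so the $2n-6$ far-away rays are forced into their prescribed cyclic slots.
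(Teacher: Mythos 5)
Your reduction to Lemma~\ref{lem:hex} hinges entirely on the cyclic-order claim, and that claim is not only left unproved in your write-up but is in fact false. First, the assertion ``for $R$ large enough, the cyclic order at infinity of a full-crossing family with apices in convex position on a circle of radius $R$ coincides with the cyclic order of the apices'' is scale-invariant in the far points alone, so increasing $R$ cannot help. Second, here is a four-point counterexample: take $p_1=(1,0)$, $p_2=(0,1)$, $p_3=(-1,0)$, $p_4=(0,-1)$ and let $h_i$ be the rightward ray from $p_i$ with slope $m_2=-\varepsilon$, $m_3=\varepsilon$, $m_1=2\varepsilon$, $m_4=3\varepsilon$. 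A direct check shows that each of the six pairs of supporting lines meets at a point of large positive abscissa lying on both rays, so the family is full-crossing; yet the counterclockwise order of the directions at infinity is $h_2,h_3,h_1,h_4$, which is neither a rotation nor a reflection of the apex order $p_1,p_2,p_3,p_4$. Worse for your construction, this illustrates a general phenomenon: for any point set in general position and almost any direction $d$, one can place all rays inside an $\varepsilon$-cone around $d$, ordering their slopes by the apices' coordinates orthogonal to $d$, and obtain a full-crossing family. So the far rays need not point anywhere near their ``antipodal'' slots, the requirement that they also cross the hexagon rays adds nothing (the hexagon rays can be bunched into the same cone), and the six hexagon rays are not forced to realize the six chosen positions. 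Even granting the cyclic-order claim, a further gap remains: since $C$ has repeated colors, matching the cyclic order of the far rays does not determine which absolute positions of $C$ they occupy, and the hexagon rays could be inserted into different gaps of a rotated assignment.

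For contrast, the paper avoids any such forcing argument by splitting on $C$ itself. If $C$ contains three consecutive positions of the same color, it is not $\chi$-feasible from the alternating convex $2n$-gon: restricting a full-crossing realization to the three same-colored rays that are consecutive at infinity and to three interleaved apices of the other color yields an alternating convex hexagon whose six pairwise-crossing rays realize $(r,r,r,b,b,b)$, contradicting Lemma~\ref{lem:hex}. If $C$ has no three consecutive equal colors, it is not $\chi$-feasible from the clustered point set of Theorem~\ref{thm:nsquarecrossings}, from which the paper shows that every full-crossing family necessarily produces three consecutive blue positions. You would need either a correct proof of a (suitably weakened) forcing statement or a structural dichotomy of this kind; as written, the argument does not go through.
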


\begin{proof}
	As shown in the proof of Lemma~\ref{lem:hex}, no configuration with three consecutive positions of the same color can be universal for full-crossing rays, because it cannot be realized from a set of alternating points in convex position. 
	On the other hand, we now show that, from the point set $S=R\cup B$ described in the proof of Theorem~\ref{thm:nsquarecrossings}, only configurations with three consecutive blue positions can be realized by full-crossing rays. 
	Consider then such a point set if $n$ is even. If $n$ is odd, put $\lfloor n/2 \rfloor$ points in one of the disks and $\lceil n/2 \rceil$ points in the other one (see Figure~\ref{fig:nsquared}).
	
	\begin{figure}[!htb]
		\centering
		\includegraphics{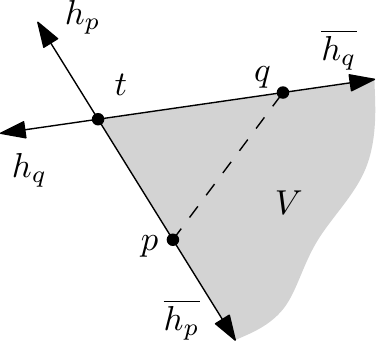}
		\caption{Observation in the proof of Theorem~\ref{prop:nouniv}. }
		\label{fig:fullcross}
	\end{figure}
	
	Consider two different points $p,q \in S$ and let $h_p$ be a ray with apex at $p$, $h_q$ be a ray with apex at $q$, and $t= h_p \cap h_q \ne \emptyset$ be their intersection point.
	Assume that the rays $h_q,h_p,\overline{h_q}$ and $\overline{h_p}$ appear clockwise in this order at infinity (see Figure~\ref{fig:fullcross}).
	If a ray $h$ from a set $H \ni h_p,h_q $ of full-crossing rays appears clockwise between $h_q$ and $h_p$, then its apex must be in the convex hull $V$ of $\overline{h_q}, \overline{h_p}$ and $t$.  
	In addition, if the apex  of $h$ does not belong to the triangle defined by $p, q$ and $t$, then $h$ must intersect the segment $pq$. 
	
	Note that if one red ray crossed $\mathcal{C}$ and another red ray crossed $\mathcal{C}'$, then these two rays do not cross. 
	Thus, at least one of the circles, say  $\mathcal{C}$, cannot be crossed by any red ray. 
	We now show that this leads to a contradiction. 
	Let $P=\{p_1,p_2,p_3,p_4,p_5\} \subset \mathcal{C}\cap B$.
	In particular, no ray in~$R$ intersects any of the segments $p_i,p_j$ for $i,j \in [5]$. 
	Let $h_i\in H$ be the ray emanating from $p_i$ for $i\in [5]$.  
	If the rays from $P$ were not interleaved by at least two red rays, the configuration would have at least three consecutive blue positions.
	As a consequence of the previous observation, there must be two points $q_1,q_2 \in R$ in the triangle defined by two points in $P$, say $p_1$ and $p_2$ and the intersection point $q$ of the rays emanating from them.
	Assume $q_1=(-i_1,0)$ and $q_2=(-i_2,0)$ for suitable $i_1,i_2 \in [n]$. 

It is now easy to see that since the distance between $p_1$ and $p_2$ is at most $2\epsilon$ (where $\epsilon$ is the the radius of $\mathcal{C}$), if $\epsilon$ is small enough,
%	It is now easy to see that since
%	the distance between $p_1$ and $p_2$ is at most $\epsilon$, if $\epsilon$ is small enough, 
then the triangle with vertices $p_1$, $p_2$ and $q$ cannot contain $q_1$ and $q_2$, a contradiction. 
\end{proof}

As stated before, any $\chi$-feasible configuration from $S$ is also feasible. 
The construction in the previous lemma implies that the converse is in general not true.
Moreover, we now show that $\gamma$ and $\mu$ differ asymptotically over some families of point sets. 

\begin{theorem}\label{prop:fullvsno}
There exist point sets $S=R \cup B$ whose number of feasible configurations is asymptotically larger than the number of $\chi$-feasible configurations.
\end{theorem}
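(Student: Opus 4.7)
For each sufficiently large~$n$, I take $S=R\cup B$ to consist of $n$ red points inside a disk of tiny radius~$\varepsilon$ centered at $(-D,0)$ and $n$ blue points inside a disk of the same radius centered at $(D,0)$, with $D\gg\varepsilon$, perturbed generically so that $S$ lies in strong general position. The plan is to show that $\gamma(S)=\Omega(2^{\sqrt n}/n)$ while the number of $\chi$-feasible configurations from~$S$ equals~$1$; this yields the claimed asymptotic separation.

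The lower bound on $\gamma(S)$ is immediate: since $S$ is in strong general position, Theorem~\ref{theo:lower-bound-S} yields $\gamma(S)=\Omega(2^{\sqrt n}/n)$.

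For the upper bound on $\mu(S)$, fix any full-crossing ray set~$H$ from~$S$. For a red ray with apex near $(-D,0)$ and direction angle~$\alpha$, and a blue ray with apex near $(D,0)$ and direction angle~$\beta$, the elementary intersection-parameter formula yields that the pair crosses precisely when the three quantities $\sin\alpha$, $\sin\beta$, and $\sin(\beta-\alpha)$ have a compatible sign pattern (a condition that remains stable under the tiny perturbations of~$S$ because $D\gg\varepsilon$). A short case analysis shows that compatibility of this sign pattern for \emph{every} red--blue pair simultaneously is possible only when all red direction angles and all blue direction angles lie in a single common open half-plane (either the upper half $(0,\pi)$ or the lower half $(\pi,2\pi)$), and within that half-plane all red angles are strictly smaller than all blue angles (or strictly larger, in the symmetric case). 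Consequently, at infinity all red rays occupy one open arc and all blue rays occupy a disjoint open arc, and the cyclic sequence $C(S,H)$ must be exactly $(n,n)$. Hence $\mu(S)\le 1$. Conversely, $(n,n)$ is easily realised from~$S$ by aiming every red ray at a common point inside the blue cluster and every blue ray at a common point inside the red cluster, with infinitesimal angular perturbations to also guarantee the within-colour pairwise crossings; therefore $\mu(S)=1$.

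Combining, $\mu(S)=1=o(2^{\sqrt n}/n)=o(\gamma(S))$, which proves the theorem. The main obstacle is the sign-pattern case analysis that jointly forces both colours into a common half-plane and then separates their angular ranges: one must verify, for all four pairings of half-planes for red and blue angles, that either no red--blue pair can cross or the red and blue angles are strictly separated. This is a routine but slightly tedious verification using the intersection-parameter formula, and the generic perturbation of~$S$ is just what is needed so that the strict inequalities among the sines are preserved.
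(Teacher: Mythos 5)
There is a genuine gap, and it sits exactly where you wave your hands: the ``short case analysis'' of the sign pattern. The reduction of the red--blue crossing test to the signs of $\sin\alpha$, $\sin\beta$, $\sin(\beta-\alpha)$ is valid only when these sines dominate the $O(\varepsilon)$ correction terms coming from the spread of the apices inside each disk, i.e.\ when all directions stay bounded away from the axis joining the two cluster centers. Nothing forbids a full-crossing family from using directions within $O(\varepsilon/D)$ of that axis, and in that regime the relevant determinants are governed by the $O(\varepsilon)$ terms (relative heights of apices and supporting lines), not by the sines. Your conclusion $\mu(S)=1$ is then false. Concretely, take $D=1$, $\varepsilon=0.01$, red apices $p_1=(-1,0)$, $p_2=(-0.99,0.005)$, blue apices $q_1=(1,0.005)$, $q_2=(0.99,0.003)$, and shoot all four rays eastward with slopes $0.002,\ 0.001,\ 0.0015,\ 0.003$ respectively. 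The six pairs of supporting lines meet at abscissae $x=3,\ 3.99,\ 4.98,\ 1.97,\ 2.98,\ 2.31$, each to the east of every apex, so the four rays pairwise cross; yet their angular order at infinity is $p_2,q_1,p_1,q_2$, realizing the alternating configuration $(1,1,1,1)$ rather than $(2,2)$. The mechanism generalizes: a fan of nearly parallel eastward rays whose pairwise crossings all occur far to the east is full-crossing, and the achievable color interleavings form a large family, so it is not even clear that your two-cluster point set yields \emph{any} asymptotic separation. (Your realization of $(n,n)$ by aiming at common interior points has a related, more minor, defect: for a red and a blue apex on opposite sides of the axis, the two supporting lines meet outside the slab between the clusters, off at least one of the rays.)

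The paper takes a different and essentially unavoidable route: it needs a structural obstruction valid for \emph{every} full-crossing family, including the nearly-axial ones. It chooses $2n$ alternating points on a semicircle, shows via a six-point convex-position argument (Lemma~\ref{lem:hex}) that no $\chi$-feasible configuration from such a set can contain three consecutive equal colors, caps $\mu$ at $O(\Phi^{2n})\approx O(2.618^n)$ by a Fibonacci count, and then exhibits $\Omega(3^n/n^2)$ feasible configurations using vertical pairwise-disjoint rays. To salvage your approach you would need an analogous obstruction for the two-cluster set that survives the nearly-horizontal regime; the sign-pattern argument alone does not provide one.
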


Before proceeding to prove the previous theorem, we establish two technical lemmas. 

\begin{lemma}\label{lemm:3n}
	Let $\Sigma_n$ be the set of sequences of length $n$ on the alphabet $\{ 1,2,3,4\} $ that do not contain $ 12, 14, 132, 134, 432$ or $434$. Then, $|\Sigma_n|=\Theta (3^n)$.
\end{lemma}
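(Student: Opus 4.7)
My plan is to model $\Sigma_n$ as the language accepted by a small deterministic finite automaton and read off the growth rate from the Perron--Frobenius eigenvalue of its transfer matrix.

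First I would introduce five states recording just enough of the current suffix to enforce the six forbidden patterns: $q_c$ (for $c\in\{1,2,4\}$) means the last letter is $c$ and no relevant two-letter context is active; $q_{3a}$ means the last letter is $3$ and the preceding letter (if any) is not in $\{1,4\}$; and $q_{3b}$ means the last two letters are $13$ or $43$. A routine enumeration of the allowed outgoing letters from each state (two from $q_1$ and $q_{3b}$, four from each of $q_2, q_{3a}, q_4$) yields, in the order $(q_1,q_2,q_{3a},q_{3b},q_4)$, the transfer matrix
\[
M=\begin{pmatrix}1&0&0&1&0\\1&1&1&0&1\\1&1&1&0&1\\1&0&1&0&0\\1&1&0&1&1\end{pmatrix},
\]
so that $|\Sigma_n|=(1,1,1,0,1)\,M^{n-1}\,(1,1,1,1,1)^{T}$ for $n\ge 1$, where the left vector records the state after reading the first letter and the right vector accepts in every state.

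The crux is to compute the spectral radius of $M$. I would exhibit the explicit positive right eigenvector $v=(1,5,5,2,4)^{T}$ and verify $Mv=3v$ by a one-line calculation. Because $M$ is nonnegative and irreducible (every state reaches $q_1$ in one step, and $q_1$ reaches every other state in at most three), the Perron--Frobenius theorem implies that $3$, being an eigenvalue with a strictly positive eigenvector, is already the spectral radius. Expanding $M^{n-1}$ in a basis of generalized eigenvectors, the term of order $3^{n-1}$ dominates, and its coefficient, when contracted with the initial vector $(1,1,1,0,1)$ and the accepting vector $(1,1,1,1,1)^{T}$, is a strictly positive constant; hence $|\Sigma_n|=\Theta(3^n)$.

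The main obstacle is not conceptual but organizational: one has to enumerate the transitions carefully, in particular to distinguish $q_{3a}$ from $q_{3b}$ since the constraints attached to them differ. Once the automaton is set up correctly, the identification of $3$ as the dominant eigenvalue is an explicit matrix--vector product and Perron--Frobenius does the remaining asymptotic work.
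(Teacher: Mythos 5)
Your proof is correct, and it is essentially the same transfer-matrix argument as the paper's: the paper sets up recurrences for the number of sequences ending in each symbol (using the symmetry $h_3=h_1$, $h_4=h_2$ to get a $4\times 4$ matrix with eigenvalues $3,1,0,0$), while you encode the same local constraints in a $5$-state automaton and identify the dominant eigenvalue $3$ via the positive eigenvector $(1,5,5,2,4)^{T}$ and Perron--Frobenius. I verified your automaton, matrix, and eigenvector computation (e.g.\ it reproduces $|\Sigma_1|=4$, $|\Sigma_2|=14$, $|\Sigma_3|=44$), and note that the sandwich $v/5\le\mathbf{1}\le v$ even gives the explicit bounds $3^n\le|\Sigma_n|\le 5\cdot 3^n$ without any spectral decomposition.
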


\begin{proof}
	Let $h(n)=|\Sigma_n|$, and let $h_1(n),h_2(n),h_3(n)$ and $h_4(n)$ be the number of sequences of $\Sigma_n$ finishing with the symbols $1,2,3$ and $4$, respectively. Thus, $h(n)=h_1(n)+h_2(n)+h_3(n)+h_4(n)$ for any $n\ge 1$. Clearly, if $n\ge 2$, then $h_1(n)=h(n-1)=h_1(n-1)+h_2(n-1)+h_3(n-1)+h_4(n-1)$, because by removing the last symbol of a sequence of $\Sigma_n$ finishing with $1$, we obtain a sequence of the set $\Sigma_{n-1}$ and viceversa. In the same way, $h_3(n)=h(n-1)=h_1(n)$. However, given a sequence of $\Sigma_n$ finishing with $2$, the previous symbol cannot be $1$, and, if it is $3$, then the one before this $3$ can be neither $1$ nor $4$. Hence, $h_2(n)=h_4(n-1)+h_2(n-1)+h_3(n-2)+h_2(n-2)$ when $n\ge 3$. By symmetry, $h_4(n)=h_2(n)$. Therefore, the vector $(h_1(n),h_2(n),h_1(n-1),h_2(n-1))^T$ satisfies the recurrence
	$$\left( \begin{array}{lll} h_1(n)\\ h_2(n)\\ h_1(n-1)\\ h_2(n-1)\end{array}\right) =\left( \begin{array}{llll}2 & 2 & 0 & 0\cr 0 & 2 & 1 & 1\cr
	1& 0 & 0 & 0\cr 0 & 1 & 0 & 0\cr \end{array}\right)  \left( \begin{array}{lll} h_1(n-1)\cr h_2(n-1)\cr h_1(n-2)\cr h_2(n-2)\end{array}\right) $$
	for any $n\ge 3$. The eigenvalues of the $4\times 4$ matrix of the above recurrence are $3,1,0$ and $0$, so $h_1(n)$ and $h_2(n)$ (and also $h(n)$) are $\Theta (3^n)$.
\end{proof}

\begin{lemma}\label{lemm:fibo}
	Let $\Sigma'_n$ be the set of sequences of length $n$ on the alphabet $\{ 1,2\} $ that contain neither $111$ nor $222$ as subsequences. 
	Then $|\Sigma'_n|=\Theta (\Phi ^n)$, with $\Phi ={{1+\sqrt{5}}\over 2} \approx 1.618$.
\end{lemma}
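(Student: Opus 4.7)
The plan is to set up a linear recurrence for $|\Sigma'_n|$ and read off its asymptotic growth, by the same kind of block-decomposition used in the previous lemma. The cleanest route is a direct bijection. Any sequence in $\Sigma'_n$ decomposes uniquely into maximal runs of equal symbols, and the prohibitions on $111$ and $222$ are exactly the statement that every run has length $1$ or $2$. Since consecutive runs use opposite symbols, a sequence in $\Sigma'_n$ is uniquely specified by (a) the symbol used by its first run (two choices) and (b) an ordered sequence of positive integers in $\{1,2\}$ that sums to $n$, i.e.\ a composition of $n$ into parts of size $1$ and $2$.

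Once that bijection is in place, the rest is textbook. Let $c(n)$ denote the number of compositions of $n$ with parts in $\{1,2\}$; conditioning on whether the first part is $1$ or $2$ gives $c(n)=c(n-1)+c(n-2)$ with $c(1)=1$, $c(2)=2$, so $c(n)=F_{n+1}$, the $(n+1)$-th Fibonacci number. Binet's formula then yields $F_{n+1}=\Theta(\Phi^n)$, and hence $|\Sigma'_n|=2\,c(n)=2F_{n+1}=\Theta(\Phi^n)$.

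If for some reason one preferred to mimic the eigenvalue-of-a-transfer-matrix argument used in Lemma~\ref{lemm:3n}, one could instead partition $\Sigma'_n$ according to the last one or two symbols. Writing $s_i(n)$ for the number of sequences of $\Sigma'_n$ ending in exactly $i$ consecutive copies of the last symbol ($i\in\{1,2\}$), symmetry and the forbidden patterns give the transitions
\begin{equation*}
s_1(n)=s_1(n-1)+s_2(n-1),\qquad s_2(n)=s_1(n-1),
\end{equation*}
so that $|\Sigma'_n|=2\bigl(s_1(n)+s_2(n)\bigr)$ satisfies the Fibonacci recurrence, whose characteristic polynomial has $\Phi$ as its dominant root. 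Either way the conclusion is the same.

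There is really no substantive obstacle; the only point requiring a moment of care is the verification that the block decomposition is indeed a bijection onto the claimed set (compositions with parts in $\{1,2\}$ times the two choices of starting symbol), which is immediate from the definition of $\Sigma'_n$.
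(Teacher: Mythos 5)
Your proof is correct, and your primary route is genuinely different from the paper's. The paper argues exactly as in your ``alternative'' paragraph: it splits $f(n)=|\Sigma'_n|$ into $f_1(n)$ and $f_2(n)$ according to the last symbol, observes that $f_1(n)=f_2(n-1)+f_2(n-2)$ and $f_2(n)=f_1(n-1)+f_1(n-2)$, sums to get the Fibonacci recurrence $f(n)=f(n-1)+f(n-2)$, and concludes $f(n)=\Theta(\Phi^n)$ --- mirroring the transfer-matrix style of Lemma~\ref{lemm:3n}. Your main argument instead decomposes a sequence into maximal runs, notes that forbidding $111$ and $222$ is precisely the condition that every run has length $1$ or $2$, and hence biject $\Sigma'_n$ with $\{1,2\}\times\{\text{compositions of }n\text{ into parts }1,2\}$, giving the exact value $|\Sigma'_n|=2F_{n+1}$. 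That buys a closed form rather than just the growth rate, and it makes the structural reason for the Fibonacci numbers transparent; the paper's recurrence is more mechanical but generalizes more directly to the four-letter situation of Lemma~\ref{lemm:3n}, which is presumably why the authors chose it. One small wrinkle in your secondary argument: as literally defined, $s_1(n)+s_2(n)$ already equals $|\Sigma'_n|$, so the factor $2$ in $|\Sigma'_n|=2\bigl(s_1(n)+s_2(n)\bigr)$ is only correct under the per-symbol reading of $s_i$ (sequences ending in exactly $i$ copies of a \emph{fixed} symbol), which is evidently what you intend since you invoke symmetry; either reading yields the Fibonacci recurrence, so the conclusion stands.
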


\begin{proof}
	Let $f(n)=|\Sigma'_n|$, and let $f_1(n)$ and $f_2(n)$ be the number of sequences of $\Sigma'_n$ finishing with $1$ and $2$, respectively. Clearly, $f_1(n)=f_2(n-1)+f_2(n-2)$ when $n\ge 3$, because by removing the last symbol of a sequence of $\Sigma'_n$ finishing with $1$, we obtain either a sequence of the set $\Sigma'_{n-1}$ finishing with $2$, or a sequence of $\Sigma'_{n-1}$ finishing with $1$ and whose previous symbol is $2$. 
	In the same way, we have $f_2(n)=f_1(n-1)+f_1(n-2)$. Therefore, $f(n)$ satisfies the well-known Fibonacci recurrence $f(n)=f(n-1)+f(n-2)$. As a consequence, $f(n)=\Theta (\Phi ^n)$ with $\Phi = {{1+\sqrt{5}}\over 2}$.
\end{proof}

\begin{proof}{ (of Theorem~\ref{prop:fullvsno}) }
	Let $S$ be a set of $2n$ points on a semi-circle with endpoints on a horizontal line, alternating between red and blue. We first give an upper bound on the number of configurations realizable by full-crossings rays from~$S$. 
	Note that there cannot be three consecutive points of the same color in any configuration realizable by full-crossing rays from $S$. 
	Otherwise, we can choose three points of the opposite color and obtain the vertices of a convex hexagon, alternating in color. The corresponding rays are a realization with full crossings of $(r,r,r,b,b,b)$, contradicting Lemma~\ref{lem:hex}. On the other hand, by Lemma~\ref{lemm:fibo}, the number of sequences using $2n$ symbols of two colors such that no three consecutive symbols have the same color is $\Theta (\Phi ^{2n})$. Therefore, the number of configurations realizable by full-crossing rays from $S$ is at most $\Theta (\Phi ^{2n}) \approx \Theta (2.618^{n})$.
	 
	Now, let us see that, shooting vertical rays up and down from $S$, there are at least $\Omega({3^n\over n^2})$ feasible configurations. As we are shooting vertical rays, we may assume that the $2n$ points are on a line, alternating in color, denoted $p_1, q_1, p_2, q_2, \ldots p_n, q_n$ from left to right. 
	For every pair $p_i, q_i$ of points, the rays can be shot in four different ways: both rays upwards (type $1$), both rays downwards (type $2$), the red ray upwards and the blue one downwards (type $3$) and the red ray downwards and the blue one upwards (type $4$). 
	Hence, any sequence $\sigma$ of $n$ symbols from $\Lambda=\{1, 2, 3, 4\}$ produces a feasible configuration $C$. 
	Given a sequence $\sigma$, we denote by $u(\sigma)$ the sequence realized by the rays shot upwards from left to right and by $d(\sigma)$ the  sequence realized by the rays shot downwards from left to right.  
	See Figure~\ref{fig:upanddown} for an example. 
	We say that two of these sequences of $n$ symbols, $\sigma$ and $\sigma'$, are \emph{equivalent} if $u(\sigma) = u(\sigma')$ and $d(\sigma) = d(\sigma')$.
	
	\begin{figure}[h!]
		\centering
		\includegraphics[scale=1]{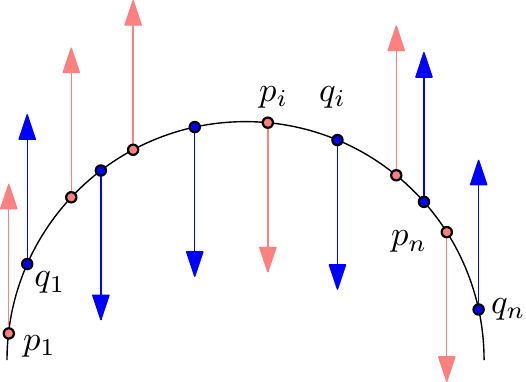}
		\caption{The feasible configuration $(rbrrrbbrbrbb)$ corresponding to sequence $\{133214\}$. The up and down configurations are $(rbrrrbb)$ and $(bbrbr)$, respectively.}
		\label{fig:upanddown}
	\end{figure}
	
	Let us consider the set $\Sigma_n$ of sequences of length $n$ on the alphabet $\Lambda$ that do not contain $12, 14, 132, 134, 432$ or $434$ as subsequences. 
	Let us prove by induction on $n$ that any two of these sequences $\sigma$ and $\sigma'$ are nonequivalent. 
	When $n=1$ the result is obvious: for sequence $\{1\}$, $u(\sigma) = (r,b)$ and $d(\sigma) = \emptyset$; for sequence $\{2\}$, $u(\sigma) = \emptyset$ and $d(\sigma) = (r,b)$; for sequence $\{3\}$, $u(\sigma) = (r)$ and $d(\sigma) = (b)$; and for sequence $\{4\}$, $u(\sigma) = (b)$ and $d(\sigma) = (r)$. Let us assume that $n>1$ and let $\sigma$ and $\sigma'$ be two of these sequences. We consider several cases depending on the first symbol of $\sigma$ and $\sigma'$.
	
	\begin{enumerate}[i)]
		\item Let $\sigma$ and $\sigma'$ begin with $1$. By removing the first $1$ in both sequences we obtain sequences $\overline{\sigma}$ and $\overline{\sigma}'$ of length $n-1$, and the result follows immediately by induction on them. 
		The same argument holds when $\sigma$ and $\sigma'$ both begin with $2, 3$ or $4$.
		\item Let $\sigma$ begin with $1$ and $\sigma'$ begin with $2$. 
		As $\sigma'$ begins with 2, then the two first elements appearing in $d(\sigma')$ are $r$ and $b$. 
		As $\sigma$ begins with $1$ and the subsequences $12$ and $14$ are forbidden, then either $\sigma$ contains only symbols 1, or it begins with a sequence of symbols $1$ followed by a symbol $3$. 
		This implies that $d(\sigma)$ is empty or it begins with a symbol $b$, so $\sigma$ and $\sigma'$ are nonequivalent.
		\item Let $\sigma$ begin with $1$ and $\sigma'$ begin with $3$. 
		Assuming that $\sigma$ and $\sigma'$ are equivalent, a contradiction is reached. 
		Since $\sigma$ begins with a symbol $1$, then $u(\sigma)$ begins with $r$ and $b$. 
		As~$\sigma'$ begins with a symbol $3$, $u(\sigma')$ begins with $r$. 
		The only way of having $b$ in the second position of $u(\sigma')$ is that~$\sigma'$ begins with $3$ and then a symbol $4$ appears after several symbols $2$ (if any). 
		This implies that $d(\sigma')$ begins with $b$ and $r$. 
		As $\sigma$ begins with $1$, the only way to have $b$ as the first element of $d(\sigma)$ is that, after maybe several symbols $1$, a symbol $3$ appears, because the subsequences $12$ and $14$ are forbidden. 
		So, the beginning of $\sigma$ is $\{11 \ldots  1 3\}$. 
		However, since sequences $132$ and $134$ are forbidden, it is impossible that the second element of $d(\sigma)$ is $r$, contradicting the assumption that $\sigma$ and $\sigma'$ are equivalent. 
		By symmetry, $\sigma$ and~$\sigma'$ are also nonequivalent when $\sigma$ begins with $2$ and $\sigma'$ with $4$.
		\item Let $\sigma$ begin with $1$ and $\sigma'$ begin with $4$. Clearly, both sequences are nonequivalent because the first element of $u(\sigma)$ is $r$ and the first element of $u(\sigma')$ is $b$. A similar reasoning applies when $\sigma$ begins with $2$ and $\sigma'$ with $3$, and when $\sigma$ begins with $3$ and $\sigma'$ with $4$.
	\end{enumerate}
	
	By Lemma~\ref{lemm:3n}, the number of nonequivalent sequences such that subsequences $12, 14, 132,$ $134, 432$ and $434$ are forbidden is $\Omega (3^n)$. We next argue that two nonequivalent sequences can produce the same feasible configuration, but this can happen only a quadratic number of times. 
	By choosing two indices $i<j$ such that $j-i$ is even, a feasible configuration $C=\{c_1, \ldots , c_{2n}\}$ can be divided into two subsequences $C_1=\{c_i, \ldots c_{j-1}\}$ and $C_2 = \{c_{j}, \ldots , c_{2n}, c_1, \ldots , c_{i-1}\}$. 
	By defining $C_1$ (or $C_2$) as $u(\sigma)$ and the inverse of $C_2$ (or $C_1$) as $d(\sigma)$, we obtain the up and down configurations associated to a sequence of $n$ symbols. Since as a feasible configuration can be divided into at most a quadratic number of different ways, at most a quadratic number of sequences can produce the same feasible configuration. 
	As a consequence, the number of feasible configurations is at least $\Omega ({3^n / n^2})$.	
\end{proof}

\section{Final remarks and open questions}

The decision algorithm for points in general position presented in Section~\ref{subsec:deciding} can be improved if the point set is in convex position. 
In this case, it is not hard to see that part $(i)$ of Lemma~\ref{lem:decomp} always holds. 
Therefore, it is not necessary to compute the table $T_\Sigma$ and, hence, the complexity of the algorithm is $O(n^{10})$ time and $O(n^7)$ space. 
However, it remains open whether this runtime, and even the one for the general case, can be improved.

The main open question left to future work is to give a lower bound for the number of configurations realizable by full-crossing rays from any point set. We conjecture that, for point sets in strong general position, this number is always $\Omega(2^{\sqrt{n}})$.
It would be also interesting to prove the tightness of some of the bounds in this paper or improve them otherwise.

\paragraph{Acknowledgments.} This work started at the \textsl{2nd International Workshop on Discrete and Computational Geometry}, held in Barcelona in July 1-5, 2013, and organized by Ferran Hurtado and Rodrigo Silveira.

R.~Fabila-Monroy is partially supported by CONACyT of Mexico grant 253261.
F.~Hurtado, R.~Silveira, J.~Tejel and A.~Garc\'{\i}a 
are partially supported by projects MINECO MTM2012-30951, MTM2015-63791-R, ESF EUROCORES programme EuroGIGA, CRP ComPoSe: MICINN Project EUI-EURC-2011-4306.
F.~Hurtado and R.~Silveira are partially supported by Gen. Cat. DGR2009SGR1040 and DGR2014SGR46.
A.~Garc\'{\i}a and J.~Tejel are partially supported by Gob. Arag. E58 (ESF).
R.~Jaume is supported by ``obra social La Caixa'' and the DAAD. 
P.~P\'{e}rez-Lantero is partially supported by projects CONICYT FONDECYT/Regular 1160543 (Chile), 
and Millennium Nucleus Information and Coordination in Networks ICM/FIC RC130003 (Chile).
M.~Saumell is supported by projects NEXLIZ - CZ.1.07/2.3.00/30.0038, which is co-financed by the European Social Fund and the state budget of the Czech Republic, ESF EuroGIGA project ComPoSe as F.R.S.-FNRS - EUROGIGA NR 13604, and project LO1506 of the Czech Ministry of Education, Youth and Sports.
R.~Silveira was partially funded by MINECO through the Ram{\'o}n y Cajal program.
J.~Urrutia is supported by CONACyT of Mexico grant 178379, and  Programa de Apoyo a Proyectos de Investigaci\'on e Innovaci\'on Tecnol\'ogica, UNAM, grant number IN102117.
We thank Javier Cano, Clemens Huemer, Alberto M\'{a}rquez, and Vera Sacrist\'{a}n for helpful discussions. This project has received funding from the European Union’s Horizon 2020 research and innovation programme under the Marie Skłodowska-Curie grant agreement No 734922.

\small
\bibliographystyle{abbrv}

\begin{thebibliography}{99}


\bibitem{AHHKV07}
O. Aichholzer, T. Hackl, C. Huemer, F. Hurtado, H. Krasser, and B. Vogtenhuber. On the number of plane geometric graphs. \emph{Graphs and Combinatorics} 23(1) (2007), 67--84.

\bibitem{AU907}
J. Akiyama and J. Urrutia. Simple alternating path problem.
\emph{Discrete Mathematics} 84(1) (1990), 101--103.

\bibitem{CCL13}
S. Cabello, J. Cardinal, and S. Langerman. The Clique Problem in Ray Intersection Graphs.
\emph{Discrete \& Computational Geometry} 50(3) (2013), 771--783.

\bibitem{DSST13}
A. Dumitrescu, A. Schulz, A. Sheffer, and C. D. T\'{o}th. Bounds on the maximum multiplicity of some common geometric graphs. \emph{SIAM Journal on Discrete Mathematics} 27(2) (2013), 802--826.

\bibitem{ES35}
 P. Erd\"os, G. Szekeres. A combinatorial problem in geometry. \emph{Compositio Mathematica} 2 (1935),  463--470.

\bibitem{FMM13}
S. Felsner, G. B. Mertzios, and I. Musta. On the Recognition of Four-Directional Orthogonal Ray Graphs.
\emph{Proc. 38th International Symposium on
Mathematical Foundations of Computer Science}, volume 8087 of Lecture Notes in Computer Science, pp. 373--384. Springer, 2013.

\bibitem{GHTU07} A. Garc\'{\i}a, F. Hurtado, J. Tejel, and J. Urrutia. Configurations of non-crossing rays and related problems. \emph{Discrete \& Computational Geometry} 55(3) (2016), 522--549.

\bibitem{GOR04} J. E. Goodman and J. O'Rourke, editors. Handbook of discrete and computational
geometry. CRC Press, Inc., Boca Raton, FL, USA, second edition, 2004.

\bibitem{HSSTW11}
    M. Hoffmann, M. Sharir, A. Sheffer, C. D. T\'{o}th, and E. Welzl.  Counting Plane Graphs: Flippability and its Applications. \emph{Proc. 12th International Symposium on Algorithms and Data Structures}, volume 6844 of Lecture Notes in Computer Science, pp. 524--535. Springer, 2011.

\bibitem{Juk11}
S. Jukna. \emph{Extremal Combinatorics With Applications in Computer Science}, Springer-Verlag (Second Edition), Series \emph{Texts in Theoretical Computer Science} XXIV, page 71 exercise 4.12.

\bibitem{KK03}
A. Kaneko, and M. Kano.
Discrete geometry on red and blue points in the plane -- a survey.
Discrete and Computational Geometry, The Goodman-Pollack Festschrift.
Springer, Algorithms and Combinatorics series, Volume 25, 2003, pp. 551--570.

\bibitem{KYZ13} D. Kirkpatrick, B. Yang and S. Zilles. On the barrier-resilience of arrangements of ray-sensors.\emph{ Proc. of the XV Spanish Meeting on Computational
Geometry}, Seville, Spain, June 2013, pp. 35--38.

\bibitem{M1872}
C. Moreau. Sur les permutations circulaires distinctes. \emph{Nouvelles annales de math\'{e}matiques, journal des candidats aux \'{e}coles polytechnique et normale}, S\'{e}r. 2, tom.~11 (1872), 309--314 

\bibitem{SS09}
    M. Sharir and A. Sheffer. Counting triangulations of planar point sets. \emph{The Electronic Journal of Combinatorics}, 18(1) (2011).

\bibitem{SSW13}
    M. Sharir, A. Sheffer, and E. Welzl. Counting Plane Graphs: Perfect Matchings, Spanning Cycles, and Kasteleyn's Technique. \emph{Journal of Combinatorial Theory, Series A}, 120 (2013), 777--794.

\bibitem{SW06}
M. Sharir and E. Welzl. On the number of crossing-free
matchings, cycles, and partitions. \emph{SIAM Journal on Computing} 36(3) (2006), 695--720.

\bibitem{MTU10}
A. Shrestha, Sa. Tayu, and S. Ueno. On orthogonal ray graphs. \emph{Discrete Applied Mathematics} 158(15) (2010), 1650--1659.

%\bibitem{sloane}
%N. J. A. Sloane. \emph{The On-Line Encyclopedia of Integer Sequences.}
%\url{ http://oeis.org}. Sequence A001519.

\bibitem{vanLint}
J.H. van Lint and R.M. Wilson. A course in Combinatorics. Cambridge University Press. Cambridge, Great Britain, 1992.

\end{thebibliography}

\end{document}